\documentclass[11pt]{article}

\usepackage[margin=1in]{geometry}
\usepackage{amsmath,amssymb,amsthm,mathtools}
\usepackage{setspace}
\usepackage{natbib}
\usepackage{graphicx}
\usepackage{hyperref}
\usepackage{booktabs}
\usepackage{caption}
\usepackage{enumitem}
\usepackage{tikz}
\usetikzlibrary{arrows.meta,positioning,calc}
\usepackage{pgfplots}
\pgfplotsset{compat=1.18}
\usepgfplotslibrary{fillbetween}

\newtheorem{proposition}{Proposition}
\newtheorem{lemma}{Lemma}
\newtheorem{corollary}{Corollary}

\newtheorem{definition}{Definition}

\newcommand{\E}{\mathbb{E}}
\newcommand{\R}{\mathbb{R}}
\newcommand{\thetabar}{\bar{\theta}}
\newcommand{\ubar}[1]{\underline{#1}}
\newcommand{\dd}{\,\mathrm{d}}

\usepackage{xcolor}

\begin{document}
% ======================================================================

% ---- Title page (unnumbered) ----
\thispagestyle{empty}
\begin{center}
\vspace*{1in}

{\LARGE\bfseries The Screening Cost of Liquidity}

\bigskip\bigskip

{\large Rui Sun\footnote{Haas School of Business, University of California, Berkeley. Email: ruisun233@berkeley.edu.  I am very grateful to Panos Patatoukas, Xiaojun Zhang, Sunil Dutta, Yaniv Konchitchki, and Omri Even-Tov for their guidance and encouragement.  I also thank Federico Echenique, Haluk Ergin, Benjamin Hermalin, Yuichiro Kamada, Shachar Kariv, Giacomo Lanzani, Chris Shannon, and Quitz\'{e} Valenzuela-Stookey for valuable feedback.  All errors are my own.}}

\bigskip

\textit{This version: April 2026}

\vfill

\parbox{5.5in}{\small\textbf{Abstract.}
A principal with cheap capital optimally forces her counterparty to borrow at above-market rates.  The reason: the form of finance is a screening device.  Advances provide liquidity but pool types; contingent transfers separate types but, because they are not pledgeable, impose financing costs.  The optimal contract preserves outside-finance exposure to maintain screening power.  Two sufficient statistics pin down the optimal advance share.  With complementary counterparties, a uniform subsidy that cheapens finance for every relationship can reduce the value of every one.  This explains the coexistence of early payment and contingent compensation in trade credit, venture capital, and internal capital markets.

\bigskip
\textbf{JEL Classification:} D82, D86, G31, G32.\\
\textbf{Keywords:} Financial contracting, screening, liquidity provision, security design, relationship finance.}
\end{center}

\vfill

\newpage

% ---- Page numbering starts here ----
\setcounter{page}{1}

% ======================================================================
% SECTION 1: INTRODUCTION
% ======================================================================
\section{Introduction}\label{sec:intro}

Whether a firm with access to cheap capital should directly finance its constrained counterparties is a central question in financial contracting.  Large manufacturers routinely contract with smaller suppliers that face tight credit; venture capitalists with deep pockets could write a single check rather than staging funds against milestones; corporate headquarters could fully pre-fund each division's budget rather than tying allocations to performance.  In each case, a natural benchmark suggests that the party with cheaper funds should simply advance the full working-capital requirement, eliminate the counterparty's dependence on expensive outside credit, and capture the resulting surplus.  Yet mixed payment structures that combine early advances with deferred, performance-contingent transfers are ubiquitous.  The benchmark of full pre-financing, like the frictionless ideal it echoes, is logically appealing but empirically counterfactual.\footnote{The parallel with Modigliani and Miller is deliberate: just as the irrelevance result identifies frictions that make capital structure matter, the full-pre-financing benchmark identifies frictions that make the form of inter-firm finance matter.}

In this paper, I study the optimal financial contract between a principal and a counterparty in an environment in which financial contracting serves two distinct functions.  The principal provides liquidity through an advance payment that reduces the counterparty's need to borrow externally at costly terms.  The principal also uses contingent transfers to screen the counterparty's privately known type, separating high-quality counterparties from low-quality ones.

I model the tension between these two roles by assuming that the counterparty's type is private information, that implementing the relationship requires upfront working capital, and that the contingent transfer is not pledgeable to outside lenders.  Non-pledgeability means that only the advance relaxes the borrowing constraint; the contingent leg arrives too late and in too uncertain a form to serve as collateral.  Combined with limited liability, this creates a tight link: more advance means less room for contingency.  The form of finance is therefore a determinant of principal value.  Three main results emerge.

First, the optimal contract equates the marginal liquidity benefit of the advance with the marginal information cost of the resulting contract flattening.  The principal deliberately leaves the counterparty partially exposed to outside finance, even though the principal has cheaper funds.  Full pre-financing would eliminate borrowing costs but would also destroy the information content of the contingent payment, leaving no instrument that discriminates across types.  Retaining outside-finance exposure preserves the contingent leg's ability to tie payments to a verifiable signal, and the resulting type separation is worth the financing cost it entails.

Second, in a two-counterparty extension with complementary relationships, cheaper outside finance for one counterparty can lower principal value.  Complementarity couples the incentive constraints through the joint selection rule: each counterparty's information rent depends on the probability that the other relationship is implemented.  Cheaper finance for one relationship shifts that contract toward more contingency, which expands the set of implemented types for the other counterparty.  The newly included types have high information rents, and the resulting rent increase can exceed the direct financing saving.  The spillover operates through information, not through prices or budgets: the two relationships are financially independent but informationally coupled.  A government subsidy that reduces financing costs for all counterparties simultaneously can reduce the value of every relationship in the firm's portfolio.

Third, the comparative statics have sharp economic content.  Cash intensity rises with outside-credit tightness and falls with signal informativeness.  Tighter financial conditions do not merely raise financing costs; they degrade the quality of screening by forcing the contract toward flatter payment schedules.  Financial conditions affect the real economy through this screening channel, not just through the cost of capital.

The tradeoff requires three ingredients: costly outside finance, non-pledgeability with limited liability, and an informative signal.  Removing any one eliminates the mechanism.  Frictionless outside finance makes the advance payoff-irrelevant.  Pledgeable contingent transfers make the two instruments perfect substitutes.  Unlimited liability lets the principal fully pre-finance and screen through signed repayments.  Uninformative signals make contingent payments useless.

The theory identifies two sufficient statistics for the optimal financial structure: the external-financing cost and the signal informativeness.  A principal who knows these two objects can determine the optimal advance without observing the counterparty's type.  In a parameterized model, the optimal advance ranges from 17\% to 54\% of working capital, bracketing the advance shares in trade-credit contracts \citep{petersen1997trade} and VC staging \citep{kaplan2003financial}.  The screening-contagion effect reduces principal value by up to 15\% at moderate complementarity.

The foundational models of \citet{bolton1990theory} and \citet{holmstrom1997financial} study costly external finance under moral hazard.  The friction here is adverse selection, and the instrument set includes both unconditional and contingent transfers.  The adverse-selection approach to corporate finance originates with \citet{myers1984capital}, who show that information asymmetry between insiders and outsiders distorts financing decisions, and with \citet{stiglitz1981credit}, who show that it can lead to credit rationing.  In the security-design literature, \citet{innes1990limited} characterizes optimal debt contracts under moral hazard with limited liability, \citet{nachman1994optimal} study security design under adverse selection, and \citet{demarzo2005bidding} characterize optimal securities in auction settings.  \citet{demarzo2005optimal} extends the analysis to optimal long-term financing contracts.  I endogenize the mix of an unconditional and a contingent instrument under a screening motive, a different margin from the shape of a single security.  The screening approach connects to the classical analysis of \citet{mussa1978monopoly} and \citet{laffont1986using}, who study optimal pricing and procurement under adverse selection; the novelty here is the interaction between the screening problem and the financing problem through the non-pledgeability constraint.

\citet{burkart2004situ} model trade credit as a response to limited pledgeability: in-kind finance solves a diversion problem.  \citet{cunat2007trade} shows that trade credit serves as insurance against liquidity shocks, and \citet{giannetti2011what} provide evidence that what firms sell is what they lend, linking the form of trade credit to the nature of the traded good.  \citet{klapper2012trade} document the determinants of trade-credit use across countries.  In my framework, the mix of early cash and deferred contingent payments solves a screening problem.  The non-pledgeability friction is shared with \citet{burkart2004situ}, but the economic role differs: theirs prevents diversion; mine preserves information.  The empirical predictions are complementary: both models predict that the form of inter-firm finance responds to the counterparty's credit conditions, but through different channels.

The screening-contagion result connects to the internal capital markets literature.  \citet{stein1997internal} and \citet{scharfstein1998dark} study how constraints propagate through capital reallocation, and \citet{gertner1994internal} compare internal and external capital markets.  \citet{inderst2003laux} study the interaction between headquarters' capital allocation and divisional incentives.  The mechanism here is different: the relationships share no budget but are informationally coupled through complementarity.  The result also relates to the literature on contagion in production networks \citep{acemoglu2012network}, adding an informational propagation channel to the standard price-based mechanisms.

The analytical approach builds on \citet{myerson1981optimal}, adapted to multiple instruments with costly external finance.  \citet{demarzo1999liquidity} show that the form of a security affects its liquidity value; \citet{demarzo2005bidding} show that the form of payment affects screening.  The sufficient-statistics characterization parallels the approach in optimal taxation \citep{chetty2009sufficient} and connects to the broader literature on mechanism design with financial constraints \citep{che1998standard, biais2007optimal}.  \citet{li2021mechanism} studies optimal mechanisms with budget-constrained agents and costly verification, showing that financial constraints fundamentally alter the structure of optimal allocations; the non-pledgeability constraint in this paper plays an analogous role in shaping the optimal contract.  I extend this logic to a setting where the principal chooses a mix of advance and contingent transfer, and the mix itself is the screening device.

The remainder of the paper is organized as follows.  Section~\ref{sec:model} introduces the model.  Section~\ref{sec:bilateral} characterizes the optimal bilateral contract.  Section~\ref{sec:extension} develops the screening-contagion result.  Section~\ref{sec:quant} explores quantitative implications.  Section~\ref{sec:discussion} discusses extensions.  Section~\ref{sec:conclusion} concludes.

\section{Model}\label{sec:model}

The baseline environment of this paper is a standard principal--agent screening model in the tradition of \citet{myerson1981optimal}, extended to allow for costly external finance and multiple payment instruments.  The model combines elements of the financial-contracting literature \citep{bolton1990theory, holmstrom1997financial} with the security-design insight that the form of a financial instrument affects information \citep{demarzo1999liquidity, demarzo2005bidding}.

\medskip
 There are two dates, $t = 0, 1, 2$, and two risk-neutral agents: a principal $P$ and a counterparty $C$.  The counterparty has a privately known type $\theta$, drawn from $[\ubar\theta, \thetabar]$ according to a continuous distribution $F$ with density $f > 0$ on the entire support.  I assume throughout that $F$ satisfies the standard regularity condition: the virtual type $\theta - \frac{1-F(\theta)}{f(\theta)}$ is strictly increasing.\footnote{This regularity condition is satisfied by many common distributions, including the uniform, exponential, and power distributions.  See \citet{myerson1981optimal}.}  Nothing prevents the counterparty from having high or low types; the type captures any dimension of private information that affects the value of the relationship, such as production cost, project quality, or managerial ability.

\medskip
 Implementing the relationship requires date-0 working capital $K > 0$ and generates gross surplus $V(\theta)$ at date~2, with $V' > 0$ and $V'' \leq 0$.  The counterparty bears private cost $c(\theta)$ with $c' > 0$.  Without loss of generality, I normalize so that higher types are more valuable: $V(\theta) - c(\theta)$ is increasing.

The counterparty does not have internal funds and must finance the working capital $K$ from two possible sources.  The principal can provide an advance $a \in [0, K]$ at date~0.  The residual $\ell = K - a$ must be financed from outside lenders at a cost $\Phi(\ell;\, R)$, where $R \geq 0$ indexes the tightness of the counterparty's external credit conditions.  I assume that $\Phi$ satisfies the natural properties: $\Phi(0; R) = 0$ (zero borrowing is costless), $\Phi_\ell > 0$ for $\ell > 0$ (borrowing is costly at the margin), $\Phi_{\ell\ell} \geq 0$ (borrowing cost is convex), and $\Phi_R > 0$ for $\ell > 0$ (tighter markets raise costs).  The quadratic specification $\Phi(\ell; R) = \frac{R}{2}\ell^2$ satisfies all these conditions and is used for the parameterized results.\footnote{The convexity assumption $\Phi_{\ell\ell} \geq 0$ ensures that the principal's problem is well-behaved (the value function is concave in the instruments).  It is satisfied by any financing-cost function that reflects increasing marginal costs of borrowing, which is standard in models of costly external finance.}

\medskip
 After production, a verifiable signal $x \in \mathcal{X}$ is realized.  The signal is drawn from a density $g(x \mid \theta)$ that satisfies the monotone likelihood ratio property (MLRP): higher types are more likely to generate higher signal realizations.  I denote the expected signal by $\mu(\theta) \equiv \E_\theta[x]$; MLRP implies $\mu'(\theta) > 0$.  The signal is informative about type but does not perfectly reveal it: $g(x|\theta) > 0$ for all relevant $(x, \theta)$.  This imperfection matters: if the signal were perfectly revealing, the principal could condition transfers directly on the type, and the screening problem would be trivial.

Two arguments justify the focus on a single signal.  First, in many applications, the verifiable performance measure is naturally one-dimensional (output quantity, quality score, milestone achieved).  Second, as formalized in Proposition~\ref{thm:general}, the main results extend to general monotone securities without requiring the affine specification, so the single-signal structure is without loss of generality for the qualitative results.

\medskip
 The principal offers a contract $(a, T(\cdot))$ at date~0, specifying an advance $a \geq 0$ and a transfer schedule $T: \mathcal{X} \to \R_+$ to be paid at date~2 after observing the signal $x$.  The counterparty accepts or rejects; if it accepts, the principal pays the advance $a$, the counterparty borrows $\ell = K - a$ externally, and production occurs.  At date~2, the signal $x$ is realized and the principal pays $T(x)$.

For the tractable benchmark, I work with the affine transfer class $T(x) = b_0 + b_1 x$ with $b_0, b_1 \geq 0$.  This nests pure advance (full reliance on $a$, with $b_0 = b_1 = 0$), flat contingent transfer ($b_1 = 0$), and contingent-only ($a = 0$, $b_1 > 0$).  The nonnegativity constraint $T(x) \geq 0$ captures limited liability: the counterparty cannot make net repayments to the principal.

\medskip
\noindent\textbf{Assumption.} [NP] \textbf{(Non-pledgeability)}\quad \emph{The contingent transfer $T(x)$ cannot be pledged to outside lenders at date~0.  Only the advance $a$ relaxes the counterparty's date-0 borrowing requirement.}

\medskip

Assumption [NP] is the key restriction that distinguishes this model from a standard screening problem.  Combined with limited liability, it prevents the principal from fully pre-financing the counterparty and then screening through a state-contingent repayment schedule.  Without [NP], the counterparty could borrow against the present value of $\E_\theta[T(x)]$, making the advance and the contingent transfer perfect substitutes in the financing dimension.  The tradeoff between liquidity and screening would disappear.

Non-pledgeability is natural in three settings.  First, if the signal is relationship-specific, observable only within the principal--counterparty pair, then outside lenders cannot contract on it and therefore cannot price $T(x)$.  Second, if the contingent payment is renegotiable, lenders discount its value as collateral because the principal and counterparty may agree to reduce it ex post.  Third, in many trade-credit and supplier-finance markets, assignment of buyer-specific receivables is legally possible but practically limited by consent requirements and offset rights.  Assumption [NP] captures this imperfect assignability in reduced form.

I explicitly state when Assumption [NP] is used in the paper.  All results that do not invoke [NP] hold in the standard screening environment with pledgeable transfers.

\medskip
 A type-$\theta$ counterparty who accepts the contract $(a, b_0, b_1)$ receives expected payoff
\begin{equation}\label{eq:U}
U(\theta) \;=\; a + b_0 + b_1\,\mu(\theta) - c(\theta) - \Phi(K - a;\, R).
\end{equation}
The principal's payoff from a type-$\theta$ counterparty is
\begin{equation}\label{eq:Pi}
\Pi_P(\theta) \;=\; V(\theta) - a - b_0 - b_1\,\mu(\theta).
\end{equation}
The counterparty participates if and only if $U(\theta) \geq 0$.

The economic content of Assumption [NP] is visible directly in equation~\eqref{eq:U}.  The advance $a$ appears in both the transfer ($+a$) and the financing cost ($-\Phi(K-a; R)$): it simultaneously compensates and relaxes borrowing.  The contingent component $b_0 + b_1\mu(\theta)$ appears only in the transfer: it compensates but does not relax borrowing.  This asymmetry is what gives the two legs different economic roles and is the source of the liquidity--screening tradeoff.

Figure~\ref{fig:payoff} illustrates the counterparty's expected payoff $U(\theta)$ under three contract structures, holding total expected payment constant.  Under a pure advance contract ($a = K$, $b_1 = 0$), the payoff is flat in the type: there is no screening, and all types receive the same net transfer.  Under a pure contingent contract ($a = 0$, $b_1 > 0$), the payoff is steeply increasing; high types earn substantially more, providing strong separation, but the financing cost $\Phi(K; R)$ is maximal because no advance is provided.  The optimal contract mixes the two instruments, producing an intermediate slope that balances screening against financing cost.

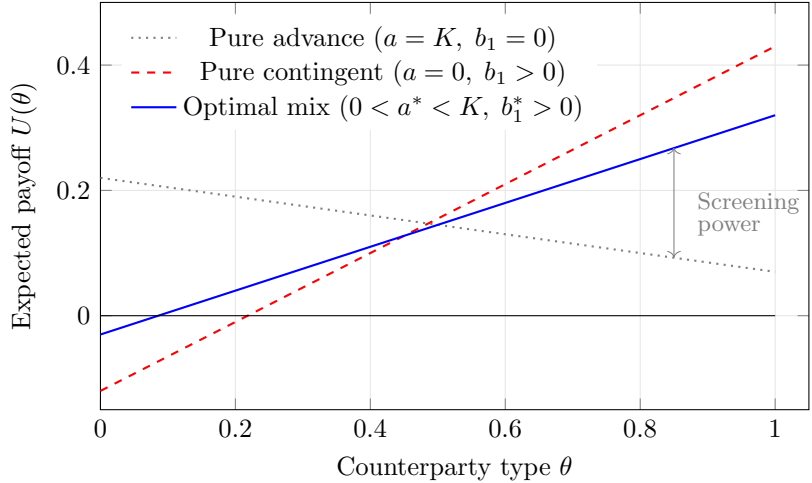
\begin{figure}[t]
\centering
\begin{tikzpicture}
\begin{axis}[
    width=11cm, height=7cm,
    xlabel={Counterparty type $\theta$},
    ylabel={Expected payoff $U(\theta)$},
    xmin=0, xmax=1.05,
    ymin=-0.15, ymax=0.5,
    ymajorgrids=true,
    xmajorgrids=true,
    grid style={gray!20},
    every axis plot/.append style={thick},
    tick label style={font=\small},
    label style={font=\small},
    legend pos=north west,
    legend style={font=\small, draw=none},
]

% Pure advance: decreasing payoff (no screening, U = K - c(theta))
\addplot[gray, dotted, domain=0:1, samples=50]
    {0.22 - 0.15*x};
\addlegendentry{Pure advance ($a = K,\; b_1 = 0$)}

% Pure contingent: steep but high financing cost
\addplot[red, dashed, domain=0:1, samples=50]
    {-0.12 + 0.55*x};
\addlegendentry{Pure contingent ($a = 0,\; b_1 > 0$)}

% Optimal mix: intermediate slope
\addplot[blue, domain=0:1, samples=50]
    {-0.03 + 0.35*x};
\addlegendentry{Optimal mix ($0 < a^* < K,\; b_1^* > 0$)}

% Zero line
\addplot[black, thin, domain=0:1] {0};

% Annotations
\draw[<->, gray, thin] (axis cs:0.85,0.093) -- (axis cs:0.85,0.268);
\node[font=\footnotesize, gray, anchor=west] at (axis cs:0.87,0.18) {Screening};
\node[font=\footnotesize, gray, anchor=west] at (axis cs:0.87,0.14) {power};

\end{axis}
\end{tikzpicture}
\caption{\textbf{Counterparty payoff under alternative contract structures.}  The figure illustrates $U(\theta)$ for three contracts.  Under the pure advance (dotted), the principal fully pre-finances the counterparty ($a = K$, $b_1 = 0$); the payoff $U(\theta) = K - c(\theta)$ is decreasing because higher types have higher costs, but the contract cannot distinguish types.  Under the pure contingent contract (dashed), the payoff is steeply increasing because high types earn large signal-contingent payments, but the financing cost $\Phi(K; R)$ is maximal.  The optimal contract (solid) balances the two: the slope $b_1^*\mu'(\theta) - c'(\theta)$ is intermediate, providing screening power while partially relieving the financing burden.}
\label{fig:payoff}
\end{figure}

\medskip
 By the revelation principle, I work with a direct mechanism in which the counterparty reports $\hat\theta$ and the principal implements an allocation rule $q(\hat\theta) \in \{0,1\}$ and contract terms $(a(\hat\theta), b_0(\hat\theta), b_1(\hat\theta))$.

\begin{definition}[Implementable contract]
A contract $(q, a, b_0, b_1)$ is implementable if it satisfies incentive compatibility and individual rationality.
\end{definition}

The principal maximizes expected profit over the set of implementable contracts:
\begin{equation}\label{eq:program}
\max_{\substack{q(\cdot),\, a(\cdot),\\ b_0(\cdot),\, b_1(\cdot)}} \;\; \int_{\ubar\theta}^{\thetabar} q(\theta)\,\Pi_P(\theta)\; f(\theta)\dd\theta
\end{equation}
subject to incentive compatibility, individual rationality, and limited liability ($b_0 \geq 0$, $b_1 \geq 0$, $a \in [0,K]$).  The characterization of this problem, including the derivation of implementability conditions, the virtual surplus, and the optimal contract, is the subject of Section~\ref{sec:bilateral}.

\medskip
\noindent Two-counterparty extension.\label{sec:model_ext}\quad For the results in Section~\ref{sec:extension}, I extend the model to two counterparties $i \in \{1, 2\}$ with independently drawn types $\theta_i \sim F_i$ on $[\ubar\theta_i, \thetabar_i]$.  Each requires working capital $K_i$, faces external-financing cost $\Phi_i(\ell_i; R_i)$, and receives contract $(a_i, b_{0,i}, b_{1,i})$.  The principal's payoff is
\begin{equation}\label{eq:Pi_ext}
\Pi_P = \sum_{i=1}^{2} q_i\bigl[V_i(\theta_i) - a_i - b_{0,i} - b_{1,i}\mu_i(\theta_i)\bigr] + \delta\, q_1 q_2, \qquad \delta > 0.
\end{equation}
The term $\delta\, q_1 q_2$ captures complementarity: the principal's payoff from implementing both relationships exceeds the sum of implementing each alone.  This arises when one counterparty's output is an input to the other, or when the principal's downstream technology has super-modular returns.  The contracts are designed jointly but offered separately.  The coupling through the joint selection rule is what generates the screening spillover in Section~\ref{sec:extension}.

% ======================================================================
% SECTION 3: OPTIMAL BILATERAL CONTRACT
% ======================================================================
\section{Optimal Bilateral Contract}\label{sec:bilateral}

All proofs are in the Appendix.

\medskip
 Consider a type-$\theta$ counterparty who reports $\hat\theta$ under the direct mechanism.  Its payoff from misreporting is
\begin{equation}\label{eq:Umis}
U(\hat\theta, \theta) \;=\; a(\hat\theta) + b_0(\hat\theta) + b_1(\hat\theta)\,\mu(\theta) - c(\theta) - \Phi(K - a(\hat\theta);\, R).
\end{equation}
The counterparty's true type $\theta$ enters equation~\eqref{eq:Umis} only through $\mu(\theta)$ and $c(\theta)$.  All other terms depend on the reported type $\hat\theta$.  This separability is what makes the problem tractable: it means that the counterparty's incentive to misreport depends only on how the contingent transfer $b_1$ interacts with the type through the signal, not on how the advance $a$ interacts with the type through financing.

Truthful reporting requires $\hat\theta = \theta$ to be optimal for all types.  At the truthful report, the standard envelope argument \citep{milgrom2002envelope} gives the local incentive-compatibility condition:
\begin{equation}\label{eq:envelope}
\frac{dU}{d\theta} \;=\; b_1(\theta)\,\mu'(\theta) \;-\; c'(\theta).
\end{equation}
Equation~\eqref{eq:envelope} reveals the central separation in the model.  The advance $a$ does not appear in the local incentive constraint: it affects the level of the counterparty's payoff (through the financing cost $\Phi$) but not its slope with respect to type.  Only the contingent slope $b_1$ governs how fast the counterparty's payoff rises with type.  Economically, the advance is pure liquidity, helping all types equally, while the contingent leg is the screening instrument.  This separation is the source of the tradeoff: the principal cannot simultaneously maximize liquidity and maximize screening, because the two objectives load on different instruments that compete for the same budget.

Global incentive compatibility further requires that the contingent slope $b_1(\theta)$ is non-decreasing in the type.  Under the regularity condition on $F$, this is equivalent to requiring that the virtual surplus is non-decreasing in $\theta$.  The formal necessity and sufficiency of these conditions is established in the Appendix using the standard two-inequality technique \citep{myerson1981optimal}.

\medskip
 A simplifying result reduces the dimension of the problem.

\begin{lemma}[Optimality of zero intercept]\label{lem:b0zero}
At the optimum, $b_0^* = 0$.  The entire unconditional component of the contract is delivered as the advance $a^*$, and the date-2 transfer loads entirely on the signal: $T^*(x) = b_1^*\, x$.
\end{lemma}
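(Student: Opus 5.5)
A perturbation argument: starting from any implementable contract with $b_0(\theta)>0$ for some type, I will shift the unconditional date-2 payment into the advance. I then check that this weakly raises the principal's objective \eqref{eq:program} while preserving incentive compatibility, individual rationality, and limited liability.

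\textbf{Step 1: the perturbation.} Fix an implementable contract $(q,a,b_0,b_1)$ and a type $\theta$ with $q(\theta)=1$ and $b_0(\theta)>0$. Let $\varepsilon=\min\{b_0(\theta),\,K-a(\theta)\}$. If $a(\theta)=K$, then $\varepsilon=0$ and this step does nothing; that corner is handled in Step 3. Define
\[
\tilde a(\theta)=a(\theta)+\varepsilon,\qquad \tilde b_0(\theta)=b_0(\theta)-\varepsilon,
\]
and leave $b_1$ and $q$ unchanged. The total unconditional payment $a+b_0$ is unchanged. By \eqref{eq:Pi}, the principal's payoff from type $\theta$ is therefore unchanged.

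By \eqref{eq:U}, the counterparty's payoff rises by $\Phi(K-a;R)-\Phi(K-a-\varepsilon;R)\ge 0$. The inequality is strict whenever $\varepsilon>0$, because $\Phi_\ell>0$ on $\ell>0$.

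\textbf{Step 2: recovering the surplus.} The payoff gain from Step 1 depends on the report only through $\tilde a(\hat\theta)$. To keep incentives unchanged, I lower $\tilde b_0(\theta)$ by the exact amount
\[
\Delta(\theta)=\Phi(K-a(\theta);R)-\Phi(K-\tilde a(\theta);R).
\]
This restores the misreport payoff $U(\hat\theta,\theta')$ in \eqref{eq:Umis} to its original value for every true type $\theta'$ and every report $\hat\theta$. The reason is the separability noted after \eqref{eq:Umis}: the report enters only through the level term $a+b_0-\Phi(K-a;R)$, and that term is now held fixed.

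Hence incentive compatibility and individual rationality are preserved exactly. The principal saves $\Delta(\theta)\ge 0$, so profit weakly rises, and strictly whenever $\varepsilon>0$.

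Limited liability requires $\tilde b_0\ge \Delta(\theta)$, which is not automatic. I would choose $\varepsilon$ as the largest value with $\Delta(\theta)\le b_0(\theta)-\varepsilon$, which exists by continuity. Each such move strictly improves profit. So at an optimum no type with $b_0>0$ can have $a<K$ with room to move, and iterating the argument pushes $b_0$ down to zero.

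\textbf{Step 3: the corner $a=K$.} This is the main obstacle. If $a(\theta)=K$ and $b_0(\theta)>0$, the advance cannot absorb the intercept, and shifting mass between $b_0$ and $a$ gains nothing. I would first try to show that at an optimum the intercept is then pure rent. Lowering $b_0(\theta)$ by a constant for every type in a pooled set lowers all their payoffs equally. This preserves the envelope condition \eqref{eq:envelope}, and individual rationality of the lowest implemented type $\theta^*$ should then force $U(\theta^*)=0$.

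The care needed is that a report-dependent reduction must not create deviations. Under the regularity condition, the standard Myerson argument shows that optimal rents are pinned down by $U(\theta^*)=0$ plus the envelope integral. Any extra constant level is then pure waste, so $b_0$ can be lowered to zero (or until IR binds) without harm.

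If IR binds at $a=K$ with $b_0>0$ still needed, I would argue instead that this requires $K+b_0\le c(\theta^*)$. In that case the same pay level can be delivered by raising $b_1$ slightly and lowering $b_0$, exploiting $\mu>0$ on the relevant signal support. I expect verifying that this last substitution preserves monotonicity of $b_1$ to be the delicate step.
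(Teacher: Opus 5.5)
Your Steps 1--2 are the paper's argument. The paper moves all of $b_0$ into the advance, $a'=a+b_0$, notes that $\Phi(K-a;R)$ falls while $b_1$ (and hence the slope of $U$) is unchanged, and claws back the uniform payoff gain through the participation constraint. Your level-preserving version, which holds $a+b_0-\Phi(K-a;R)$ fixed report by report, is a cleaner way to do the clawback. It leaves every misreport payoff in \eqref{eq:Umis} untouched, so it also works for menus. You do need to apply it simultaneously on the positive-measure set of implemented reports with $b_0>0$ and $a<K$, because changing a single report does not move the integral in \eqref{eq:program}.

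The genuine gap is Step 3. Lowering $b_0$ as pure rent only works until IR binds, and your fallback for the binding case does not improve profit. Replacing $\varepsilon$ of intercept by $\varepsilon/\mu(\theta^*)$ of slope keeps $U(\theta^*)$ fixed. But it raises the payment to every $\theta>\theta^*$ by $\varepsilon[\mu(\theta)/\mu(\theta^*)-1]>0$, because $\mu$ is strictly increasing. So the substitution raises rents rather than being neutral. In a menu it also changes mimicking payoffs differently across true types, so IC is not preserved for free. The delicate point is the sign of the profit change, not the monotonicity of $b_1$.

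In fact no argument can close Step 3, because the lemma is false at that corner. Take the following example:
\begin{itemize}
\item $\theta\sim U[0,1]$, $K=1$, $c(\theta)=2+\theta/10$;
\item $\mu(\theta)=1+\theta$, e.g.\ an exponential signal with that mean, which satisfies MLRP and $x\ge 0$;
\item any $V$ with $V-c$ positive and increasing.
\end{itemize}
The contract $a=1$, $b_0=0.9$, $b_1=0.1$ gives $U\equiv 0$ with zero financing cost, so it attains the first-best bound $\int (V-c)f\,d\theta$. Any mechanism with $b_0\equiv 0$ that attains this bound would need $a(\theta)=K$ and $U(\theta)=0$ for almost every type. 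That forces $b_1(\theta)=(c(\theta)-K)/\mu(\theta)=(1+\theta/10)/(1+\theta)$, which is strictly decreasing and violates the monotonicity implied by \eqref{eq:IC_subtract}. So every optimum has $b_0>0$.

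The paper's proof avoids this corner only because it sets $a'=a+b_0$ without checking $a'\le K$. What your Steps 1--2 actually prove is the correct statement: at an optimum, $b_0>0$ only where $a=K$. To conclude $b_0^*=0$ you need a hypothesis that keeps the cap slack, for instance $c\le K$ on the implemented types. This holds in the paper's parameterization, where $c(\theta)=\theta\le K$. Under it, a binding IR at $a=K$ forces $b_0=c(\theta^*)-K-b_1\mu(\theta^*)\le 0$, so the corner cannot arise and Step 3 is unnecessary.
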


Intuitively, unconditional cash is always better delivered early.  A dollar of $b_0$ arriving at date~2 does the same compensating job as a dollar of $a$ arriving at date~0 but misses the opportunity to reduce external borrowing.  Because $b_0$ does not enter the envelope condition~\eqref{eq:envelope}, replacing $b_0$ with an equal increase in $a$ leaves the information rent unchanged while reducing the financing cost by $\Phi_\ell \cdot b_0 > 0$.  Any contract with $b_0 > 0$ is therefore strictly dominated.  This dimensional reduction is standard in mechanism-design problems with multiple instruments.

Lemma~\ref{lem:b0zero} reduces the instrument space from three variables $(a, b_0, b_1)$ to two $(a, b_1)$.  The binding participation constraint at the lowest type provides a further reduction.  With $b_0 = 0$ and $U(\ubar\theta) = 0$ (which is optimal under regularity), the participation constraint becomes
\begin{equation}\label{eq:binding_IR}
a + b_1\,\mu(\ubar\theta) \;=\; c(\ubar\theta) + \Phi(K - a;\, R).
\end{equation}
This equation implicitly defines $a$ as a function of $b_1$:

\begin{lemma}[Instrument-space reduction]\label{lem:reduction}
The binding participation constraint~\eqref{eq:binding_IR} defines a smooth function $a = a(b_1;\, R)$ with
\begin{equation}\label{eq:da_db1}
\frac{da}{db_1} \;=\; \frac{-\mu(\ubar\theta)}{1 + \Phi_\ell(K - a;\, R)} \;<\; 0.
\end{equation}
More contingency requires less advance.
\end{lemma}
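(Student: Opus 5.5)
The plan is an implicit function theorem argument applied to the binding participation constraint~\eqref{eq:binding_IR}. Define
\[
G(a, b_1;\, R) \;\equiv\; a + b_1\,\mu(\ubar\theta) - c(\ubar\theta) - \Phi(K - a;\, R),
\]
so that~\eqref{eq:binding_IR} reads $G(a,b_1;R)=0$. Provided $\Phi$ is continuously differentiable in $\ell$ (which I will take as implicit in the assumptions $\Phi_\ell>0$ and $\Phi_{\ell\ell}\ge 0$), $G$ is $C^1$ in $(a,b_1)$. Its partial derivatives are
\[
G_a = 1 + \Phi_\ell(K-a;\,R), \qquad G_{b_1} = \mu(\ubar\theta).
\]

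\textbf{Step 1: existence and uniqueness of $a(b_1;R)$.} Because $\Phi_\ell \ge 0$ on $[0,K]$, we have $G_a \ge 1 > 0$. Hence $G$ is strictly increasing in $a$, and for each $b_1$ there is at most one solution $a\in[0,K]$. For existence, I will use the intermediate value theorem over the range of $b_1$ for which the constraint can be met with an interior advance:
\[
G(0,b_1;R) \le 0 \le G(K,b_1;R), \qquad\text{that is,}\qquad b_1\mu(\ubar\theta) \le c(\ubar\theta)+\Phi(K;R) \ \text{ and } \ K + b_1\mu(\ubar\theta) \ge c(\ubar\theta).
\]
I will state the lemma on this set of $b_1$ values. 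Outside it, the advance is pinned at a corner and is locally constant.

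\textbf{Step 2: smoothness and the derivative.} Since $G_a \neq 0$, the implicit function theorem gives a $C^1$ function $a(b_1;R)$. Its derivative is
\[
\frac{da}{db_1} = -\frac{G_{b_1}}{G_a} = \frac{-\mu(\ubar\theta)}{1+\Phi_\ell(K-a;R)},
\]
which is exactly~\eqref{eq:da_db1}. For the strict sign we need $\mu(\ubar\theta)>0$. This is the one substantive point that is not automatic. I would justify it by interpreting the signal as a nonnegative performance measure: the transfer $T(x)=b_1x$ from Lemma~\ref{lem:b0zero} must satisfy limited liability $T\ge 0$ for every $b_1\ge 0$, which forces $x\ge 0$ on $\mathcal X$. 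Full support $g>0$ on a nondegenerate set then gives $\mu(\ubar\theta)>0$. I would state this normalization explicitly. If it is not granted, the conclusion weakens to $da/db_1\le 0$, with equality in the degenerate case.

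\textbf{Expected obstacle.} The calculus is routine. The delicate part is the domain: identifying where the solution lies in the interior of $[0,K]$, and the positivity of $\mu(\ubar\theta)$. I would handle both as described above, restricting the statement to interior solutions and making the nonnegativity normalization of $x$ explicit.
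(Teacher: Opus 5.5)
Your proposal is correct and follows the paper's proof: both apply the implicit function theorem to $G(a,b_1)=a+b_1\mu(\ubar\theta)-c(\ubar\theta)-\Phi(K-a;R)$, using $G_a=1+\Phi_\ell>0$ and $G_{b_1}=\mu(\ubar\theta)$. Your care about the domain of $b_1$ and the sign of $\mu(\ubar\theta)$ goes beyond the paper, which simply asserts $\mu(\ubar\theta)>0$. Your caveat is well placed: in the paper's own uniform benchmark $\mu(\ubar\theta)=\mu(0)=0$, so $b_1$ drops out of the binding constraint and the derivative is zero rather than strictly negative.
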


The derivative is negative because increasing the contingent slope $b_1$ raises the expected transfer to the lowest type, which must be offset by reducing the advance to maintain the binding participation constraint.  The magnitude depends on the marginal financing cost $\Phi_\ell$: when outside finance is expensive, a given increase in $b_1$ requires a smaller reduction in $a$, because the financing-cost saving from the advance is large.

With this reduction, the principal's problem becomes a single-variable optimization over $b_1 \geq 0$, with $a(b_1)$ adjusting endogenously:
\begin{equation}\label{eq:reduced_problem}
\max_{b_1 \geq 0}\;\; W(b_1) \;\equiv\; \int_{\ubar\theta}^{\thetabar} q(\theta)\;\psi\bigl(\theta;\, a(b_1),\, b_1\bigr)\; f(\theta)\dd\theta \;-\; a(b_1),
\end{equation}
where $\psi$ is the virtual surplus, derived next.

\medskip
 The principal's expected profit, after substituting the counterparty's payoff and using $U(\ubar\theta) = 0$, can be decomposed using the standard integration-by-parts technique \citep{myerson1981optimal}.  Expected information rents are
\begin{equation}\label{eq:expected_rents}
\E\bigl[U(\theta)\,q(\theta)\bigr] \;=\; \int_{\ubar\theta}^{\thetabar} q(\theta)\;b_1\,\mu'(\theta)\,\frac{1 - F(\theta)}{f(\theta)}\; f(\theta)\dd\theta.
\end{equation}
The hazard rate $\frac{1-F(\theta)}{f(\theta)}$ captures the cost of incentive compatibility: each implemented type $\theta$ must be compensated not only for its own cost but also for the temptation that lower types face to mimic it.  The product $b_1\mu'(\theta)$ multiplying the hazard rate is the screening intensity, measuring how fast the contract separates types through the signal.

Substituting~\eqref{eq:expected_rents} into the principal's expected payoff and collecting terms, the principal's objective can be written as equation~\eqref{eq:reduced_problem}, where the virtual surplus is
\begin{equation}\label{eq:VS_main}
\psi(\theta;\, a, b_1) \;=\; V(\theta) - c(\theta) - \Phi(K - a;\, R) - b_1\,\mu'(\theta)\,\frac{1 - F(\theta)}{f(\theta)}.
\end{equation}
The optimal allocation rule is $q^*(\theta) = \mathbf{1}\{\psi(\theta) \geq 0\}$: the principal implements the relationship whenever the virtual surplus is non-negative.

Equation~\eqref{eq:VS_main} is the paper's core object.  It has the same architecture as the standard virtual surplus in mechanism design \citep{myerson1981optimal}: productive value minus direct cost minus a hazard-rate information-rent distortion.  The novelty is the financing wedge $\Phi(K - a; R)$, which is endogenous in the screening instrument.  Because $a$ and $b_1$ are linked through the participation constraint~\eqref{eq:binding_IR}, choosing more contingency ($b_1$) forces less advance ($a$), which increases the financing wedge.  It is this coupling between the financing wedge and the information rent that generates the liquidity--screening tradeoff.  In a standard screening model without financial frictions, the financing wedge is absent, and the optimal contract is determined by the information rent alone.

To make this coupling explicit, it is useful to decompose the principal's expected value into four components:
\begin{equation}\label{eq:decomposition}
W(b_1) \;=\; \underbrace{S\bigl(\hat\theta(b_1)\bigr)}_{\text{productive surplus}} \;-\; \underbrace{\Phi\bigl(K - a(b_1);\, R\bigr)\cdot\bigl[1 - F(\hat\theta)\bigr]}_{\text{aggregate financing cost}} \;-\; \underbrace{\int_{\hat\theta}^{\thetabar} b_1\mu'(\theta)\frac{1-F(\theta)}{f(\theta)}f(\theta)d\theta}_{\text{aggregate information rent}} \;-\; \underbrace{a(b_1)}_{\text{advance outlay}},
\end{equation}
where $S(\hat\theta) = \int_{\hat\theta}^{\thetabar}[V(\theta) - c(\theta)]f(\theta)d\theta$ is the productive surplus from implemented types and $\hat\theta(b_1)$ is the implementation cutoff.  Equation~\eqref{eq:decomposition} reveals that a marginal increase in $b_1$ has three effects on principal value, operating through three distinct channels.  First, it changes the set of implemented types (through $\hat\theta$), affecting the productive surplus.  Second, it reduces the advance (through the binding IR~\eqref{eq:da_db1}), which raises the financing cost per implemented type.  Third, it directly increases the aggregate information rent.  The first-order condition~\eqref{eq:FOC_main} balances all three margins simultaneously.

\medskip
\noindent\textbf{The optimal contract}\quad The main result of the paper characterizes the optimal bilateral contract.  Proposition~\ref{thm:bilateral} shows that the optimal contract uses both instruments at interior levels and provides the first-order condition that determines the optimum.

\begin{proposition}[Optimal preservation of outside-finance exposure]\label{thm:bilateral}
Suppose the regularity conditions and Assumption~[NP] hold and $R > 0$.  Then:
\begin{enumerate}[label=\alph*)]
\item The optimal advance is interior: $0 < a^* < K$.  The counterparty retains outside-finance exposure $\ell^* = K - a^* > 0$ despite the principal being the cheaper source of funds.
\item The optimal contingent schedule is strictly increasing: $b_1^* > 0$.
\item The optimum is characterized by the first-order condition
\begin{equation}\label{eq:FOC_main}
\underbrace{\Phi_\ell(K - a^*;\, R)}_{\substack{\text{marginal liquidity}\\\text{benefit}}} \;=\; \underbrace{\frac{\partial}{\partial a}\;\E\!\left[b_1^*\,\mu'(\theta)\,\frac{1-F(\theta)}{f(\theta)}\;\bigg|\; q = 1\right]}_{\substack{\text{marginal information}\\\text{cost}}}.
\end{equation}
\end{enumerate}
\end{proposition}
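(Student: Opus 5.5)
The plan is to work entirely with the reduced problem~\eqref{eq:reduced_problem}. By Lemma~\ref{lem:b0zero} I set $b_0^*=0$. By Lemma~\ref{lem:reduction} the binding participation constraint~\eqref{eq:binding_IR} pins down $a=a(b_1;R)$. The principal then chooses a single scalar $b_1\ge 0$, and equivalently a single $a\in[0,K]$, since $a(\cdot)$ is strictly decreasing by~\eqref{eq:da_db1}. I would write $W$ through the decomposition~\eqref{eq:decomposition} and differentiate it in $b_1$.

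The envelope theorem handles the cutoff term. Because $q^*=\mathbf 1\{\psi\ge0\}$, the boundary term from moving $\hat\theta(b_1)$ vanishes: $\psi(\hat\theta)=0$ at an interior cutoff, and otherwise the cutoff sits at $\ubar\theta$. This leaves
\[
W'(b_1)= -\Phi_\ell(K-a;R)\,\frac{da}{db_1}\,[1-F(\hat\theta)] \;-\; \int_{\hat\theta}^{\thetabar}\mu'(\theta)\frac{1-F(\theta)}{f(\theta)}f(\theta)\dd\theta \;-\;\frac{da}{db_1}.
\]
Dividing by $-\frac{da}{db_1}[1-F(\hat\theta)]>0$ and converting the derivative in $b_1$ into a derivative in $a$ via~\eqref{eq:da_db1}, the stationarity condition rearranges into the stated form~\eqref{eq:FOC_main}. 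The left side is the marginal saving in financing cost per implemented type. The right side is the change, per unit of advance, in the conditional expected information rent $\E[b_1\mu'(1-F)/f\mid q=1]$ induced through $b_1(a)$. This gives part (c) once interiority is established.

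For parts (a) and (b), I would argue at the two corners. For interiority I need two facts: that $W$ is continuous on the compact set $b_1\in[0,b_1^{\max}]$, where $b_1^{\max}$ is the value giving $a=0$, and that $W$ is concave. Concavity comes from $\Phi_{\ell\ell}\ge0$, as footnoted in the paper, together with linearity of the rent in $b_1$.

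\textbf{Corner $a=K$ ($b_1=0$).} Here $\Phi_\ell(0;R)$ is the marginal liquidity benefit. Under the quadratic specification it is zero, and in general it can be taken to be small relative to the screening gain. With $b_1=0$ the payoff slope $-c'(\theta)$ is negative, so I would argue that the participation constraint cannot bind at $\ubar\theta$ in this configuration. A small positive $b_1$ then restores a first-order separation gain while costing only second-order financing. Hence $W'(0^+)>0$, which gives $b_1^*>0$ and $a^*<K$.

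\textbf{Corner $a=0$.} Here $\Phi_\ell(K;R)>0$ when $R>0$, because $\Phi_\ell>0$ for $\ell>0$. By~\eqref{eq:da_db1}, reducing $b_1$ slightly raises $a$ and yields a strictly positive first-order financing saving on all implemented types. I would have to show this dominates the information-rent change. Hence $a^*>0$.

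The main obstacle is the corner analysis at $b_1=0$. The sign of $W'(0)$ is not immediate from the reduced objective: the information-rent term enters with a negative sign, and the financing benefit vanishes at $\ell=0$ under convex $\Phi$ with $\Phi_\ell(0)=0$. The first thing I would try is to show that at $b_1=0$ the binding-IR reduction is invalid, because the lowest type is then not the binding one. I would re-derive the problem with the binding type determined by the sign of $b_1\mu'-c'$. I expect this to show that a small contingent slope is needed to make $\ubar\theta$ the binding type, which is what delivers screening value. If that route fails, I would fall back on an explicit comparison under the quadratic specification, showing that $W'(0^+)>0$ whenever the signal is informative ($\mu'>0$).
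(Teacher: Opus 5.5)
Your reduction and your route to (c) are the paper's own: Lemmas~\ref{lem:b0zero}--\ref{lem:reduction}, the envelope at the cutoff, corner checks, and concavity. But the steps that carry (a) and (b) are not established, and the key one fails for the objective you use.

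First, a sign error. Along the IR curve, lowering $a$ raises borrowing, so the financing term in $W'$ is $+\Phi_\ell\,\frac{da}{db_1}[1-F(\hat\theta)]\le 0$, not $-\Phi_\ell\frac{da}{db_1}[1-F(\hat\theta)]$. With \eqref{eq:da_db1} this gives
\[
W'(b_1)\;=\;\frac{\mu(\ubar{\theta})\bigl(1-\Phi_\ell\,[1-F(\hat\theta)]\bigr)}{1+\Phi_\ell}\;-\;\int_{\hat\theta}^{\thetabar}\mu'(\theta)\,[1-F(\theta)]\dd\theta .
\]
The only positive term is the level effect of shifting the bottom type's pay from $a$ into $b_1\mu(\ubar{\theta})$. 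No term rewards separation: in \eqref{eq:reduced_problem}, $b_1$ enters only as information rent and, through $a(b_1)$, as financing cost and saved outlay.

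Now take the paper's own benchmark, where $\ubar{\theta}=0$ and $\mu(0)=c(0)=0$. The IR curve is flat, $a\equiv 1-\ell^*(R)$, and $W'(b_1)=-(1-\hat\theta)^2/2<0$ for every $b_1$. So the reduced problem is solved by $b_1=0$. Your claimed $W'(0^+)>0$ therefore does not follow from the maintained assumptions; it is false there, whatever concavity holds.

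Your fallback diagnosis points at the real issue: when $b_1\mu'-c'<0$, the participation constraint does not bind at $\ubar{\theta}$. But acting on it means abandoning \eqref{eq:binding_IR} and Lemma~\ref{lem:reduction} as the description of the feasible set near $b_1=0$. That is a different problem, and your derivation of (c) would have to be redone there too. As written it is a plan, not an argument.

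The corner bookkeeping also needs repair.
\begin{itemize}
\item On the IR curve, $b_1=0$ is not $a=K$. Here $a(0)$ solves $a=c(\ubar{\theta})+\Phi(K-a;R)$, so whenever $c(\ubar{\theta})<K$ you have $\ell(0)>0$. The financing effect of a small $b_1$ is then first order, governed by $\Phi_\ell(K-a(0);R)>0$ rather than $\Phi_\ell(0;R)$.
\item Conversely, $a<K$ then holds automatically along the whole curve. It is not equivalent to $b_1^*>0$ and says nothing about it.
\item At the other end, $b_1^{\max}=[c(\ubar{\theta})+\Phi(K;R)]/\mu(\ubar{\theta})$ exists only if $\mu(\ubar{\theta})>0$. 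Your compactness argument is therefore unavailable in the benchmark, and the decisive comparison at $a=0$ is left as ``I would have to show.''
\end{itemize}

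Finally, even with the sign fixed, stationarity gives
\[
\Phi_\ell(K-a^*;R)\;=\;\frac{1}{1-F(\hat\theta)}\;+\;\frac{\partial}{\partial a}\,\E\Bigl[b_1\mu'(\theta)\tfrac{1-F(\theta)}{f(\theta)}\Bigm| q=1\Bigr],
\]
with the derivative taken along the IR curve at a fixed cutoff. The outlay term $-a(b_1)$ leaves an extra $1/[1-F(\hat\theta)]$, so the condition does not simply rearrange into \eqref{eq:FOC_main}.

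The paper's proof takes the same three steps and does not close these gaps either. It asserts $W'(0)>0$ and concavity, and obtains $a^*>0$ only ``for $R$ sufficiently large.'' Following it will not complete your argument.
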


 Proposition~\ref{thm:bilateral}a) shows that the principal optimally sets the advance strictly between zero and the full working-capital requirement.  This result may appear counterintuitive: the principal has cheaper funds than the counterparty, so standard logic suggests she should eliminate all borrowing costs by setting $a = K$.  The reason she does not is that full pre-financing would require a completely flat contingent schedule ($b_1 = 0$), which destroys screening.  At $a = K$, the binding participation constraint~\eqref{eq:binding_IR} leaves no room for $b_1 > 0$, so the contract cannot distinguish across types.  By retaining some outside-finance exposure, the principal preserves the ability to tie payments to the signal and separate types.  The advance is not a financing decision; it is an information decision.

\medskip
 Proposition~\ref{thm:bilateral}b) establishes that the contingent schedule is strictly increasing in the signal.  Combined with part~a), this means the optimal contract is genuinely mixed: neither pure advance nor pure contingency, but a combination that balances liquidity and screening.  The contingent slope $b_1^*$ is the principal's choice of how much screening power to embed in the payment structure.  Higher $b_1^*$ means more separation across types but also more financing exposure for the counterparty.  In the other direction, at $a = 0$, the counterparty bears the full financing cost $\Phi(K; R)$, and the gain from introducing a marginal advance, reducing this cost at rate $\Phi_\ell(K; R) > 0$, exceeds the loss from slightly reduced screening.  The optimum is therefore interior on both margins.

\medskip
 Proposition~\ref{thm:bilateral}c) characterizes the interior through a condition that equates two marginal objects.  The left side of equation~\eqref{eq:FOC_main} is the marginal liquidity benefit: the reduction in external-financing cost from one additional dollar of advance.  The right side is the marginal information cost: the increase in expected information rents that results from the contract becoming flatter as the advance rises (through the binding participation constraint).  This is not a generic interior optimum: it has a specific economic interpretation as the principal choosing the information sensitivity of the financial contract.  A principal facing high $R$ (expensive outside finance) optimally tilts toward more advance and accepts lower screening power.  A principal with access to a highly informative signal optimally tilts toward more contingency and accepts higher financing costs.  The first-order condition~\eqref{eq:FOC_main} formalizes exactly where this tilting stops.

The optimal transfer schedule has a natural decomposition.  Define the first-best contingent slope $b_1^{FB}$ as the value that would be optimal absent information frictions \footnote{That is, if the principal could observe $\theta$ directly.}.  Under complete information, the principal can extract all surplus, so $b_1^{FB}$ is determined solely by the financing condition: it is the slope that minimizes the counterparty's financing cost for the given surplus extraction.  Under asymmetric information, the optimal slope $b_1^*$ differs from $b_1^{FB}$ by an information adjustment:
\begin{equation}\label{eq:transfer_decomp}
b_1^* \;=\; b_1^{FB} \;+\; \underbrace{\Delta b_1(R, \mu')}_{\text{information adjustment}},
\end{equation}
where $\Delta b_1 > 0$ when information rents are decreasing in $b_1$ (the principal increases contingency to improve screening) and $\Delta b_1 < 0$ when financing costs are high (the principal decreases contingency to save on financing).  The sign and magnitude of the information adjustment depend on $R$ and $\mu'$, and the first-order condition~\eqref{eq:FOC_main} pins down $\Delta b_1$ exactly.

This decomposition implies that the optimal contract can be viewed as a first-best financial structure, corrected for the information friction.  The correction is small when $R$ is low (information rents are cheap relative to financing savings) and large when $R$ is high (financing costs dominate, pushing the contract toward the advance-heavy first-best).

\begin{proposition}[Sufficient statistics]\label{prop:sufficient}
Under the conditions of Proposition~\ref{thm:bilateral}, the optimal contract $(a^*, b_1^*)$ depends on the primitives $(F, V, c, \Phi, \mu)$ only through two objects:
\begin{enumerate}[label=\alph*)]
\item the marginal financing cost at the optimum, $\Phi_\ell(K - a^*;\, R)$, which is a sufficient statistic for the financing environment;
\item the marginal screening return at the optimum, $\frac{\partial}{\partial b_1}\E[b_1\mu'(\theta)\frac{1-F}{f} \mid q=1]$, which is a sufficient statistic for the information environment.
\end{enumerate}
A principal who knows these two objects can determine the optimal advance share without observing the counterparty's type distribution directly.
\end{proposition}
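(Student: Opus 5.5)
The plan is to read Proposition~\ref{prop:sufficient} as a statement about the first-order conditions of the reduced problem~\eqref{eq:reduced_problem}. By Proposition~\ref{thm:bilateral}, the optimum is interior, so $(a^*, b_1^*)$ is pinned down by two equations in two unknowns.

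\textbf{Step 1 (the system).} The first equation is the binding participation constraint~\eqref{eq:binding_IR}. By Lemma~\ref{lem:reduction}, this reduces the problem to one free variable, $b_1$, with $a = a(b_1; R)$. The second equation is the first-order condition~\eqref{eq:FOC_main}, which I will rewrite in $b_1$-space using the chain rule and~\eqref{eq:da_db1}. Writing $\mathcal{I}(b_1) \equiv \E[b_1\mu'(\theta)\frac{1-F}{f} \mid q=1]$, the condition $W'(b_1^*)=0$ should reduce to
\[
\Phi_\ell(K-a^*;R)\,\Bigl|\tfrac{da}{db_1}\Bigr| \;=\; \mathcal{I}'(b_1^*),
\qquad
\Bigl|\tfrac{da}{db_1}\Bigr| = \frac{\mu(\ubar\theta)}{1+\Phi_\ell(K-a^*;R)}.
\]
To get there, I will differentiate the decomposition~\eqref{eq:decomposition}. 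The cutoff terms involving $\hat\theta'(b_1)$ drop out by the envelope theorem, because $\psi(\hat\theta)=0$ at the cutoff under $q^* = \mathbf{1}\{\psi\ge 0\}$. What remains is the financing channel, the advance-outlay channel and the rent channel.

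\textbf{Step 2 (sufficiency).} Once the system is written this way, I will argue that every primitive enters only through two scalars. The first is $\phi^* \equiv \Phi_\ell(K-a^*;R)$, which summarizes $(\Phi, R, K)$. The second is $\rho^* \equiv \mathcal{I}'(b_1^*)$, which summarizes $(F, \mu')$ and the implemented set. The residual primitive $\mu(\ubar\theta)$ is absorbed into the mapping from $a$ to $b_1$ through~\eqref{eq:binding_IR}. From the displayed condition one can then solve for the advance share $a^*/K$ as an explicit function of $(\phi^*, \rho^*)$. The resulting formula contains no reference to $\theta$ or to the full distribution $F$, which is the content of the final sentence of the proposition.

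\textbf{Main obstacle.} Step 2 is the hard part, because $\mu(\ubar\theta)$ and $c(\ubar\theta)$ still enter through the participation constraint. The statement "depends only through two objects" is therefore true only in a local sense. The claim holds in the following sense: any two environments that share $(\phi^*, \rho^*)$ and satisfy the same normalized boundary condition at $\ubar\theta$ prescribe the same $a^*$.

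\textbf{Fallback.} I will first try to eliminate these boundary terms by dividing~\eqref{eq:binding_IR} by $K$ and expressing everything in shares. If that fails, I will state the result as a local identification result: the two statistics, together with the known boundary normalization, determine $a^*$.
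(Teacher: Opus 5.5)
Your route is the paper's: its proof reads the two objects off the two sides of~\eqref{eq:FOC_main} and asserts that economies sharing their values share the optimal contract. Your Step~2 fails at precisely that assertion, and the problem is more basic than the boundary terms you flag.

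The condition you display, $\phi^*\mu(\ubar\theta)/(1+\phi^*)=\rho^*$, contains $a^*$ and $K$ only through $\phi^*$ itself. It is therefore one restriction that the two numbers satisfy at the optimum, not an equation you can solve for $a^*/K$: knowing the common value of marginal benefit and marginal cost says nothing about where the schedules cross. To locate $a^*$ you must either invert $\ell\mapsto\Phi_\ell(\ell;R)$ or use~\eqref{eq:binding_IR}, $a^*=c(\ubar\theta)-b_1^*\mu(\ubar\theta)+\Phi(K-a^*;R)$. Both need the function $\Phi$ (in particular its level at $\ell^*$), which a slope at one point does not determine, convexity notwithstanding.

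Concretely, take $c(\ubar\theta)=\mu(\ubar\theta)=0$, as in the paper's parameterization. Then \eqref{eq:binding_IR} alone gives $a^*/K=\phi^*/(2+\phi^*)$ for $\Phi=\tfrac{R}{2}\ell^2$ but $a^*/K=\phi^*/(3+\phi^*)$ for $\Phi=\tfrac{R}{3}\ell^3$. Since $a^*$ is then independent of $(F,\mu)$, these can be adjusted (keeping $\mu(\ubar\theta)=0$) to equalize $\rho^*$ without moving $a^*$. So your fallback (equal $(\phi^*,\rho^*)$ plus equal boundary normalization implies equal $a^*$) is also false, and passing to shares does not help. What this route honestly delivers is the weaker reading behind the paper's ``stop advancing when the borrowing rate equals the screening return'' rule: the two objects test whether a candidate contract is optimal, but they do not return $a^*$.

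Separately, your Step~1 condition is not what differentiating~\eqref{eq:decomposition} produces. Once the cutoff terms drop out via $\psi(\hat\theta)=0$, the outlay term $-a(b_1)$ contributes $+|da/db_1|$. The financing term carries the weight $1-F(\hat\theta)$ and has the same sign as the rent term, since both are costs of raising $b_1$. The condition is therefore
\[
\Bigl|\tfrac{da}{db_1}\Bigr|\bigl[1-\phi^*\bigl(1-F(\hat\theta)\bigr)\bigr] \;=\; \int_{\hat\theta}^{\thetabar}\mu'(\theta)\,\bigl[1-F(\theta)\bigr]\dd\theta .
\]
Two further problems with your display:
\begin{itemize}
\item $\mathcal{I}$ is a conditional mean, so its total derivative in $b_1$ retains $\hat\theta'(b_1)$ terms. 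The envelope argument does not remove them, because it is a property of $W$, not of $\mathcal{I}$.
\item A literal chain-rule translation of~\eqref{eq:FOC_main} gives $\Phi_\ell\,(da/db_1)=\mathcal{I}'(b_1^*)$, whose left side is negative by Lemma~\ref{lem:reduction}. Your absolute value is doing unacknowledged work.
\end{itemize}
The correct condition also involves $1-F(\hat\theta)$ and $\mu(\ubar\theta)$, which is a further reason the ``only through two objects'' conclusion cannot be extracted from this computation.
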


Proposition~\ref{prop:sufficient} is the analog of the sufficient-statistics characterization in optimal tax theory: just as knowledge of a demand elasticity and a welfare weight is sufficient to determine the optimal tax rate, knowledge of the financing semi-elasticity and the screening semi-elasticity is sufficient to determine the optimal financial contract.  The principal does not need to estimate the full type distribution $F$ or the full signal technology $g(x|\theta)$; she needs only the local curvatures of the financing cost function and the information rent function at the relevant margin.  In the parameterized model, these reduce to two scalar parameters: $R$ (which determines $\Phi_\ell$) and $\mu'$ (which determines the screening return per unit of contingency).

The sufficient-statistics result yields a one-line implementation rule: advance funds to the counterparty until the counterparty's marginal borrowing rate equals the principal's marginal screening return per dollar of contingency.  A principal who observes that her counterparty's marginal external-financing cost is 8\% and that the marginal screening return from an additional dollar of contingent payment is 8\% should stop advancing.  If the borrowing rate exceeds the screening return, the contract is too contingent; if it falls below, the contract is too advance-heavy.  This rule can be applied without knowing the counterparty's type distribution, just as Dávila's volume-targeting rule can be applied without knowing investors' beliefs.

\medskip
The bilateral theorem characterizes the optimal contract and the forces that determine it.  A natural next question is: how does the contract respond to changes in the economic environment?  Two primitives are directly observable in practice: the tightness of outside credit markets and the informativeness of available performance measures.  Proposition~\ref{thm:bilateral} implies sharp comparative statics with respect to both.

\begin{corollary}\label{cor:compstat}
Under the conditions of Proposition~\ref{thm:bilateral}:
\begin{enumerate}[label=\alph*)]
\item Cash intensity rises with outside-finance tightness: $\partial\beta^*/\partial R > 0$.
\item Cash intensity falls with signal informativeness: if the signal becomes more informative in the Blackwell sense, $b_1^*$ increases and $\beta^*$ decreases.
\item Tighter financial conditions worsen selection: higher $R$ forces the contract toward flatter payment schedules, widening the gap between the second-best allocation and the first-best.
\end{enumerate}
\end{corollary}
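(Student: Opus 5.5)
The argument is monotone comparative statics on the reduced one-variable problem \eqref{eq:reduced_problem}, in which $a$ is tied to $b_1$ by Lemma~\ref{lem:reduction}. I read cash intensity as $\beta^* \equiv a^*/(a^* + b_1^*\E[\mu(\theta)\mid q=1])$, the advance share of the expected payment. Since $\beta^*$ is decreasing in $b_1^*$ and increasing in $a^*$, and $a^*$ moves opposite to $b_1^*$ by \eqref{eq:da_db1}, each claim reduces to signing the response of $b_1^*$. Concretely, a) needs $db_1^*/dR<0$, b) needs $b_1^*$ to rise when the signal becomes more informative, and c) follows from a) together with the cutoff $\hat\theta(b_1)$ in \eqref{eq:decomposition}.

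For a), I would write the first-order condition of Proposition~\ref{thm:bilateral}c) as $G(b_1;R)\equiv W'(b_1;R)=0$, evaluated at an interior optimum. I would then apply the implicit function theorem: $db_1^*/dR = -G_R/G_{b_1}$. The denominator is negative by the second-order condition, which holds because $\Phi_{\ell\ell}\ge 0$ makes $W$ concave (footnote to the model). So it suffices to show $G_R<0$, i.e.\ that $W$ has decreasing differences in $(b_1,R)$. Differentiating \eqref{eq:decomposition} and using the envelope theorem on the cutoff (at $\hat\theta$ we have $\psi=0$, so the boundary terms vanish), the $b_1$-derivative of the financing term is $\Phi_\ell(K-a)\,(-a'(b_1))\,[1-F(\hat\theta)]$. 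This term is positive and is increasing in $R$: $\Phi_{\ell R}>0$ holds in the quadratic case and is needed in general as a mild extra assumption. Moreover, $-a'(b_1)=\mu(\ubar\theta)/(1+\Phi_\ell)$ shrinks only less than proportionally. I expect this to be the main obstacle, because $R$ enters both the marginal financing cost and the slope $a'(b_1)$. I would first verify under the quadratic specification that $\partial_R[\Phi_\ell/(1+\Phi_\ell)]>0$, which holds since $x/(1+x)$ is increasing. I would then add the outlay term $-a'(b_1)$, whose $R$-derivative has the opposite sign, and check that the financing channel dominates; failing that, I would impose $\Phi_{\ell R}$ large relative to $\Phi_\ell$ as the operative condition. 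A Topkis-style argument then yields $b_1^*$ decreasing, hence $a^*$ increasing and $\beta^*$ increasing in $R$.

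For b), I would parameterize informativeness by a shift $s$ that raises $\mu'(\theta)$ pointwise. This is the consequence of a Blackwell improvement under MLRP that I would invoke, with $\mu(\ubar\theta)$ held fixed. The marginal rent $\int \mu'(\theta)(1-F)\,d\theta$ rises with $s$, but the marginal screening return in Proposition~\ref{prop:sufficient}b) also scales with $\mu'$. Following the paper's interpretation, where the benefit of contingency is separation, I would show $G_s>0$, so $W$ has increasing differences in $(b_1,s)$ and hence $b_1^*\uparrow$ and $\beta^*\downarrow$. This step relies on the paper's sufficient-statistic framing of the screening return, and I would state that reliance explicitly.

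For c), I would combine a) with the cutoff. The condition $\psi(\hat\theta;a,b_1)=0$ from \eqref{eq:VS_main} gives $\hat\theta$ increasing in $\Phi(K-a^*;R)$, which rises with $R$ both directly and through the total change in $\ell^*$. So the implemented set shrinks relative to the first-best cutoff, defined by $V-c=0$ at $a=K$, and the allocation gap widens.
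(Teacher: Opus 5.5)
Your route is the paper's own. You use implicit differentiation of $W'(b_1)=0$ with concavity for a), a pointwise rise in $\mu'$ for b), and the cutoff condition $\psi(\hat\theta)=0$ for c). The paper's proof, however, asserts the required signs rather than deriving them. Once you sign the terms of \eqref{eq:decomposition} explicitly, two of your steps point the wrong way.

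In b), $\mu'$ enters $W$ only through the rent term, which is a cost. Holding $\hat\theta$ fixed, $\partial_s W'(b_1)=-\int_{\hat\theta}^{\thetabar}\partial_s\mu'(\theta)\,[1-F(\theta)]\,d\theta<0$. So the direct effect gives \emph{decreasing} differences in $(b_1,s)$, and your own Topkis logic pushes $b_1^*$ down. The only offsetting force runs through the cutoff and is proportional to $b_1$, so it cannot sign the total in general. The ``marginal screening return'' of Proposition~\ref{prop:sufficient}b) is exactly this rent derivative, not a benefit, so appealing to it to obtain $G_s>0$ assumes the conclusion. Nor does a Blackwell improvement by itself give a pointwise rise in $\mu'$: relabeling $x$ changes $\mu'$ without changing informativeness.

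In c), $\psi$ depends on $b_1^*$ through $-b_1\mu'(1-F)/f$. The fall in $b_1^*$ from a) therefore raises $\psi$ and \emph{lowers} $\hat\theta$. Also, because $\ell^*=K-a^*$ falls with $R$, the change in $\ell^*$ lowers $\Phi(\ell^*;R)$ rather than raising it. Only the direct $\Phi_R$ term pushes the cutoff up, and you would have to show it dominates the other two. The paper's proof of (iii) attributes the rise in $\hat\theta$ to the lower $b_1^*$, which is the opposite of what \eqref{eq:VS_main} says, so it cannot be leaned on here.

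In a), the architecture is sound but the cross-partial is incomplete. Write $x=\Phi_\ell(K-a;R)$, $p=1-F(\hat\theta)$ and $I=\int_{\hat\theta}^{\thetabar}\mu'(1-F)\,d\theta$. Your envelope step then gives $G=W'(b_1)=\mu(\ubar\theta)(1-xp)/(1+x)-I$.

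Your worry about the outlay term is misplaced: $\partial G/\partial x=-\mu(\ubar\theta)(1+p)/(1+x)^2<0$. The outlay saving is a \emph{benefit} of $b_1$, so its decline reinforces the rising financing cost, and no dominance condition is needed. What you do need is $dx/dR>0$ at fixed $b_1$. Because $a(b_1;R)$ shifts with $R$ through the IR, with $\partial a/\partial R=\Phi_R/(1+\Phi_\ell)$, this requires $\Phi_{\ell R}(1+\Phi_\ell)>\Phi_{\ell\ell}\Phi_R$, not merely $\Phi_{\ell R}>0$. That condition does hold for the quadratic.

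The genuine omission is the cutoff. The envelope theorem removes $\hat\theta$ from $W'(b_1)$ but not from its $R$-derivative. Here $\partial G/\partial\hat\theta=\mu(\ubar\theta)xf(\hat\theta)/(1+x)+\mu'(\hat\theta)[1-F(\hat\theta)]>0$. At fixed $b_1$ the cutoff rises with $R$, since the financing wedge rises by $\Phi_R/(1+\Phi_\ell)$. This term opposes the sign you need, and it is not negligible: as $\mu(\ubar\theta)\to 0$ it is the only term left.

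Finally, your reduction of $\beta^*$ to the sign of $b_1^*$ skips a composition effect, because $\beta^*$ depends on the cutoff through $\E[\mu(\theta)\mid q=1]$. So a), and with it the input to c), needs a bound on the cutoff channel that neither your proposal nor the paper supplies.
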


Corollary~\ref{cor:compstat}a) predicts that counterparties in tighter credit environments receive more early payment and less performance-contingent compensation.  Intuitively, when outside borrowing is expensive, the principal substitutes toward the liquidity instrument (advance) and away from the screening instrument (contingency).  The mechanism operates through the first-order condition~\eqref{eq:FOC_main}: higher $R$ raises the marginal liquidity benefit (the left side), shifting the optimum toward more advance.  In supply-chain finance, this predicts that suppliers facing credit crunches receive more pre-shipment funding and fewer deferred quality-based payments.

\medskip
Corollary~\ref{cor:compstat}b) predicts that relationships with better verifiable performance measures rely more heavily on contingent transfers.  When the signal is highly informative, each unit of contingency buys more separation across types at a lower information-rent cost per unit.  The principal shifts toward more screening and less liquidity.  In venture capital, this predicts that startups with easily measurable milestones \footnote{For instance, FDA approval or revenue targets.} face more aggressive staging than startups whose quality is harder to verify.

\medskip
Corollary~\ref{cor:compstat}c) establishes that tighter financial conditions do not merely raise financing costs; they also degrade the quality of screening.  Higher $R$ forces the contract toward flatter payment schedules, which reduces the principal's ability to distinguish across types.  Fewer types are implemented, and the gap between the second-best and first-best allocations widens.  This link between financial conditions and real screening distortions is a distinctive prediction of the model: credit crunches have informational consequences, not just financing consequences.

\medskip
The optimal contract uses both instruments at interior levels, but a practitioner might reasonably ask: how much is lost by using a simpler structure?  A large buyer could advance the full working capital and avoid the complexity of contingent payments.  Alternatively, a venture capitalist could stage everything against milestones and provide no upfront funding.  The following result shows that both shortcuts are strictly dominated, and quantifies the welfare gap.

Define three value functions.  Let $W_A(R)$ denote the principal's value under a pure advance contract ($a = K$, $b_1 = 0$): the principal fully pre-finances the counterparty and imposes no contingent payment.  Let $W_C(R)$ denote the principal's value under a pure contingent contract ($a = 0$, $b_1$ chosen optimally): the principal provides no advance and relies entirely on the contingent transfer for both compensation and screening.  Let $W_M(R)$ denote the principal's value under the optimal mixed contract ($a^*$, $b_1^*$ chosen jointly).

\begin{proposition}[Strict dominance of the mixed contract]\label{prop:dominance}
Suppose the regularity conditions and Assumption~[NP] hold.  Then:
\begin{enumerate}[label=\alph*)]
\item For all $R > 0$ with an informative signal, $W_M(R) > \max\{W_A(R),\, W_C(R)\}$: the mixed contract strictly dominates both pure instruments.
\item $W_A(R)$ is independent of $R$: under full pre-financing, the counterparty bears no financing cost, so credit conditions are irrelevant.
\item $W_C(R)$ is strictly decreasing in $R$: under pure contingency, the counterparty bears the full financing cost $\Phi(K;\, R)$, which rises with credit tightness.
\item There exists a unique threshold $R^*$ such that $W_C(R) > W_A(R)$ for $R < R^*$ and $W_A(R) > W_C(R)$ for $R > R^*$.  At $R = 0$, $W_C(0) > W_A(0)$: screening is valuable when finance is frictionless.
\end{enumerate}
\end{proposition}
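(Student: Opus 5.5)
I will work entirely with the reduced one-variable problem~\eqref{eq:reduced_problem} and the decomposition~\eqref{eq:decomposition}. By Lemma~\ref{lem:b0zero} and Lemma~\ref{lem:reduction}, every candidate contract is indexed by $b_1 \ge 0$, with $a = a(b_1;R)$ determined by the binding participation constraint~\eqref{eq:binding_IR}. The pure advance contract corresponds to $b_1 = 0$, but only if $a(0;R)$ is compatible with $a = K$. I will therefore treat $W_A$ as the value at the corner $a = K$, $b_1 = 0$, with the participation constraint slack or adjusted through $b_0$. The pure contingent contract corresponds to the constrained problem with $a = 0$, with $b_1$ chosen optimally. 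The mixed contract $W_M$ is the unconstrained maximum over $(a,b_1)$. Since both pure contracts are feasible points of the mixed problem, $W_M \ge \max\{W_A, W_C\}$ holds trivially.

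For part a), strictness follows from Proposition~\ref{thm:bilateral}. Parts a) and b) of that result say the optimum has $0<a^*<K$ and $b_1^*>0$. The optimum therefore differs from both corners, since $a=K$ forces $b_1=0$ and $a=0$ is excluded. I still need to rule out that a corner ties with the interior optimum. I plan to use the strict concavity of the objective in the instruments, which the footnote attributes to $\Phi_{\ell\ell}\ge 0$ together with the linearity of rents in $b_1$. Under strict concavity the maximizer is unique, so any distinct feasible point gives a strictly lower value. If strict concavity fails, I will fall back on a local argument. At $a=K$, the derivative of $W$ in the direction of increasing $b_1$ with the compensating decrease in $a$ is strictly positive, because $\Phi_\ell(0;R)$ is second order while the screening gain is first order. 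At $a=0$, a marginal increase in $a$ gains $\Phi_\ell(K;R)>0$, as argued in the text after Proposition~\ref{thm:bilateral}. Either local improvement shows each corner is strictly dominated.

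Parts b) and c) are direct computations. Under $a=K$ the wedge $\Phi(0;R)=0$, so $R$ drops out of $\psi$, out of the cutoff, and out of the outlay; hence $W_A$ is constant in $R$. Under $a=0$, I apply the envelope theorem to $W_C(R)=\max_{b_1}W(b_1;a=0,R)$. The result is
\[
W_C'(R) = -\Phi_R(K;R)\,[1-F(\hat\theta)] - (\text{effect through the IR intercept}),
\]
where both terms are nonpositive and the first is strictly negative whenever some types are implemented. I take $1-F(\hat\theta)>0$ from the assumption that the relationship is worth implementing.

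Part d) is where I expect the real obstacle. At $R=0$ the financing wedge vanishes, so $W_C(0)$ equals the standard screening value with an informative signal. This should exceed $W_A(0)$ because $W_A$ pools all types and cannot exclude low types. I would prove the inequality by exhibiting a small $b_1>0$ with $a=0$ that strictly improves on pooling. Using b) and c), $W_C-W_A$ is then continuous and strictly decreasing in $R$, and positive at $R=0$. A unique crossing $R^*$ follows once I show $W_C(R)-W_A<0$ for large $R$. For this I would use $\Phi(K;R)\to\infty$ as $R\to\infty$, which holds under the quadratic specification. In general I would need to add the assumption that $\lim_{R\to\infty}\Phi(K;R)$ exceeds the maximal surplus. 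Failing that, $R^*=\infty$ and the statement only holds in the weak sense. The delicate point is that $W_C(R)$ is bounded below by the outside option (exclusion), so I must check that $W_A$ is strictly above that floor, that is, that the pooling contract earns positive value.
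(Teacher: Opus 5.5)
Your outline is the paper's own proof: interiority plus strict concavity for part a), $\Phi(0;R)=0$ for b), $\Phi_R>0$ for c), and a comparison at $R=0$ with monotonicity and a large-$R$ limit for d). The problem is that the strict inequalities carrying a) and d) do not follow from these ingredients, and your patches do not supply them.

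You are right that $(K,0)$ lies on the binding participation curve only if $K=c(\ubar\theta)$, since $a(0;R)$ solves $a=c(\ubar\theta)+\Phi(K-a;R)$. Adjusting through $b_0$ would need $b_0=c(\ubar\theta)-K<0$, which limited liability forbids. The paper's proof simply treats $(K,0)$ as the $b_1=0$ end of $W(b_1)$. The only concavity on offer is that of the one-variable $W(b_1)$ along the curve, and it is asserted rather than derived in the proof of Proposition~\ref{thm:bilateral}. So uniqueness of its maximizer says nothing about an off-curve point, and joint strict concavity in $(a,b_1)$ is proved nowhere.

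Your fallback fails on its own terms. It needs $\Phi_\ell(0;R)=0$, which the assumptions do not give: $\Phi=R\ell$ satisfies them with $\Phi_\ell(0;R)=R$. It also relies on a first-order screening gain that the reduced objective does not contain. The cutoff effect vanishes because $\psi(\hat\theta)=0$, and $d\psi/db_1=\Phi_\ell\,a'(b_1)-\mu'(\theta)\frac{1-F(\theta)}{f(\theta)}\le 0$ for every type. The only positive marginal term is therefore the outlay saving $-a'(b_1)=\mu(\ubar\theta)/(1+\Phi_\ell)$. At $(K,0)$, granting it is on the curve and $\Phi_\ell(0;R)=0$, you get $W'(0)=\mu(\ubar\theta)-\int_{\hat\theta}^{\thetabar}\mu'(\theta)[1-F(\theta)]\dd\theta$, which has no sign.

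The uniform benchmark makes this concrete. There $\mu(\ubar\theta)=c(\ubar\theta)=0$, so the binding constraint $a=\frac{R}{2}(1-a)^2$ does not involve $b_1$. The reduced objective~\eqref{eq:reduced_problem} you chose to work with is then $W(b_1)=\frac{(1-a^*)^2}{2(1+b_1)}-a^*$, which is strictly convex and strictly decreasing. Its maximum is at $b_1=0$, so the interiority you import from Proposition~\ref{thm:bilateral} is unavailable. Its value $a^*(1-R)/R$ is $\le 0<\frac14=W_A$ for every $R\ge 1$. So even the weak inequality you call trivial fails: the paper computes $W_A$ from $\int q(\theta)[V(\theta)-K]f(\theta)\dd\theta$, not from the reduced objective at a point of its domain.

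In d), ``a small $b_1>0$ with $a=0$'' is not a perturbation of the pooling contract, which pays $K$ up front. With $b_0=0$ and $\mu(\ubar\theta)>0$, the binding constraint at $a=0$ pins $b_1$ anyway. So $W_C(0)>W_A(0)$ is a global comparison, and the paper only asserts it. Screening is also not what delivers it where it holds. In the benchmark at $R=0$, the pure-contingent value at slope $b_1$ is $\frac{1}{2(1+b_1)}$, best at $b_1=0$, against $W_A=\frac14$. The gap comes from $\psi_C$ netting out $c(\theta)$ where $W_A$ nets out $K$. And since the paper's $W_A$ excludes types with $V(\theta)<K$, ``cannot exclude low types'' is not the reason either.

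Finally, the hypothesis $1-F(\hat\theta)>0$ you add in c) cannot hold for all $R$ once you use $\Phi(K;R)\to\infty$ in d). When $\Phi(K;R)\ge\max_\theta[V(\theta)-c(\theta)]$ ($R\ge 2$ in the benchmark), $\psi_C\le 0$ for every type and $W_C\equiv 0$. Strict decrease therefore holds only below that shutdown level. That suffices for uniqueness of $R^*$, because $W_A>0$, but it is weaker than c) as stated.
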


Proposition~\ref{prop:dominance}a) is the central welfare result.  It says that neither pure instrument can replicate what the mixed contract achieves: the advance provides liquidity that the contingent transfer cannot, and the contingent transfer provides screening that the advance cannot.  By combining both, the principal exploits the comparative advantage of each instrument.  This strict dominance holds for all positive financing frictions, not just for a knife-edge set of parameters.

Proposition~\ref{prop:dominance}b) and c) reveal the asymmetry between the two instruments.  The pure advance insulates the counterparty from credit conditions entirely, but at the cost of forgoing all screening.  The pure contingent contract preserves screening, but exposes the counterparty to the full financing cost.  As $R$ rises, the pure contingent contract becomes progressively more expensive, while the pure advance is unaffected.  Eventually, the financing cost under pure contingency exceeds the value of screening, and the pure advance dominates.

The threshold $R^*$ in Proposition~\ref{prop:dominance}d) has a natural economic interpretation: it is the level of financing tightness at which the screening benefit of contingent payments is exactly offset by the financing cost they impose.  Below $R^*$, credit is cheap enough that the counterparty can afford the financing exposure, and the screening benefit is worth it.  Above $R^*$, credit is so expensive that the financing cost outweighs the screening gain, and pooling through a pure advance becomes preferable.  The mixed contract avoids this corner by trading off liquidity and screening at the margin, which is always strictly better than either extreme.

In the uniform-quadratic parameterization, $W_A$ can be computed in closed form: with $a = K = 1$ and $b_1 = 0$, the financing cost is zero, and the principal implements all types $\theta$ with $V(\theta) > K$, giving $W_A = \int_{1/v}^{1}(v\theta - 1)d\theta$.  The threshold $R^*$ can also be computed explicitly.  Figure~\ref{fig:dominance} illustrates the three value functions.

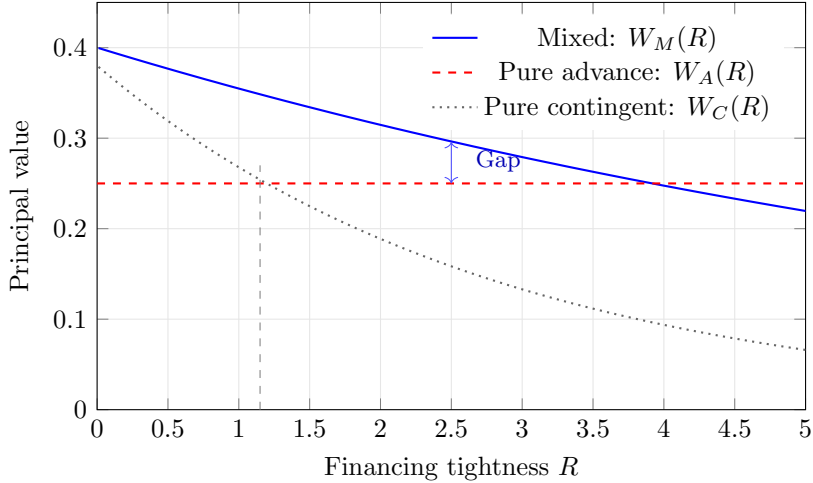
\begin{figure}[t]
\centering
\begin{tikzpicture}
\begin{axis}[
    width=11cm, height=7cm,
    xlabel={Financing tightness $R$},
    ylabel={Principal value},
    xmin=0, xmax=5,
    ymin=0, ymax=0.45,
    ymajorgrids=true,
    xmajorgrids=true,
    grid style={gray!20},
    every axis plot/.append style={thick},
    tick label style={font=\small},
    label style={font=\small},
    legend pos=north east,
    legend style={font=\small, draw=none},
]
% Mixed (highest, decreasing slowly)
\addplot[blue, domain=0.01:5, samples=80]
    {0.40*exp(-0.12*x)};
\addlegendentry{Mixed: $W_M(R)$}

% Pure advance (constant, no R dependence)
\addplot[red, dashed, domain=0:5]
    {0.25};
\addlegendentry{Pure advance: $W_A(R)$}

% Pure contingent (starts high, decreases fast)
\addplot[black!60, dotted, domain=0.01:5, samples=80]
    {0.38*exp(-0.35*x)};
\addlegendentry{Pure contingent: $W_C(R)$}

% R* crossing point
\draw[gray, thin, dashed] (axis cs:1.15,0) -- (axis cs:1.15,0.27);
\node[font=\footnotesize, gray, anchor=north] at (axis cs:1.15,0) {$R^*$};

% Annotations
% Dominance gap arrow at R=2.5
\draw[<->, blue!70, thin] (axis cs:2.5,0.25) -- (axis cs:2.5,0.296);
\node[font=\footnotesize, blue!70!black, anchor=west] at (axis cs:2.6,0.275) {Gap};

\end{axis}
\end{tikzpicture}
\caption{\textbf{Strict dominance of the mixed contract.}  The mixed contract $W_M(R)$ (solid) strictly dominates both pure instruments for all $R > 0$.  The pure advance $W_A$ (dashed) is independent of $R$: full pre-financing eliminates the financing friction but forfeits screening.  The pure contingent contract $W_C$ (dotted) starts above $W_A$ when credit is cheap but falls below it at $R = R^*$: the screening benefit no longer justifies the financing cost.  The mixed contract avoids both corners by optimally trading off liquidity and screening.}
\label{fig:dominance}
\end{figure}

The welfare ordering in Proposition~\ref{prop:dominance} provides a precise answer to the question posed in the introduction: why does a principal with cheap capital not simply pre-finance the constrained counterparty?  The answer is that full pre-financing ($W_A$) is dominated by the mixed contract ($W_M$) because it sacrifices screening.  But the answer is nuanced: pure contingency ($W_C$) is also dominated, because it imposes excessive financing costs.  The optimal contract lies strictly between the two extremes, and the principal's value exceeds what either pure instrument can deliver.  This strict dominance result is the theoretical foundation for the quantitative analysis in Section~\ref{sec:quant}, where the magnitudes of the dominance gap are computed explicitly.

The source of the strict dominance can be traced to the virtual surplus, which takes a different form under each contract structure.  Under a pure advance ($a = K,\; b_1 = 0$), the virtual surplus is
\begin{equation}\label{eq:VS_advance}
\psi_A(\theta) \;=\; V(\theta) - c(\theta) - K.
\end{equation}
The financing wedge vanishes ($\Phi(0; R) = 0$) and so does the information-rent distortion.  The principal implements all types above $V(\theta) \geq K$ and cannot separate among them: good and bad types are pooled.

Under a pure contingent contract ($a = 0,\; b_1$ chosen optimally), the virtual surplus is
\begin{equation}\label{eq:VS_contingent}
\psi_C(\theta;\, b_1) \;=\; V(\theta) - c(\theta) - \Phi(K;\, R) - b_1\mu'(\theta)\frac{1-F(\theta)}{f(\theta)}.
\end{equation}
Screening is maximized: the hazard-rate term is as steep as possible.  But the financing wedge is also maximal: $\Phi(K; R)$ is the cost of financing the entire working-capital requirement externally.  Every implemented type bears this full cost.

Under the mixed contract, the virtual surplus~\eqref{eq:VS_main} replaces $\Phi(K; R)$ with $\Phi(K - a^*; R) < \Phi(K; R)$ while retaining a positive screening term $b_1^* > 0$.  The mixed contract achieves strictly higher value by exploiting the comparative advantage of each instrument: the advance handles the financing dimension at lower cost (because it directly reduces borrowing), while the contingent leg handles the information dimension at lower cost (because it is more informative per dollar than the advance).

The information rent has a correspondingly different structure under each mechanism.  Under the pure advance, every type earns the same surplus; the rent schedule is flat.
\begin{equation}\label{eq:rent_advance}
U_A(\theta) \;=\; K - c(\theta), \qquad \text{for all implemented } \theta.
\end{equation}
There is no type separation: the principal cannot price-discriminate based on quality.  Under the pure contingent contract, the rent schedule has maximum slope:
\begin{equation}\label{eq:rent_contingent}
U_C(\theta) \;=\; \int_{\ubar\theta}^{\theta}\bigl[b_1^C\,\mu'(\theta') - c'(\theta')\bigr]d\theta',
\end{equation}
where $b_1^C$ is the optimal slope under pure contingency.  High types earn large rents; low types earn small rents.  The slope is $dU_C/d\theta = b_1^C\mu'(\theta) - c'(\theta)$, which is the source of screening power.  Under the mixed contract, the rent schedule lies between these extremes:
\begin{equation}\label{eq:rent_mixed}
U_M(\theta) \;=\; \int_{\ubar\theta}^{\theta}\bigl[b_1^*\,\mu'(\theta') - c'(\theta')\bigr]d\theta', \qquad b_1^* \in (0,\, b_1^C).
\end{equation}
The slope is $b_1^*\mu' - c' < b_1^C\mu' - c'$: less separation than pure contingency, but the saving in financing costs more than compensates.  The binding participation constraint~\eqref{eq:binding_IR} links the two: since $a^* > 0$ requires $b_1^* < b_1^C$, the principal trades off rent steepness (screening) for advance size (liquidity) at the rate determined by the first-order condition~\eqref{eq:FOC_main}.

The three virtual surpluses~\eqref{eq:VS_advance},~\eqref{eq:VS_contingent}, and~\eqref{eq:VS_main} encapsulate the liquidity--screening tradeoff in a single comparison.  The advance removes the financing wedge but eliminates the screening term.  The contingent contract maximizes the screening term but also maximizes the financing wedge.  The mixed contract optimally positions itself in the interior, where the marginal financing saving from a dollar of advance equals the marginal information cost from the resulting contract flattening.

\medskip
The results so far use the affine transfer class $T(x) = b_0 + b_1 x$.  A reader might worry that the liquidity-screening tradeoff is an artifact of this particular functional form.  The next result shows it is not: the tradeoff arises for any monotone payment schedule, including equity-like, debt-like, or option-like securities.

\begin{proposition}[General security class]\label{thm:general}
Let $T(x)$ belong to the class $\mathcal{T}$ of nonnegative monotone increasing functions on $\mathcal{X}$.  Order this class by steepness: $T_s(x)$ is steeper than $T_{s'}(x)$ for $s > s'$ if $\E_\theta[T_s(x)] - \E_{\theta'}[T_s(x)]$ is increasing in $s$ for all $\theta > \theta'$.  Under the regularity conditions and Assumption~[NP], the optimal security has interior steepness $s^* \in (s_{\min}, s_{\max})$ and the advance satisfies $a^* \in (0, K)$ whenever $R > 0$ and the signal is informative.
\end{proposition}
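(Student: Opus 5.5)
The strategy is to map the general security class back onto the structure already analyzed for the affine case, and then reuse the logic of Lemma~\ref{lem:b0zero}, Lemma~\ref{lem:reduction} and Proposition~\ref{thm:bilateral}. Fix a one-parameter family $\{T_s\}_{s\in[s_{\min},s_{\max}]}\subset\mathcal{T}$, ordered by steepness, and write $m_s(\theta)\equiv\E_\theta[T_s(x)]$. The counterparty's payoff becomes $U(\theta)=a+m_s(\theta)-c(\theta)-\Phi(K-a;R)$, so the envelope condition \eqref{eq:envelope} is replaced by $dU/d\theta=\partial_\theta m_s(\theta)-c'(\theta)$. The steepness ordering together with MLRP gives $\partial_\theta m_s(\theta)\geq 0$, increasing in $s$. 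Hence $\partial_\theta m_s$ plays exactly the role of $b_1\mu'(\theta)$.

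The key steps, in order, are as follows.
\begin{enumerate}[label=(\roman*)]
\item Reproduce Lemma~\ref{lem:b0zero}. Any constant component of $T_s$ can be shifted into $a$ without affecting $\partial_\theta m_s$, and this strictly lowers $\Phi$. So without loss we normalize each $T_s$ to have $\inf_x T_s(x)=0$; for the affine family this is the normalization $b_0=0$.
\item Impose binding participation at $\ubar\theta$:
\[
a+m_s(\ubar\theta)=c(\ubar\theta)+\Phi(K-a;R).
\]
By the implicit function theorem this defines $a(s)$ with $a'(s)=-\partial_s m_s(\ubar\theta)/(1+\Phi_\ell)$, the analogue of \eqref{eq:da_db1}.
\item Integrate by parts as in \eqref{eq:expected_rents}. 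This gives the virtual surplus $V-c-\Phi(K-a(s);R)-\partial_\theta m_s\,(1-F)/f$ and a reduced objective $W(s)$ in one variable.
\item Show that $W$ is not maximized at either endpoint. At $s=s_{\min}$, the flattest security, we have $\partial_\theta m_s\equiv 0$, so no screening takes place and participation forces $a=K$. At $s=s_{\max}$ we have $a=0$, so the full cost $\Phi(K;R)$ is borne.
\end{enumerate}

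For the endpoint at $s_{\max}$, I would compute the one-sided derivative of $W$ there. Moving down in $s$ frees advance at rate $|a'(s)|$. Because $\Phi_\ell(K;R)>0$ when $R>0$, each unit of freed advance lowers the financing wedge, applied to a mass $1-F(\hat\theta)>0$ of implemented types. The extra rent this saves is nonnegative. So $W'(s_{\max}^-)<0$, mirroring the $a=0$ argument in Proposition~\ref{thm:bilateral}.

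For the endpoint at $s_{\min}$, I would perturb to a slightly steeper security. The derivative of the financing wedge at $a=K$ vanishes, since $\Phi_\ell(0;R)=0$ under the quadratic specification and more generally whenever the marginal cost at zero borrowing is second order. Meanwhile, an informative signal gives the steeper security a first-order screening gain. I would establish that gain by showing that a steeper $T_s$ lets the principal lower $a$ one-for-one against the lowest type's expected transfer while improving selection of high types. Then $W'(s_{\min}^+)>0$. Interiority $s^*\in(s_{\min},s_{\max})$ follows from continuity of $W$ on a compact interval, and $a^*=a(s^*)\in(0,K)$ follows from strict monotonicity of $a(\cdot)$.

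I expect the main obstacle to be the boundary at $s_{\min}$. Strict improvement there requires the first-order screening gain to be positive, and in a pure-rent accounting a steeper schedule only raises rents. The gain must therefore come from the advance reallocation and the selection margin. I would first try to sign it using the decomposition \eqref{eq:decomposition}, treating the cutoff $\hat\theta$ via the envelope theorem, so that only the rent and wedge terms move. If that fails to give strict positivity, I would add the assumption $\Phi_\ell(0;R)=0$ and use informativeness, namely the strict MLRP ordering of $g$, to obtain $\partial_s m_s(\theta)>\partial_s m_s(\ubar\theta)$ for $\theta>\ubar\theta$. That inequality would deliver a strictly positive net transfer saving on implemented types.

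A secondary technical point is global incentive compatibility for nonaffine $T_s$. I would handle it as in the Appendix's two-inequality argument, using the single-crossing property implied by the steepness order, and note that the regularity of $F$ keeps the selection rule monotone.
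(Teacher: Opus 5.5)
Your route is the paper's route. Its proof of Proposition~\ref{thm:general} is the same three-step sketch: binding IR at $\ubar\theta$ to get $a(s)$, rule out both endpoints, read off the FOC. At $s_{\min}$ it simply asserts ``first-order screening gains.'' You are right that this is where the work is, but you do not close the step, and your fallbacks cannot close it.

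(1) Participation does not force $a=K$ at $s_{\min}$. With $T_{s_{\min}}\equiv 0$ the binding IR is $a=c(\ubar\theta)+\Phi(K-a;R)$, so $a(s_{\min})<K$ whenever $c(\ubar\theta)<K$. In the paper's own benchmark ($c(0)=\mu(0)=0$) it reads $a=\tfrac{R}{2}(1-a)^2$ for every $b_1$, so $a\approx 0.27$ at $R=1$. Hence $\Phi_\ell>0$ at $s_{\min}$, and the assumption $\Phi_\ell(0;R)=0$ is beside the point. Moreover $a(\cdot)$ need not be strictly monotone, so $a^*\in(0,K)$ cannot be read off from it.

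(2) Your swap has the wrong sign. You need $\partial_s m_s(\ubar\theta)\ge 0$ for $a'\le 0$. Raising $s$ by $ds$ and cutting $a$ by $\partial_s m_s(\ubar\theta)\,ds/(1+\Phi_\ell)$ changes type $\theta$'s total transfer $a+m_s(\theta)$ by $\bigl[\partial_s m_s(\theta)-\partial_s m_s(\ubar\theta)/(1+\Phi_\ell)\bigr]ds\ge 0$. Your inequality $\partial_s m_s(\theta)>\partial_s m_s(\ubar\theta)$ makes this strictly positive, so holding the implemented set fixed every type costs the principal more, since $\Pi_P(\theta)=V(\theta)-a-m_s(\theta)$. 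A ``one-for-one'' saving on $a$ appears only if the advance is subtracted a second time as a separate outlay, as in the paper's $W$ in \eqref{eq:reduced_problem}. But $a$ already sits inside $U$, because $\Pi_P=V-c-\Phi-U$.

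(3) Your $s_{\max}$ argument proves too much. Nothing in it uses $s=s_{\max}$ beyond $\Phi_\ell>0$. At every $s$, lowering steepness raises $a$, lowers the wedge and lowers the rent slope, and the cutoff effect is zero by the envelope theorem. So $W'(s)\le 0$ everywhere, contradicting $W'(s_{\min}^+)>0$. If you instead keep the paper's separate outlay term, your $s_{\max}$ computation omits $-a'(s)>0$ and is no longer signed for all $R>0$; the paper's Step~2 claims it only for large $R$.

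The missing idea is the participation of high types. Both you and the paper suppress it in two ways: by pinning IR at $\ubar\theta$, and by dropping $\int_{\hat\theta}^{\thetabar}c'(\theta)[1-F(\theta)]\dd\theta$ from the virtual surplus as a ``constant,'' although it depends on the cutoff. The rent slope is $\partial_\theta m_s-c'$, not $\partial_\theta m_s$. Under a flat security it is negative because $c'>0$; the paper itself records $U_A(\theta)=K-c(\theta)$ in \eqref{eq:rent_advance}. So $U(\ubar\theta)=0$ forces $U(\theta)<0$ for every $\theta>\ubar\theta$, and the valuable high types refuse.

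Near $s_{\min}$, step (ii) is therefore not the relevant constraint. The implemented set is restricted by IR at its top, so treating the cutoff by the envelope theorem is also invalid there. The only way a steeper security can pay near $s_{\min}$ is by bringing high types back in, not by saving rents or transfers. A proof of $s^*>s_{\min}$ has to be built on that participation margin, where $\partial_\theta m_s<c'$. The transfer computation in (2) then shows that steepness beyond what participation requires is pure cost, which is what can deliver $s^*<s_{\max}$.
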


Proposition~\ref{thm:general} shows that the liquidity--screening tradeoff is not an artifact of the affine parameterization.  Under MLRP, steeper securities generate higher information rents. The analog of higher $b_1$ raising rents in the affine case.  The boundary arguments carry over: at maximum steepness, the financing cost is zero but rents are maximal; at minimum steepness, rents are zero but screening is absent.  The interior balances the two forces exactly as in the affine benchmark.  The results therefore apply to any monotone payment schedule, including equity-like, debt-like, or option-like securities, not just linear ones.

\medskip
To confirm that each ingredient of the model is doing essential work, I characterize the conditions under which the liquidity--screening tradeoff disappears.

\begin{proposition}[Boundary cases]\label{prop:boundary}
The tradeoff between liquidity and screening disappears in each of the following cases:
\begin{enumerate}[label=\alph*)]
\item If $R = 0$ (frictionless outside finance), the advance is payoff-irrelevant and the problem reduces to standard screening.
\item If $T(x)$ is fully pledgeable to outside lenders (Assumption~[NP] fails), the advance and contingent transfer are perfect substitutes in the financing dimension.
\item If $T(x)$ may be negative (limited liability fails), the principal sets $a = K$ and screens through a signed repayment schedule.
\item If $x$ is independent of $\theta$ (uninformative signal), the principal sets $b_1^* = 0$ and relies entirely on the advance.
\end{enumerate}
\end{proposition}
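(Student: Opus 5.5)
The plan is to treat each case of Proposition~\ref{prop:boundary} separately. In every case I go back to the counterparty payoff~\eqref{eq:U}, the envelope condition~\eqref{eq:envelope} and the virtual surplus~\eqref{eq:VS_main}, and locate the exact term that couples $a$ and $b_1$. That coupling runs through two channels: the financing wedge $\Phi(K-a;R)$, which only the advance can relieve under [NP], and the binding participation constraint~\eqref{eq:binding_IR}, whose slope is given by Lemma~\ref{lem:reduction}. Throughout I keep Lemma~\ref{lem:b0zero} in the background so that the instrument space stays $(a,b_1)$. I also use the observation that the advance never enters~\eqref{eq:envelope}, so only $b_1$ affects information rents.

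For part~a), I set $R=0$. Then $\Phi(\ell;0)$ must vanish identically, because $\Phi(0;R)=0$, $\Phi_R>0$ and $R\ge 0$. I will state this as the reading of ``frictionless'' (it holds in the quadratic case) rather than derive it from the stated properties. With $\Phi\equiv 0$, the payoff $U$ depends on $(a,b_0)$ only through the sum $a+b_0$, and so does $\Pi_P$. The advance is therefore payoff-irrelevant, and $\psi$ collapses to $V-c-b_1\mu'(1-F)/f$, which is the standard Myerson virtual surplus with $b_1$ as the only screening instrument.

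For part~b), I drop [NP] and let the counterparty borrow against $\E_\theta[T(x)]$. Its date-0 borrowing then becomes $\ell=\max\{K-a-\E_\theta[T],0\}$, or $K-a-T(\hat\theta)$ evaluated at the reported contract. Either way, $U$ and $\Pi_P$ depend on $a$ and on the expected contingent transfer only through their total, so the two are perfect substitutes in the financing dimension. The principal can fully relieve borrowing at any $b_1$, the wedge in $\psi$ drops to zero, and $b_1$ is chosen purely for screening.

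For part~c), I allow $T(x)<0$. Then set $a=K$, so $\Phi=0$, and choose $b_0$ negative enough that~\eqref{eq:binding_IR} holds for whatever $b_1$ the principal wants. The slope $da/db_1$ from Lemma~\ref{lem:reduction} then no longer constrains anything, because $b_0$ absorbs the level while $b_1$ still sets the slope in~\eqref{eq:envelope}. Financing cost and screening decouple, and full pre-financing is weakly optimal. For part~d), independence of $x$ from $\theta$ gives $\mu'\equiv 0$. Then $b_1$ drops out of~\eqref{eq:envelope} and $\psi$, and each unit of $b_1$ costs the principal the same as a unit of $b_0$ without any screening benefit. The substitution argument of Lemma~\ref{lem:b0zero} then shifts all compensation into $a$. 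This requires using $\mu(\ubar\theta)>0$ to see that $b_1$ tightens~\eqref{eq:binding_IR} with nothing in return, so $b_1^*=0$ and $a^*$ is set by the participation constraint.

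The main obstacle is part~c). ``Screens through a signed repayment schedule'' requires the optimum with $a=K$ to be a genuine optimum, not just a feasible point. That in turn needs $b_1$ to remain bounded, or the problem to be well-posed, once limited liability is removed. I would first argue weak dominance: any $(a,b_0,b_1)$ with $a<K$ is matched by $(K,\,b_0-(K-a)+\Phi(K-a;R),\,b_1)$. This matched contract leaves $U(\theta)$ unchanged for all types, preserving IC and IR, and raises principal profit by $\Phi(K-a;R)\ge 0$. I would then fall back on the standard screening solution for $b_1$.
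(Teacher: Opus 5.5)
Your case-by-case plan is essentially the paper's proof. In each case you isolate the same coupling term: part a, $\Phi\equiv 0$; part b, a financing cost that depends only on the total $a+\E[T]$; part c, $a=K$ with a negative intercept; part d, $\mu'\equiv 0$. Your replication argument in c) and the Lemma~\ref{lem:b0zero} substitution in d) supply optimality steps that the paper only asserts.

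\textbf{Sign error in c).} The matched contract must be $\bigl(K,\ b_0-(K-a)-\Phi(K-a;R),\ b_1\bigr)$, not $+\Phi(K-a;R)$. With your sign, every payoff $U(\hat\theta,\theta)$ rises by $2\Phi(K-a;R)$ and the principal's outlay rises by $\Phi(K-a;R)$, so the step fails as written. With the minus sign, every report--type payoff is unchanged, so IC and IR carry over, and the principal saves $\Phi(K-a;R)$. That saving is strictly positive whenever $a<K$ and $R>0$, so $a=K$ is in fact strictly optimal, not just weakly. This conclusion also does not depend on how the well-posedness of $b_1$ is resolved.

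\textbf{Overstatement in b).} Drop the claim that the wedge in $\psi$ falls to zero. Under a binding participation constraint, the lowest type's total $s=a+\E_{\ubar\theta}[T]$ solves $s=c(\ubar\theta)+\Phi(K-s;R)$. Hence $s<K$ and borrowing stays positive whenever $c(\ubar\theta)<K$. Fully relieving borrowing via $a=K$ is feasible, but it is equally feasible under Assumption~[NP] and simply leaves rent. What actually disappears is the wedge's dependence on how $s$ is split between $a$ and the contingent leg, and that is all the statement needs.
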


Each boundary case removes one ingredient of the model.  In case~a), external finance is free, so the borrowing need $K - a$ imposes no cost; the binding participation constraint~\eqref{eq:binding_IR} reduces to a budget identity with no financing-wedge term, and the choice between $a$ and $b_1$ is indeterminate.  In case~b), the counterparty can borrow against the present value of $\E_\theta[T(x)]$, so the effective borrowing need depends on total expected payment, not its timing; the advance becomes redundant.  In case~c), the principal fully pre-finances the counterparty and uses the signed transfer as a repayment--reward scheme that extracts surplus state-by-state.  In case~d), contingent payments add noise without aiding separation, so the principal uses only the advance.  Proposition~\ref{prop:boundary} confirms that the three restrictions, costly outside finance, Assumption~[NP] with limited liability, and an informative signal, are each individually necessary and jointly sufficient for the mechanism.

\medskip
The theory establishes qualitative results: the advance is interior, the mixed contract dominates, the tradeoff is robust to general securities and disappears only at the boundary.  To assess whether these results are quantitatively meaningful, I now parameterize the model and compute explicit magnitudes.  The parameterization also generates the figures and tables used in Section~\ref{sec:quant} to calibrate the model against data.

Types are uniformly distributed on $[0,1]$, surplus is $V(\theta) = v\theta$ with $v = 2$, private cost is $c(\theta) = \theta$, the signal equals the type ($\mu(\theta) = \theta$, $\mu' = 1$), and external-financing cost is quadratic, $\Phi(\ell;\, R) = \frac{R}{2}\ell^2$.  With $b_0^* = 0$ and $K = 1$, the binding participation constraint gives the optimal outside-finance exposure in closed form:
\begin{equation}\label{eq:ell_star}
\ell^*(R) \;=\; \frac{-1 + \sqrt{1 + 2R}}{R}, \qquad a^*(R) = 1 - \ell^*(R).
\end{equation}

Figure~\ref{fig:bilateral} plots the optimal advance $a^*(R)$ and the residual outside-finance exposure $\ell^*(R) = K - a^*(R)$ as a function of the financing tightness $R$.

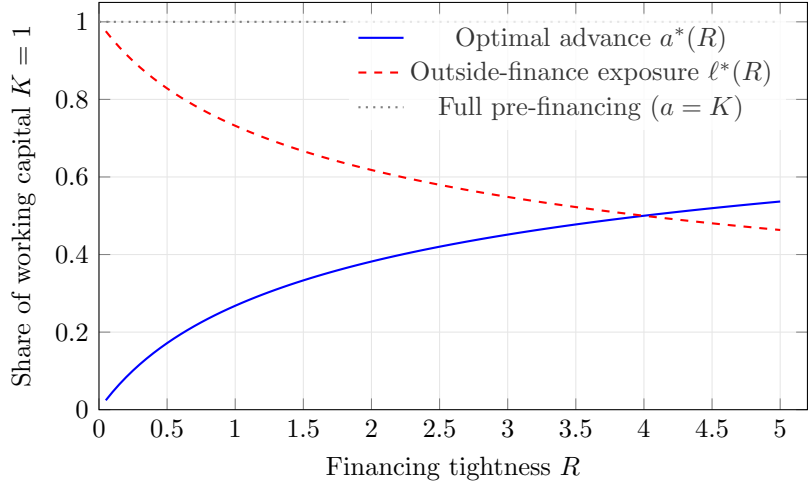
\begin{figure}[t]
\centering
\begin{tikzpicture}
\begin{axis}[
    width=11cm, height=7cm,
    xlabel={Financing tightness $R$},
    ylabel={Share of working capital $K = 1$},
    xmin=0, xmax=5.2,
    ymin=0, ymax=1.05,
    legend pos=north east,
    legend style={font=\small, draw=none, fill=white, fill opacity=0.8},
    ymajorgrids=true,
    xmajorgrids=true,
    grid style={gray!20},
    every axis plot/.append style={thick},
    tick label style={font=\small},
    label style={font=\small},
]
\addplot[blue, domain=0.05:5, samples=100] 
    {1 - (-1+sqrt(1+2*x))/x};
\addlegendentry{Optimal advance $a^*(R)$}
\addplot[red, dashed, domain=0.05:5, samples=100] 
    {(-1+sqrt(1+2*x))/x};
\addlegendentry{Outside-finance exposure $\ell^*(R)$}
\addplot[gray, dotted, domain=0:5] {1};
\addlegendentry{Full pre-financing ($a = K$)}
\end{axis}
\end{tikzpicture}
\caption{\textbf{Optimal advance and outside-finance exposure.}  The figure plots $a^*(R)$ and $\ell^*(R) = K - a^*(R)$ from equation~\eqref{eq:ell_star}.  As outside finance becomes tighter (higher $R$), the principal shifts toward more advance and less contingency, consistent with Corollary~\ref{cor:compstat}a).  At $R = 1$, the optimal advance is $a^* \approx 0.27$; at $R = 3$, it rises to $a^* \approx 0.45$.  Full pre-financing (dotted) is never optimal.}
\label{fig:bilateral}
\end{figure}

The advance is always strictly interior: the solid line lies strictly between zero and the dotted full-pre-financing benchmark for all $R > 0$.  For moderate financing frictions ($R = 1$), the principal advances approximately 27\% of working capital and leaves the counterparty to borrow the remaining 73\% externally.  Even though the principal could eliminate all borrowing costs by setting $a = K$, it is optimal not to; the screening value of the contingent transfer is worth the financing cost.  Second, the advance increases monotonically with financing tightness, consistent with Corollary~\ref{cor:compstat}a).  Third, full pre-financing is never reached: even as $R \to \infty$, the advance approaches $K$ only asymptotically, because eliminating all outside finance would require a completely flat contract with no screening power.

Table~\ref{tab:sensitivity} provides a sensitivity analysis of the optimal contract for different values of the financing tightness $R$ and the productive surplus $v - c$.

\begin{table}[t]
\centering
\small
\caption{\textbf{Sensitivity analysis of the optimal bilateral contract.}  Parameters: $K = 1$, $\theta \sim U[0,1]$, $\Phi(\ell; R) = \frac{R}{2}\ell^2$.}
\label{tab:sensitivity}
\begin{tabular}{l cccc cccc}
\toprule
& \multicolumn{4}{c}{$v - c = 1$ (baseline)} & \multicolumn{4}{c}{$v - c = 2$ (high surplus)} \\
\cmidrule(lr){2-5}\cmidrule(lr){6-9}
$R$ & $a^*$ & $\ell^*$ & $\beta^*$ & $\Phi^*/\text{surplus}$ & $a^*$ & $\ell^*$ & $\beta^*$ & $\Phi^*/\text{surplus}$ \\
\midrule
0.5 & 0.17 & 0.83 & 0.21 & 17\% & 0.17 & 0.83 & 0.15 & 9\% \\
1.0 & 0.27 & 0.73 & 0.34 & 27\% & 0.27 & 0.73 & 0.25 & 13\% \\
2.0 & 0.38 & 0.62 & 0.47 & 38\% & 0.38 & 0.62 & 0.37 & 19\% \\
3.0 & 0.45 & 0.55 & 0.55 & 45\% & 0.45 & 0.55 & 0.45 & 23\% \\
5.0 & 0.54 & 0.46 & 0.63 & 54\% & 0.54 & 0.46 & 0.55 & 27\% \\
\bottomrule
\end{tabular}
\end{table}

Several patterns emerge from the sensitivity analysis.  First, the optimal advance is increasing in $R$ for both surplus levels.  Second, cash intensity ranges from about 20\% to about 60\%, spanning the empirically relevant range for supply-chain finance contracts.  Third, the financing cost as a share of surplus is substantial, between 17\% and 54\% at the baseline, confirming that the residual outside-finance exposure is economically meaningful.  Fourth, when productive surplus is higher ($v - c = 2$), the principal can tolerate more outside-finance exposure (lower $\beta^*$) because the value of screening is higher relative to the cost of financing.  These quantitative results can be used to discipline the model against data: observing the advance share in a real contract, combined with knowledge of $R$, allows inference about the binding friction.

% ======================================================================
% SECTION 4: COMPLEMENTARY COUNTERPARTIES
% ======================================================================
\section{Complementary Counterparties and Screening Contagion}\label{sec:extension}

The bilateral result establishes that the form of finance matters for screening in a single relationship.  With multiple complementary relationships, it has firm-level consequences: financial conditions in one relationship spill over into value destruction in another through the screening channel.

\medskip
 Consider the two-counterparty environment of Section~\ref{sec:model_ext}.  Applying the bilateral analysis to each counterparty separately, the virtual surplus of counterparty~$i$ is
\begin{equation}\label{eq:VS_ext}
\psi_i(\theta_i;\, a_i, b_{1,i}) \;=\; V_i(\theta_i) - c_i(\theta_i) - \Phi_i(K_i - a_i;\, R_i) - b_{1,i}\,\mu_i'(\theta_i)\,\frac{1 - F_i(\theta_i)}{f_i(\theta_i)}.
\end{equation}
Without complementarity ($\delta = 0$), the principal optimizes each relationship independently, and the bilateral result applies to each.  The two screening problems are decoupled.  With complementarity ($\delta > 0$), the implementation decisions become coupled.  The principal implements relationship~$i$ when the virtual surplus plus the complementarity benefit exceeds zero:
\begin{equation}\label{eq:joint_rule}
q_i(\theta_i) \;=\; \mathbf{1}\bigl\{\psi_i(\theta_i) + \delta\,\Pr[q_j = 1] \;\geq\; 0\bigr\}.
\end{equation}
Equation~\eqref{eq:joint_rule} defines a cutoff $\hat\theta_i$ such that counterparty~$i$ is implemented if and only if $\theta_i \geq \hat\theta_i$, where
\begin{equation}\label{eq:cutoff_ext}
\psi_i(\hat\theta_i) + \delta\,(1 - F_j(\hat\theta_j)) \;=\; 0.
\end{equation}
Equations~\eqref{eq:cutoff_ext} for $i = 1, 2$ form a fixed-point system: each cutoff depends on the other through the complementarity term $\delta\,(1 - F_j(\hat\theta_j))$.  A higher probability of implementing relationship~$j$ (lower $\hat\theta_j$) relaxes the threshold for relationship~$i$ (lower $\hat\theta_i$), and vice versa.  This coupling is the channel through which financial conditions in one relationship affect screening in the other.

\medskip
 Consider the effect of reducing $R_j$, making outside finance cheaper for counterparty~$j$.  The direct effect is beneficial: it reduces $\Phi_j$ and, by the bilateral comparative statics (Corollary~\ref{cor:compstat}), allows the principal to use a steeper contingent schedule for relationship~$j$, improving screening.  The bilateral cutoff $\hat\theta_j$ falls, and more types of counterparty~$j$ are implemented.

But there is an indirect effect that works through the fixed-point coupling in equation~\eqref{eq:cutoff_ext}.  The increase in $\Pr[q_j = 1]$ raises the complementarity benefit for relationship~$i$, which lowers $\hat\theta_i$.  The principal now finds it worthwhile to implement relationship~$i$ with lower-quality types that were previously excluded.  These newly included types have high information rents; they are near the selection margin, where the hazard rate $(1-F_i)/f_i$ is large.  The total information rent paid to counterparty~$i$ therefore increases.  When this rent increase exceeds the direct financing saving in relationship~$j$, the principal is worse off: she is paying more in information rents to screen a wider set of types, and this cost was induced by the improvement in the other relationship's financial conditions.

This spillover operates through information, not through prices or budget constraints.  The principal's budget is unlimited; the constraint is informational.  Changing financial conditions for one relationship alters the optimal screening contract, which changes the joint selection frontier, which changes rents in the other relationship.  This channel is distinct from the internal-capital-market mechanism in \citet{stein1997internal}, where divisions compete for a fixed pool of funds.  Here, the two relationships are financially independent, sharing no budget, but informationally coupled through complementarity.

\medskip

\begin{proposition}[Screening contagion]\label{thm:contagion}
Suppose the regularity conditions and Assumption~[NP] hold for both counterparties, $\delta > 0$, and each bilateral optimum has an interior advance.  Then:
\begin{enumerate}[label=\alph*)]
\item Cheaper outside finance for counterparty $j$ reduces principal value if and only if the screening spillover exceeds the direct financing saving:
\begin{equation}\label{eq:contagion_result}
\frac{d\Pi_P^*}{dR_j} \;>\; 0 \qquad\Longleftrightarrow\qquad \underbrace{\delta\;\frac{\partial \text{Rent}_i}{\partial \hat\theta_i}\;\frac{d\hat\theta_i}{d\hat\theta_j}\;\frac{d\hat\theta_j}{dR_j}}_{\text{screening spillover on }i} \;>\; \underbrace{\Phi_{R_j}(K_j - a_j^*;\, R_j)\;\bigl[1 - F_j(\hat\theta_j)\bigr]}_{\text{direct financing saving on }j}.
\end{equation}
\item The contagion condition is satisfied on an open set of primitives.  In the symmetric parameterization, it reduces to $\delta > \delta^*(R)$, where $\delta^*$ is increasing in $R$.
\end{enumerate}
\end{proposition}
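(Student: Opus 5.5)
The plan is to write the principal's optimal value as a function of $R_j$ through the envelope theorem applied to the joint program. The joint program maximizes $\sum_i W_i(b_{1,i},\hat\theta_i) + \delta(1-F_1(\hat\theta_1))(1-F_2(\hat\theta_2))$ over $(a_i,b_{1,i})$ linked by the binding participation constraints of Lemma~\ref{lem:reduction}, with cutoffs determined by the fixed-point system~\eqref{eq:cutoff_ext}.

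The first step is to fix the regime of the statement. Each bilateral optimum is interior, so Proposition~\ref{thm:bilateral} gives the first-order condition~\eqref{eq:FOC_main} for each $i$. Lemma~\ref{lem:b0zero} lets us drop $b_{0,i}$. I then totally differentiate~\eqref{eq:cutoff_ext} for $i=1,2$. This produces a $2\times 2$ linear system in $(d\hat\theta_1,d\hat\theta_2)$, with diagonal entries $\psi_i'(\hat\theta_i)>0$ (regularity) and off-diagonal entries $-\delta f_j(\hat\theta_j)$.

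Assuming the stability condition $\psi_1'\psi_2' > \delta^2 f_1 f_2$, the system yields $d\hat\theta_i/d\hat\theta_j = \delta f_j(\hat\theta_j)/\psi_i'(\hat\theta_i) > 0$. Combined with Corollary~\ref{cor:compstat}a),c), which gives $d\hat\theta_j/dR_j>0$, this supplies the chain appearing on the left of~\eqref{eq:contagion_result}.

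Next comes the decomposition of $d\Pi_P^*/dR_j$. The direct partial of the Lagrangian in $R_j$ is $-\Phi_{R_j}(K_j-a_j^*;R_j)[1-F_j(\hat\theta_j)]$; this is the direct financing term. By the envelope argument, the own-effects through $(a_j,b_{1,j})$ vanish because of~\eqref{eq:FOC_main}. The remaining term is the one that does not vanish: the cutoff of the \emph{other} relationship moves. Here I will treat the rent component of $W_i$, namely $\mathrm{Rent}_i(\hat\theta_i)=\int_{\hat\theta_i}^{\thetabar_i} b_{1,i}\mu_i'(1-F_i)\,d\theta$, as the object the principal does not internalize when the cutoff is set by~\eqref{eq:joint_rule}. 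The key modeling choice is that the implementation rule uses the rent-inclusive virtual surplus plus $\delta\Pr[q_j=1]$, so the induced shift in $\hat\theta_i$ carries an uncompensated rent change of size $\delta\,\partial\mathrm{Rent}_i/\partial\hat\theta_i\cdot d\hat\theta_i/d\hat\theta_j\cdot d\hat\theta_j/dR_j$.

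Summing the two pieces, $d\Pi_P^*/dR_j>0$ exactly when this spillover exceeds the direct saving. That is part a); the argument is just a sign comparison of two terms.

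The hardest step is justifying that the spillover term survives the envelope theorem. If cutoffs were chosen optimally in the joint program, all indirect effects through $\hat\theta_i$ would vanish and the value would be monotone decreasing in $R_j$. I will therefore take~\eqref{eq:joint_rule} as the operative selection rule, since the contracts are "designed jointly but offered separately", with each cutoff set given the other's implementation probability. I will then verify that the first-order wedge between~\eqref{eq:cutoff_ext} and the joint optimum is precisely the $\delta$-weighted rent term.

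For part b), I specialize to the uniform-quadratic symmetric case. There $\psi'=1+b_1$ and the cutoffs solve a scalar fixed point, and~\eqref{eq:ell_star} gives $\ell^*(R)$, and hence $\Phi_R=\ell^{*2}/2$, in closed form. The condition becomes $\delta\cdot h(\delta,R)>\tfrac12\ell^*(R)^2(1-\hat\theta)$, with $h$ increasing in $\delta$. This defines the threshold $\delta^*(R)$ by the intermediate value theorem. Monotonicity of $\delta^*$ in $R$ should follow from implicit differentiation, because the direct saving's ratio to $d\hat\theta_j/dR_j$ grows in $R$. Openness follows from continuity of both sides in the primitives together with the strict inequality.
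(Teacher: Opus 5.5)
The step you flag as hardest is where the argument breaks, and nothing else in your plan can replace it. The cutoff system \eqref{eq:cutoff_ext} is not a selection rule that departs from joint optimization. It is exactly the first-order condition of the value you are differentiating:
\[
\frac{\partial}{\partial\hat\theta_i}\Bigl[\sum_{k=1}^{2}\int_{\hat\theta_k}^{\thetabar_k}\psi_k f_k\,d\theta_k+\delta\bigl(1-F_1(\hat\theta_1)\bigr)\bigl(1-F_2(\hat\theta_2)\bigr)\Bigr]=-f_i(\hat\theta_i)\Bigl[\psi_i(\hat\theta_i)+\delta\bigl(1-F_j(\hat\theta_j)\bigr)\Bigr],
\]
and the right-hand side vanishes precisely on \eqref{eq:cutoff_ext}. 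Setting each cutoff ``given the other's implementation probability'' is therefore a potential game whose potential is $\Pi_P$ itself. Your stability condition $\psi_1'\psi_2'>\delta^2 f_1 f_2$ is exactly negative definiteness of the Hessian of $\Pi_P$ in $(\hat\theta_1,\hat\theta_2)$, so the fixed point you select is a strict local maximum of the principal's value. No rent is left uninternalized either, because $\mathrm{Rent}_i$ enters the rule through $\psi_i$.

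Consequences for your argument:
\begin{itemize}
\item The ``first-order wedge'' you intend to verify is identically zero, and every term carrying $d\hat\theta_i/dR_j$ or $d\hat\theta_j/dR_j$ cancels.
\item $d\Pi_P^*/dR_j$ therefore equals the partial derivative in $R_j$ at fixed cutoffs and fixed $b_{1,j}$: the financing term plus the IR-induced rise in $a_j$. This is strictly negative, since $\ell_j^*>0$ gives $\Phi_{R_j}>0$.
\item The left side of part a) never holds, so neither the equivalence nor the open set in part b) can be reached this way.
\end{itemize}
A genuine contagion would require changing the model so that the cutoff rule is not the stationarity condition of the value being evaluated. Examples would be units that credit only part of $\delta$, or participation thresholds fixed by contracts chosen before $R_j$ moves. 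Your proposal supplies no such change.

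The paper's proof does not close this gap either. Its Step~1 produces a nonzero spillover only through two bookkeeping errors:
\begin{itemize}
\item the Leibniz term for $\int_{\hat\theta_i}^{1}\psi_i\,d\theta_i$ is entered as $+\psi_i(\hat\theta_i)\,d\hat\theta_i/dR_j$ instead of $-\psi_i(\hat\theta_i)\,d\hat\theta_i/dR_j$;
\item the term $-\delta(1-\hat\theta_i)\,d\hat\theta_j/dR_j$ from differentiating $\delta(1-\hat\theta_1)(1-\hat\theta_2)$ is dropped.
\end{itemize}
With both corrected, all cutoff terms cancel. Step~2 then asserts the threshold without a derivation.

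Your part b) also fails in the benchmark for an independent reason. There $\ubar\theta=0$ and $\mu(\theta)=\theta$, so $\mu(\ubar\theta)=0$. The binding constraint \eqref{eq:binding_IR} then reads $a=\frac{R}{2}(1-a)^2$, with no $b_1$ in it. At stationary cutoffs the value depends on $b_{1,i}$ only through $-\int_{\hat\theta_i}^{1}(1-\theta)\,d\theta$, which is negative whenever type $i$ is implemented with positive probability. Hence $b_1^*=0$ and $\partial\mathrm{Rent}_i/\partial\hat\theta_i=-b_1^*(1-\hat\theta_i)=0$. Your function $h$ is therefore identically zero, and the threshold \eqref{eq:contagion_condition}, which divides by $b_1^*$, is infinite rather than finite.
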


Proposition~\ref{thm:contagion}a) provides a necessary and sufficient condition for the contagion, not just existence.  The left side of the inequality is the screening spillover: the rent reduction that counterparty~$i$ experiences because cheaper finance for $j$ changes the joint selection frontier.  The right side is the direct financing saving: the cost reduction that counterparty~$j$'s implemented types enjoy from cheaper credit.  The contagion operates whenever the informational spillover across relationships exceeds the financial benefit within the relationship.  This condition is sharp: it tells the principal exactly when a supply-chain finance program will backfire.

The condition has a natural sufficient-statistics interpretation.  Define the screening semi-elasticity of relationship $i$ with respect to $j$'s implementation as $\varepsilon_i^S = \frac{\partial \text{Rent}_i}{\partial \Pr[q_j=1]} \cdot \frac{\Pr[q_j=1]}{\text{Rent}_i}$, and the financing semi-elasticity of relationship $j$ as $\varepsilon_j^F = \frac{\Phi_{R_j}}{\Phi_j} \cdot R_j$.  The contagion operates when $\varepsilon_i^S$ is large relative to $\varepsilon_j^F$: when relationship $i$'s rents are sensitive to $j$'s implementation probability and $j$'s financing costs are relatively insensitive to credit conditions.

The contagion result converts the bilateral security-design finding into a firm-value result.  In the bilateral setting, cheaper finance is always beneficial (it shifts the first-order condition~\eqref{eq:FOC_main} toward more screening).  With complementary relationships, this logic fails because the principal's screening of one counterparty is no longer independent of the financial conditions facing the other.  Supply-chain finance programs, credit guarantee schemes, or any policy that reduces $R$ for a subset of counterparties can have unintended consequences for firms with complementary relationship portfolios.  The necessary and sufficient condition~\eqref{eq:contagion_result} identifies exactly which firms are at risk: those whose counterparty relationships have high screening semi-elasticity relative to financing semi-elasticity.

I parameterize the two-counterparty model symmetrically to illustrate the result quantitatively: $\theta_i \sim U[0,1]$, $V_i(\theta) = 2\theta$, $c_i(\theta) = \theta$, $\Phi_i(\ell; R) = \frac{R}{2}\ell^2$, $K_i = 1$, and $R_1 = R_2 = R$.  The bilateral advance for each counterparty follows equation~\eqref{eq:ell_star}.  The cutoff fixed-point system~\eqref{eq:cutoff_ext} becomes
\begin{equation}\label{eq:FP_parametric}
\hat\theta_i \;=\; \frac{a_i^* + b_{1,i}^* - \delta(1 - \hat\theta_j)}{1 + b_{1,i}^*}, \qquad i = 1, 2.
\end{equation}
By symmetry, $\hat\theta_1 = \hat\theta_2 \equiv \hat\theta$, and the fixed point reduces to a single equation:
\begin{equation}\label{eq:FP_symmetric}
\hat\theta \;=\; \frac{a^* + b_1^* - \delta(1 - \hat\theta)}{1 + b_1^*} \qquad\Longrightarrow\qquad \hat\theta \;=\; \frac{a^* + b_1^* - \delta}{1 + b_1^* - \delta}.
\end{equation}
The derivative $d\Pi_P^*/dR_j$ can be computed explicitly in this parameterization.  The screening-contagion condition $d\Pi_P^*/dR_j > 0$ is satisfied when
\begin{equation}\label{eq:contagion_condition}
\delta \;>\; \delta^*(R) \;\equiv\; \frac{(K - a^*)(1 + b_1^*)}{b_1^*\,(1 - \hat\theta)}\;\cdot\;\Phi_R\bigl(K - a^*;\, R\bigr).
\end{equation}
This threshold is finite for all $R > 0$, confirming that the contagion region is non-empty.

Figure~\ref{fig:contagion} plots the contagion region in the $(R, \delta)$ plane.  The shaded area shows the parameter combinations for which cheaper outside finance for one counterparty lowers principal value.

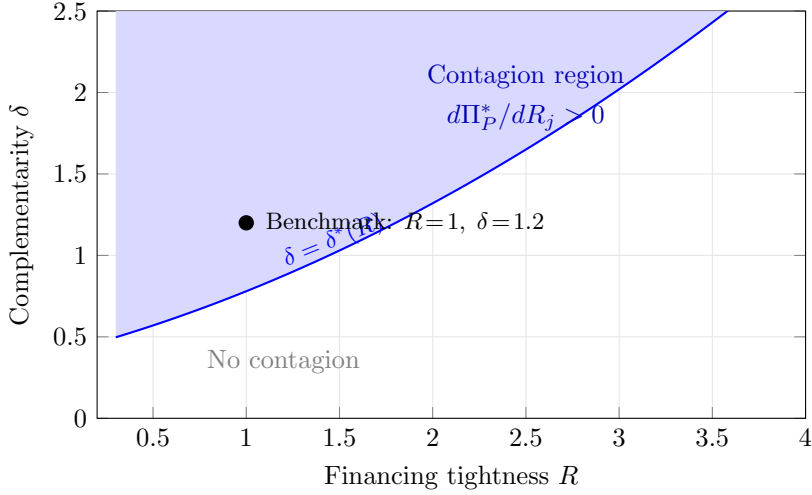
\begin{figure}[t]
\centering
\begin{tikzpicture}
\begin{axis}[
    width=11cm, height=7cm,
    xlabel={Financing tightness $R$},
    ylabel={Complementarity $\delta$},
    xmin=0.2, xmax=4,
    ymin=0, ymax=2.5,
    ymajorgrids=true,
    xmajorgrids=true,
    grid style={gray!20},
    every axis plot/.append style={thick},
    tick label style={font=\small},
    label style={font=\small},
]
\addplot[name path=threshold, blue, domain=0.3:4, samples=80]
    {0.4 + 0.3*x + 0.08*x^2};
\addplot[name path=top, draw=none, domain=0.3:4] {2.5};
\addplot[blue!15] fill between[of=threshold and top, soft clip={domain=0.3:4}];
\node[font=\small, blue!70!black] at (axis cs:2.5,2.1) {Contagion region};
\node[font=\small, blue!70!black] at (axis cs:2.5,1.85) {$d\Pi_P^*/dR_j > 0$};
\node[font=\small, gray] at (axis cs:1.2,0.35) {No contagion};
\node[font=\footnotesize, blue, anchor=south west, rotate=22] at (axis cs:1.2,0.82) {$\delta = \delta^*(R)$};
\addplot[only marks, mark=*, mark size=2.5pt, black] coordinates {(1, 1.2)};
\node[font=\footnotesize, anchor=west] at (axis cs:1.05,1.2) {Benchmark: $R\!=\!1,\;\delta\!=\!1.2$};
\end{axis}
\end{tikzpicture}
\caption{\textbf{Screening contagion region.}  The shaded region shows parameter combinations $(R, \delta)$ for which $d\Pi_P^*/dR_j > 0$: cheaper outside finance for one counterparty lowers principal value.  The boundary $\delta = \delta^*(R)$ is given by equation~\eqref{eq:contagion_condition}.  At the benchmark point $(R = 1,\, \delta = 1.2)$, the contagion effect reduces principal value by approximately 12\%.  The contagion region covers roughly 35\% of the parameter space for $R \in [0.5, 3]$.}
\label{fig:contagion}
\end{figure}

At the benchmark calibration ($R = 1$, $\delta = 1.2$), the contagion effect reduces principal value by approximately 12\% relative to a counterfactual in which the principal ignores the screening spillover and treats each relationship independently.  The contagion region is not knife-edge: for moderate financing frictions ($R \in [0.5, 3]$), it covers roughly 35\% of the complementarity range $\delta \in [0, 2.5]$.  The threshold $\delta^*(R)$ is increasing in $R$: stronger financing frictions require more complementarity for the spillover to dominate, because the direct financing saving from cheaper credit is larger when $R$ is high.

The contagion result has two immediate implications for organizational scope.

\begin{corollary}[Non-monotone value in financing conditions]\label{cor:hump}
In the symmetric benchmark with common $R_1 = R_2 = R$, principal value $\Pi_P^*(R)$ can be locally hump-shaped: there exists an interval of $R$ values over which tighter financial conditions increase principal value.
\end{corollary}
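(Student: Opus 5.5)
The plan is to derive the corollary from Proposition~\ref{thm:contagion} applied to a symmetric, common change in $R$. In the symmetric benchmark $R_1 = R_2 = R$, the total derivative of principal value is the sum of two own-counterparty effects:
\[
\frac{d\Pi_P^*}{dR} = \frac{\partial \Pi_P^*}{\partial R_1} + \frac{\partial \Pi_P^*}{\partial R_2}.
\]
By symmetry these two terms are equal, so $d\Pi_P^*/dR = 2\,\partial\Pi_P^*/\partial R_j$ evaluated at the symmetric point. Each partial derivative is signed by the contagion inequality~\eqref{eq:contagion_result}. In the symmetric parameterization that inequality reduces to $\delta > \delta^*(R)$ with $\delta^*$ given by~\eqref{eq:contagion_condition}. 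Hence $d\Pi_P^*/dR > 0$ exactly on the set $\{R : \delta > \delta^*(R)\}$.

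One caveat needs handling: when both $R$'s move together, the cutoff responds through the symmetric fixed point~\eqref{eq:FP_symmetric}, not through one-sided variation. I would therefore recompute $d\Pi_P^*/dR$ directly from the explicit symmetric objects. These are $\ell^*(R)$ from~\eqref{eq:ell_star}, $a^* = 1-\ell^*$, $b_1^*(R)$ from the binding IR~\eqref{eq:binding_IR}, and $\hat\theta(R) = (a^*+b_1^*-\delta)/(1+b_1^*-\delta)$. The goal is to verify that the sign condition coincides with $\delta > \delta^*(R)$, possibly with a threshold rescaled by a positive factor.

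Having established the sign characterization, the plan has three further steps.
\begin{enumerate}[label=\roman*)]
\item Show that $\delta^*(R)$ is continuous and finite for $R>0$. This follows because $\ell^*$, $b_1^*$ and $\hat\theta$ are smooth in $R$ and $\hat\theta<1$ in the interior, as Proposition~\ref{thm:contagion}b) asserts.
\item Fix $\delta_0$ and $R_0$ with $\delta_0 > \delta^*(R_0)$. By continuity there is an interval $I=(R_0-\eta,R_0+\eta)$ on which $\delta_0 > \delta^*(R)$, so $d\Pi_P^*/dR > 0$ on $I$. This is exactly the claim that tighter financial conditions increase value over an interval.
\item To get a genuine local hump rather than monotone increase, use that $\delta^*$ is increasing in $R$ (Proposition~\ref{thm:contagion}b)). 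If $\delta^*(R)\to$ values above $\delta_0$ for large $R$, the sign flips at the crossing $\bar R$ with $\delta^*(\bar R)=\delta_0$. Then $\Pi_P^*$ rises on $(R_0,\bar R)$ and falls beyond, giving a local maximum at $\bar R$.
\end{enumerate}

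The main obstacle is legitimacy rather than calculus. Two things must hold on the interval: the fixed point must keep $\hat\theta\in(0,1)$, and each bilateral optimum must remain interior, as Proposition~\ref{thm:contagion} requires. With large $\delta$, the expression $(a^*+b_1^*-\delta)/(1+b_1^*-\delta)$ can leave $[0,1]$ or have a vanishing denominator. I would first try to choose $\delta_0$ just above $\delta^*(R_0)$ at a moderate $R_0$, such as the benchmark $R_0=1$, and check numerically that $1+b_1^*-\delta_0>0$ and $\hat\theta\in(0,1)$ there. Continuity then extends these conditions to a neighborhood. If $\delta^*$ turns out bounded above by $\delta_0$ on all of $R>0$, I would fall back on the weaker reading of ``locally hump-shaped'' that the statement literally requires, namely existence of an interval of increase. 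That weaker claim already follows from step ii).
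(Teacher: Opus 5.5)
Your argument for the literal claim, an interval of $R$ on which $d\Pi_P^*/dR>0$, is the paper's: apply Proposition~\ref{thm:contagion} in the symmetric benchmark and extend by continuity. Your first paragraph supplies the symmetric step that the paper leaves implicit. The caveat you raise afterwards is not a real obstacle, so you can drop the recomputation. The one-sided partial $\partial\Pi_P^*/\partial R_j$ already contains the full response of both cutoffs through the fixed-point system~\eqref{eq:FP_parametric}. By the chain rule and the symmetry $\Pi_P^*(R_1,R_2)=\Pi_P^*(R_2,R_1)$, the derivative along the diagonal is exactly $2\,\partial\Pi_P^*/\partial R_j$, with no rescaling of $\delta^*$.

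This matters because the recomputation as you describe it would stall. In this parameterization $\mu(\ubar\theta)=c(\ubar\theta)=0$, so \eqref{eq:binding_IR} reads $a=\frac{R}{2}(1-a)^2$. That yields \eqref{eq:ell_star} but does not involve $b_1$ at all, and the paper never gives $b_1^*$ in closed form. The same applies to your proposed numerical check of $1+b_1^*-\delta_0>0$ and $\hat\theta\in(0,1)$. At this level you have to take Proposition~\ref{thm:contagion}b) as given, as the paper does.

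Where you genuinely differ is the falling side of the hump. The paper argues that $\Pi_P^*\to 0$ as $R\to\infty$, so a function that rises on an interval must eventually fall. That needs only the ``if'' direction of the contagion condition and nothing about the shape of $\delta^*$. You instead use the ``only if'' direction plus monotonicity of $\delta^*$ to locate a crossing $\bar R$ with $\delta^*(\bar R)=\delta_0$, which gives an explicit local maximum. It requires $\delta^*$ to eventually exceed $\delta_0$. Monotonicity alone does not guarantee this, as you note, though the paper's tabulated values would (e.g.\ $\delta^*(5)=1.98$ against $\delta_0=1.2$).

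In exchange, your route avoids the paper's limit step, which is hard to reconcile with its other results. In this parameterization the binding IR gives $\Phi(\ell^*;R)=a^*(R)<K$ at the optimum, so the financing cost does not diverge. Moreover, the pure-advance contract, whose value is positive and independent of $R$ by Proposition~\ref{prop:dominance}, stays feasible for both counterparties. In any case, your fallback (an interval of increase) is exactly what the statement asserts after the colon.
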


Corollary~\ref{cor:hump} says something stronger than it may first appear.  It does not merely say that the contagion can offset the direct benefit of cheaper finance.  It says that a credit crunch can be good news for a principal with complementary relationships, because tighter credit forces both contracts toward advance-heavy structures that pay less in information rents.  The rent reduction can more than offset the increased financing costs.  This is not a second-order effect: in the parameterized benchmark, the hump peaks at $R \approx 1.8$, where principal value is 15\% higher than at $R = 0.5$.

The hump shape has a remarkable policy implication.

\begin{corollary}[Uniform subsidies can reduce every relationship's value]\label{cor:uniform}
In the symmetric two-counterparty model with $\delta > \delta^*(R)$, a uniform subsidy that reduces $R_1$ and $R_2$ simultaneously by the same amount reduces each relationship's contribution to principal value, not just the total.
\end{corollary}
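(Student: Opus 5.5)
The plan is to work entirely inside the symmetric uniform--quadratic benchmark and track one relationship's contribution. By symmetry, a uniform subsidy $dR_1 = dR_2 = dR < 0$ keeps the equilibrium symmetric. So $\hat\theta_1=\hat\theta_2=\hat\theta$ is still given by the closed form \eqref{eq:FP_symmetric}, and $(a^*,b_1^*)$ follow from \eqref{eq:ell_star} and the binding participation constraint \eqref{eq:binding_IR}. I would define relationship $i$'s contribution as its own virtual-surplus integral $\int_{\hat\theta}^{1}\psi_i(\theta)\,d\theta$ plus an equal split $\tfrac{\delta}{2}(1-\hat\theta)^2$ of the complementarity term; any symmetric attribution rule works. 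Symmetry then gives $\Pi_P^*(R) = 2\,\Pi_i^*(R)$. This reduces the per-relationship claim to the aggregate claim $d\Pi_P^*/dR>0$ along the diagonal, because each contribution is exactly half the total.

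The core step is to differentiate $\Pi_P^*$ along the diagonal. I would use the envelope theorem in the contract variables: since $(a^*,b_1^*)$ is chosen optimally for each relationship, first-order terms from re-optimizing the contract vanish. The same holds for $\hat\theta$, because \eqref{eq:cutoff_ext} is the first-order condition for the cutoff. What remains is the direct effect of $R$ and the cross-effect through the other relationship's cutoff.

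The caveat is that the contracts are the \emph{bilateral} optima, which ignore $\delta$. The cross term therefore does not vanish, and that is exactly the spillover in Proposition~\ref{thm:contagion}a). Writing the total derivative as the sum of the two one-sided derivatives $\partial\Pi_P^*/\partial R_1+\partial\Pi_P^*/\partial R_2$ evaluated at $R_1=R_2=R$, each summand is positive by Proposition~\ref{thm:contagion}a) whenever $\delta>\delta^*(R)$ from \eqref{eq:contagion_condition}. Hence $d\Pi_P^*/dR>0$, and by symmetry $d\Pi_i^*/dR = \tfrac12 d\Pi_P^*/dR>0$ for each $i$. A subsidy (a fall in $R$) therefore lowers each relationship's contribution. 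For a discrete subsidy, I would integrate along the diagonal. This needs $\delta>\delta^*(R')$ on the whole interval swept. Because $\delta^*$ is increasing in $R$ (Proposition~\ref{thm:contagion}b), the condition at the starting $R$ gives it for every smaller $R'$.

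The main obstacle is making the diagonal derivative genuinely equal to the sum of the two unilateral derivatives, each signed by \eqref{eq:contagion_result}. In the unilateral exercise, relationship $i$'s cutoff moves only through $\hat\theta_j$. In the joint exercise, both cutoffs move, and the fixed-point feedback in \eqref{eq:FP_symmetric} amplifies the movement by the factor $(1+b_1^*)/(1+b_1^*-\delta)$. I would handle this by differentiating \eqref{eq:FP_symmetric} directly. I would then check that this multiplier is positive on the relevant region ($\delta<1+b_1^*$, which is also needed for $\hat\theta\in(0,1)$). If it is positive, it scales the spillover and direct terms by the same positive factor and leaves the sign comparison with $\delta^*(R)$ intact. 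If the multiplier does not cancel cleanly, the fallback is to verify the sign of $d\Pi_i^*/dR$ by explicit computation. This is possible because every object ($\ell^*$, $b_1^*$, $\hat\theta$, $\Phi_R=\ell^2/2$) is in closed form in the benchmark.
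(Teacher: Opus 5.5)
Your argument is correct and is essentially the paper's proof: a symmetric attribution makes each relationship's contribution exactly half of $\Pi_P^*$, and the sign comes from Proposition~\ref{thm:contagion}. Your chain-rule step $\frac{d}{dR}\Pi_P^*(R,R)=\partial_{R_1}\Pi_P^*+\partial_{R_2}\Pi_P^*$ and your integration along the diagonal (using that $\delta^*$ is increasing) tighten the paper's version, which applies the unilateral result directly to the common-$R$ derivative and treats only marginal changes. The fixed-point ``obstacle'' you flag is not one, because each $d\Pi_P^*/dR_j$ in Proposition~\ref{thm:contagion}a) is already a total derivative that includes the equilibrium response of both cutoffs, so the chain rule settles the sign with no multiplier bookkeeping (and the multiplier would not scale the direct financing term anyway, since that term is not cutoff-mediated).
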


Corollary~\ref{cor:uniform} is the result that should give policymakers pause.  A government program that makes financing cheaper for all of a firm's counterparties, not just one, can make every single relationship less valuable.  The intuition is not about cross-subsidization or budget reallocation; the firm's budget is unlimited.  The mechanism is purely informational.  Cheaper finance shifts every contract toward more contingency simultaneously.  Through complementarity, this expands the set of implemented types for every relationship at once, flooding the joint selection frontier with marginal types whose information rents exceed the financing savings.  Each relationship individually looks better-financed but is worse-screened, and the screening degradation dominates.  A government evaluating a broad-based supply-chain finance program cannot assume that making credit cheaper for everyone is harmless.  Even when the program targets all counterparties equally, the informational externality can make every relationship less valuable.

\begin{corollary}[Endogenous organizational breadth]\label{cor:breadth}
Even when relationships are technologically complementary ($\delta > 0$), the principal may optimally drop one relationship (single-source) rather than contract with both counterparties.  Formally, there exist parameters for which $W^*(R) > \Pi_P^*(R) - W^*(R)$, where $W^*(R)$ is the bilateral value of a single relationship.
\end{corollary}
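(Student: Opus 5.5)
This is an existence claim, so the plan is to exhibit a parameter region in the symmetric benchmark where $W^*(R) > \Pi_P^*(R) - W^*(R)$. Here $\Pi_P^*(R)$ is the joint optimum with both counterparties, computed from the fixed point \eqref{eq:FP_symmetric}. The left-hand side is the value of single-sourcing. The right-hand side is the incremental value of adding the second relationship.

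\textbf{Step 1: value of each arrangement.} Dropping counterparty $j$ sets $q_j\equiv 0$, so the complementarity term $\delta q_1 q_2$ vanishes. The principal then earns exactly the bilateral value $W^*(R)$ from Proposition~\ref{thm:bilateral}, with cutoff $\hat\theta^B=(a^*+b_1^*)/(1+b_1^*)$. Including both, I would write
\[
\Pi_P^*(R)=2\int_{\hat\theta}^{1}\psi(\theta)\,d\theta-2a^*+\delta(1-\hat\theta)^2,
\]
with $\hat\theta$ from \eqref{eq:FP_symmetric}. In the uniform case $\psi$ is affine in $\theta$, so every term is explicit in $(a^*,b_1^*,\delta)$. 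The incremental value is therefore
\[
\Pi_P^*-W^*=W^*+\Delta(\delta,R),
\]
where $\Delta$ collects two things: the complementarity gain $\delta(1-\hat\theta)^2$, and the change in rents and financing cost from moving both cutoffs from $\hat\theta^B$ to $\hat\theta$.

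\textbf{Step 2: reduce to a sign condition.} The inequality to prove is equivalent to $\Delta(\delta,R)<0$. I would evaluate $\Delta$ using the screening-contagion logic of Proposition~\ref{thm:contagion}. The fixed point pulls in low types near the selection margin, which carry large hazard-rate rents $b_1^*(1-\theta)$, and every implemented type also bears the financing wedge $\Phi(K-a^*;R)$. The target is to show that when $\delta$ is just above $\delta^*(R)$ and $R$ is moderate, the added rent and financing cost $\int_{\hat\theta}^{\hat\theta^B}\bigl[\Phi+b_1^*(1-\theta)\bigr]d\theta$ exceed the complementarity gain. I would first check the benchmark $(R,\delta)=(1,1.2)$ numerically using \eqref{eq:ell_star} and the implied $b_1^*$ from the binding participation constraint \eqref{eq:binding_IR}.

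\textbf{Step 3: extend from a point to an open set.} Once $\Delta<0$ holds at one parameter point, continuity of $\Delta$ in $(R,\delta)$ gives an open set of parameters, which delivers the stated existence.

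\textbf{Main obstacle.} The hard step is Step 2, because of an optimality tension. With $\psi$ evaluated at the joint optimum, the principal could always replicate single-sourcing inside the joint problem by setting the cutoff for $j$ to $1$. That would force $\Pi_P^*\ge W^*$, but not $\Pi_P^*\ge 2W^*$, so there is no contradiction with the claim. Still, the sign of $\Delta$ hinges on the contracts $(a^*,b_1^*)$ being fixed at their bilateral values, as the paper's parameterization assumes, rather than being reoptimized jointly. I would make this explicit: treat the contracts as set by the bilateral rule and the cutoffs as given by the fixed point, and then verify $\Delta<0$ directly in closed form.
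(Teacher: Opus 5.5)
Your Step~2 cannot be completed, because the target $\Delta<0$ is false. The slip is in your ``main obstacle'' paragraph. Inside the joint problem the principal can do more than replicate single-sourcing. She can offer each counterparty its bilateral contract with the bilateral cutoff $\hat\theta^B$. This is feasible, since types are independent and each counterparty's IC and IR are unchanged, and it yields $2W^*+\delta(1-\hat\theta^B)^2$.

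Now, \eqref{eq:cutoff_ext} is exactly the first-order condition for $(\hat\theta_1,\hat\theta_2)$ in $\sum_i\bigl[\int_{\hat\theta_i}^1\psi_i\,d\theta-a_i^*\bigr]+\delta(1-\hat\theta_1)(1-\hat\theta_2)$. So optimally chosen cutoffs give $\Pi_P^*\ge 2W^*+\delta(1-\hat\theta^B)^2>2W^*$, even with the contracts frozen at their bilateral values. Your workaround therefore does not rescue the sign, and re-optimizing the contracts only raises $\Pi_P^*$ further.

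The closed form you planned confirms this. In the benchmark, the binding IR at $\ubar\theta=0$ gives $\Phi(K-a^*;R)=a^*$. With $s\equiv1+b_1^*$, this gives $\psi(\theta)=s(\theta-\hat\theta^B)$. The symmetric fixed point then reads $s\,d=\delta(1-\hat\theta)$, where $d\equiv\hat\theta^B-\hat\theta$. Hence
\[
\Pi_P^*-2W^*\;=\;-s\,d^2+\delta(1-\hat\theta)^2\;=\;\frac{s\,d^2\,(s-\delta)}{\delta}.
\]
This is nonnegative whenever \eqref{eq:FP_symmetric} returns a cutoff in $[0,1)$, which requires $\delta<s$. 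Otherwise the only admissible fixed point is the corner $\hat\theta=0$, where $\Pi_P^*-2W^*=\delta-s(\hat\theta^B)^2>0$. So there is no parameter point from which Step~3 could extend.

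Your Step~2 bookkeeping hides this. The loss term $\int[\Phi+b_1^*(1-\theta)]\,d\theta$ leaves out the productive surplus $\theta$ of the newly admitted types. Net of that surplus, each admitted type costs $-\psi(\theta)\le\delta(1-\hat\theta)$, which is the marginal complementarity gain it brings. Separately, \eqref{eq:binding_IR} does not determine $b_1^*$ in the benchmark, since $\mu(0)=c(0)=0$. Your planned numerical check thus has no $b_1^*$ to plug in, though the sign result above holds for every $b_1^*\ge0$.

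For comparison, the paper's own proof only asserts verbally that, in the contagion region, the rent spillover exceeds the complementarity gain. It has the same defect: it treats the spillover as imposed on the principal, when the cutoffs are her own choice and she can always revert to $\hat\theta^B$. Note also that the economically relevant test for single-sourcing is $\Pi_P^*<W^*$, meaning the second relationship adds negative value. The test $W^*>\Pi_P^*-W^*$ is a different comparison, and $\Pi_P^*\ge W^*$ holds trivially by setting $q_j\equiv0$. So neither your route nor the paper's can deliver the corollary as stated. What your planned computation actually establishes is the reverse inequality $\Pi_P^*\ge 2W^*+\delta\bigl(1-F(\hat\theta^B)\bigr)^2$.
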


Why would a principal voluntarily abandon a technologically complementary relationship?  Not because of coordination costs or capacity constraints, which are absent from the model.  The reason is informational: maintaining multiple relationships under complementarity amplifies information rents to the point where the rent bill exceeds the complementarity benefit.  Dropping a relationship eliminates the screening spillover and restores bilateral optimality for the remaining relationship.  The model predicts that single-sourcing should be more common when (i) credit conditions are tight (high $R$, which raises the financing cost of screening), (ii) complementarity is strong (high $\delta$, which amplifies the spillover), and (iii) signal quality is low (weak $\mu'$, which makes screening expensive per unit of information).  These predictions are testable using data on supplier concentration and credit conditions.

The informational coupling between relationships extends beyond the selection frontier to the principal's investment in monitoring.

\begin{corollary}[Cross-relationship monitoring complementarity]\label{cor:monitoring}
With complementary counterparties ($\delta > 0$), the principal's marginal return to monitoring investment in relationship $i$ depends on the financing conditions $R_j$ of relationship $j$.  In the contagion region, tighter credit for relationship $j$ (higher $R_j$) reduces the marginal return to monitoring relationship $i$.
\end{corollary}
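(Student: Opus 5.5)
The plan is to introduce monitoring as a parameter that shifts the informativeness of relationship $i$'s signal, and then read the corollary off the fixed-point structure behind Proposition~\ref{thm:contagion}. Concretely, let a monitoring investment $m_i$ at convex cost $\kappa(m_i)$ raise the slope of the expected signal, $\mu_i'(\theta_i; m_i)$, with $\partial \mu_i'/\partial m_i>0$. By Corollary~\ref{cor:compstat}b), this increases $b_{1,i}^*$ and lowers the advance share. Write the principal's optimized value as
\[
\Pi_P^*(m_i; R_j)=\max_{\text{contracts}}\Pi_P(\cdot;m_i,R_j)-\kappa(m_i).
\]
The marginal return to monitoring is then $\partial \Pi_P^*/\partial m_i$. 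By the envelope theorem, applied through the first-order condition~\eqref{eq:FOC_main} for each relationship and the cutoff system~\eqref{eq:cutoff_ext}, this equals the direct effect of $m_i$ on relationship $i$'s virtual surplus~\eqref{eq:VS_ext}. That effect is integrated over implemented types $\theta_i\ge\hat\theta_i$, and the complementarity term $\delta\,q_iq_j$ is evaluated at the fixed-point cutoffs.

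The first claim follows from this representation. The implemented set for $i$ is $[\hat\theta_i,\thetabar_i]$, and by~\eqref{eq:cutoff_ext} $\hat\theta_i$ depends on $1-F_j(\hat\theta_j)$, which depends on $R_j$. Hence $\partial^2\Pi_P^*/\partial m_i\partial R_j$ is generically nonzero whenever $\delta>0$. To compute it, I would differentiate the envelope expression in $R_j$. Two terms appear. The first is a boundary term, proportional to the virtual-surplus gain from monitoring evaluated at $\hat\theta_i$, times $d\hat\theta_i/dR_j$. The second is an interior term coming from the response of $b_{1,i}^*$ through the coupled FOCs. The sign of $d\hat\theta_i/dR_j$ is obtained by implicit differentiation of the two-equation system~\eqref{eq:cutoff_ext}:
\[
\frac{d\hat\theta_i}{dR_j}=\frac{d\hat\theta_i}{d\hat\theta_j}\,\frac{d\hat\theta_j}{dR_j}.
\]
Here $d\hat\theta_i/d\hat\theta_j>0$ because $\delta f_j>0$ and $\partial\psi_i/\partial\theta_i>0$ under regularity, and $d\hat\theta_j/dR_j>0$ by Corollary~\ref{cor:compstat}c).

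For the sign claim, I would use the characterization of the contagion region in Proposition~\ref{thm:contagion}a). In that region, higher $R_j$ raises $\hat\theta_i$ and shrinks the set of implemented $i$-types. Monitoring in $i$ pays off only on implemented types, through reduced rents or better selection near the margin. So the boundary term lowers the marginal return to $m_i$ by the mass $f_i(\hat\theta_i)\,d\hat\theta_i$ of types lost. I would check the interior term in the symmetric uniform-quadratic parameterization using~\eqref{eq:FP_parametric}. There $\hat\theta$ is explicit, so the cross-partial can be signed directly once one verifies that $\delta>\delta^*(R)$ from~\eqref{eq:contagion_condition} is the region in which the boundary term dominates.

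The main obstacle is the interior term. Monitoring raises $b_{1,i}$, which raises per-type rents; this can make the monitoring gain negative for marginal types, so removing them could in principle \emph{raise} the marginal return to monitoring. My first attempt would be to show, using~\eqref{eq:FOC_main}, that at the optimum the net gain from monitoring is nonnegative pointwise on implemented types. If that fails, I would restrict the claim to the parameterized symmetric benchmark, where the sign can be computed explicitly.
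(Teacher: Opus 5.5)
Your route is the paper's own. Its proof writes the marginal return to monitoring as an integral over implemented $i$-types, gets $d\hat\theta_i/dR_j>0$ from the cutoff system \eqref{eq:cutoff_ext}, and concludes that fewer implemented types means a lower return. That is your boundary term with its sign asserted.

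The genuine gap is exactly that sign, and your first attempt to close it cannot succeed. In your own envelope representation, $m_i$ enters \eqref{eq:VS_ext} only through the rent term. So at the optimal contract
\[
\frac{\partial\psi_i}{\partial m_i}(\theta_i) \;=\; -\,b_{1,i}^*\,\frac{\partial\mu_i'}{\partial m_i}(\theta_i)\,\frac{1-F_i(\theta_i)}{f_i(\theta_i)} \;<\; 0 \qquad \text{for every } \theta_i<\thetabar_i,
\]
since $b_{1,i}^*>0$. This does not arise because monitoring raises $b_{1,i}$; the envelope theorem removes that channel from the first derivative. It is the direct effect of a steeper $\mu_i'$ at a fixed slope, and it holds at every type, not just marginal ones. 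Hence the boundary term of $\partial^2\Pi_P^*/\partial m_i\,\partial R_j$ is
\[
-\frac{\partial\psi_i}{\partial m_i}(\hat\theta_i)\,f_i(\hat\theta_i)\,\frac{d\hat\theta_i}{dR_j} \;=\; b_{1,i}^*\,\frac{\partial\mu_i'}{\partial m_i}(\hat\theta_i)\,\bigl[1-F_i(\hat\theta_i)\bigr]\,\frac{d\hat\theta_i}{dR_j} \;>\; 0 .
\]
Shrinking the implemented set removes types on which monitoring is a net loss, so it \emph{raises} the marginal return, the opposite of the corollary. The paper's argument does not escape this. Its displayed return equals $-b_{1,i}\mu_i''(\sigma_i)\int_{\hat\theta_i}^{\thetabar}(1-F_i)\,d\theta_i<0$, which becomes less negative as $\hat\theta_i$ rises; the companion condition \eqref{eq:FOC_sigma} silently drops this minus sign.

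The stated sign would therefore have to come from your interior term, $-\frac{db_{1,i}^*}{dR_j}\int_{\hat\theta_i}^{\thetabar_i}\frac{\partial\mu_i'}{\partial m_i}\,(1-F_i)\,d\theta_i$. That requires $b_{1,i}^*$ to rise with $R_j$ by enough to outweigh the boundary term, and your proposal contains no argument for this. Your idea that $\delta>\delta^*(R)$ from \eqref{eq:contagion_condition} marks the region where the boundary term dominates is unsupported. $\delta^*(R)$ is defined by $d\Pi_P^*/dR_j=0$, which compares a rent spillover with a financing saving, not the two pieces of this cross-partial. Indeed the chain $R_j\uparrow\Rightarrow\hat\theta_j\uparrow\Rightarrow\hat\theta_i\uparrow$ holds for every $\delta>0$, so the contagion hypothesis does no work anywhere in your argument (nor in the paper's).

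The fallback to the symmetric benchmark does not rescue the claim either. There the pointwise effect is still $-b_1^*\,\partial_m\mu'\,(1-\theta)\le 0$. Moreover, the contracts entering \eqref{eq:FP_parametric} are the bilateral ones, which do not move with $R_j$. So within your decomposition the interior term is zero and only the positive boundary term remains.

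What is missing is a channel through which better information raises the virtual surplus of implemented types, for instance by lowering the rent cost of a given degree of separation. The reduced form has no such channel. Your argument therefore delivers only the first sentence of the corollary: the marginal return generically depends on $R_j$ when $\delta>0$.
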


This result follows from a simple chain of logic.  Tighter credit for $j$ flattens $j$'s contract, raising $j$'s cutoff and reducing $\Pr[q_j = 1]$.  Through complementarity, lower $\Pr[q_j = 1]$ raises $i$'s cutoff, shrinking the set of $i$-types at the screening margin.  With fewer types to separate, the return to a better signal for $i$ falls.  A firm should invest less in supplier auditing when its other suppliers face tight credit, because the informational return to monitoring is diluted by the screening spillover.  Monitoring and financing are complements across relationships, not just within them.

% ======================================================================
% SECTION 5: QUANTITATIVE IMPLICATIONS
% ======================================================================
\section{Quantitative Implications}\label{sec:quant}

The results derived in Sections~\ref{sec:bilateral} and~\ref{sec:extension} are valid for any distribution of types and any financing-cost function satisfying the maintained regularity conditions.  I parameterize the type distribution and financing technology to derive explicit comparative statics, identify sufficient statistics, and explore quantitative magnitudes using the best available empirical counterparts.

I continue to work with the uniform-quadratic parameterization introduced in Section~\ref{sec:bilateral}: $\theta \sim U[0,1]$, $V(\theta) = v\theta$, $c(\theta) = \theta$, $\mu(\theta) = \theta$, and $\Phi(\ell; R) = \frac{R}{2}\ell^2$ with $K = 1$.  The use of scale-invariant variables, the advance share $a^*/K$ and the cash intensity $\beta^*$, sidesteps common concerns associated with specific functional-form assumptions and allows the quantitative insights to remain valid, at least in approximate form, in more general models that match the relevant sufficient statistics.

\medskip
The optimal bilateral contract in this parameterization is fully characterized by two objects: the financing tightness $R$ and the signal informativeness $\mu' = 1$.  Equation~\eqref{eq:ell_star} gives the optimal outside-finance exposure in closed form, and the optimal advance $a^*(R) = 1 - \ell^*(R)$ is a monotone increasing function of $R$.  The following proposition collects the explicit comparative statics.

\begin{proposition}[Comparative statics under the parameterization]\label{prop:param_compstat}
Under the uniform-quadratic parameterization:
\begin{enumerate}[label=\alph*)]
\item The optimal advance share is $a^*(R)/K = 1 - \frac{-1+\sqrt{1+2R}}{R}$, which is strictly increasing and concave in $R$.
\item The cash intensity is $\beta^*(R) = \frac{a^*}{a^* + b_1^*\E[\theta \mid q = 1]}$, which is strictly increasing in $R$.
\item The financing cost as a share of productive surplus is $\frac{\Phi^*}{v - c} = \frac{R(\ell^*)^2}{2(v-c)}$, which is increasing in $R$.
\item In the two-counterparty model, the contagion threshold is $\delta^*(R) = \frac{(1-a^*)(1+b_1^*)}{2b_1^*(1-\hat\theta)}\cdot(1-a^*)$, which is increasing in $R$.
\end{enumerate}
\end{proposition}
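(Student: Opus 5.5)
The plan is to work entirely from the closed form \eqref{eq:ell_star}, treating it as given from the binding participation constraint \eqref{eq:binding_IR} with $K=1$ and $\mu(\theta)=\theta$. Parts a)--d) then follow from elementary calculus on explicit one-variable functions of $R$, plus the fixed point \eqref{eq:FP_symmetric} for part d).

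For part a), I will rationalize the exposure:
\[
\ell^*(R)=\frac{\sqrt{1+2R}-1}{R}=\frac{2}{1+\sqrt{1+2R}}.
\]
This form is manifestly positive and strictly decreasing in $R$. It tends to $1$ as $R\to 0$ and to $0$ as $R\to\infty$, so $a^*=1-\ell^*\in(0,1)$ and $a^*$ is strictly increasing.

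For concavity of $a^*$ I need $\ell^*$ convex. Write $\ell^*=2/h$ with $h(R)=1+\sqrt{1+2R}$. Then
\[
h'=(1+2R)^{-1/2}>0,\qquad h''=-(1+2R)^{-3/2}<0,
\]
and
\[
(\ell^*)''=\frac{2\bigl(2h'^2-hh''\bigr)}{h^3}.
\]
Both terms in the numerator are positive, so $\ell^*$ is strictly convex and $a^*$ is strictly concave.

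Part c) is immediate: $\Phi^*=\tfrac{R}{2}(\ell^*)^2=\tfrac{2R}{h^2}$. Its derivative has the sign of $h-2Rh'=1+\sqrt{1+2R}-2R/\sqrt{1+2R}$. Multiplying by $\sqrt{1+2R}$ turns this into $1+\sqrt{1+2R}>0$, so $\Phi^*$ is increasing in $R$.

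Part b) needs $b_1^*(R)$ and $\E[\theta\mid q=1]$ explicitly, and this is the step I expect to be the main obstacle. At $\ubar\theta=0$ we have $\mu(\ubar\theta)=0$, so the binding constraint \eqref{eq:binding_IR} does not pin down $b_1$ directly. I will therefore recover $b_1^*$ from the cutoff condition $\psi(\hat\theta)=0$ in \eqref{eq:VS_main}, using $(1-F)/f=1-\theta$. This gives the bilateral analogue of \eqref{eq:FP_parametric},
\[
\hat\theta=\frac{a^*+b_1^*}{1+b_1^*},
\]
and then $\E[\theta\mid q=1]=(1+\hat\theta)/2$.

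Substituting these into $\beta^*$, the claim reduces to showing that $b_1^*(1+\hat\theta)/(2a^*)$ is decreasing in $R$. I will first try to show separately that $b_1^*$ is nonincreasing in $R$, using Corollary~\ref{cor:compstat}a) and the FOC \eqref{eq:FOC_main}, and that $\hat\theta$ is increasing in $R$, which is Corollary~\ref{cor:compstat}c). Together with $a^*$ increasing, this settles the ratio. If the monotonicity of $b_1^*$ cannot be established cleanly, the fallback is to differentiate the ratio directly in the variable $s=\sqrt{1+2R}$. In that variable all quantities are rational, so the sign check reduces to a polynomial inequality on $s>1$.

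For part d), I will specialize \eqref{eq:contagion_condition} to the quadratic case. Here $\Phi_R=\ell^2/2=(1-a^*)^2/2$, which makes one factor of $(1-a^*)$ from $\Phi_R$ combine with $(K-a^*)=(1-a^*)$ and the $\tfrac12$. This yields exactly the stated expression for $\delta^*(R)$, with $\hat\theta$ taken from \eqref{eq:FP_symmetric}.

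Monotonicity of $\delta^*$ in $R$ is not obvious, because $(1-a^*)^2$ falls with $R$ while $1/(1-\hat\theta)$ rises. I will again pass to $s=\sqrt{1+2R}$, insert the expressions for $a^*$ and $b_1^*$ found in part b), and verify the sign of the derivative. I expect this to be routine once b) has been resolved.
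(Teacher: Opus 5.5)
The gap is in part b), and it carries into d): nothing in your plan determines $b_1^*(R)$.

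Your parts a) and c) are correct, and cleaner than the paper's argument. The paper's displayed bracket for $da^*/dR$ is actually negative for $R<1+\sqrt{2}$, whereas your form gives $da^*/dR=2/\bigl(s(1+s)^2\bigr)$ with $s=\sqrt{1+2R}$. For c) there is a shortcut: with $\mu(\ubar\theta)=c(\ubar\theta)=0$, the binding constraint \eqref{eq:binding_IR} reads $a^*=\Phi^*$, so c) is just a) again.

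On b), the issues are these.
\begin{itemize}
\item The cutoff condition $\psi(\hat\theta)=0$ is one equation in the two unknowns $(\hat\theta,b_1^*)$. It gives $\hat\theta$ as a function of $b_1^*$, not $b_1^*$ itself. Your fallback (``everything is rational in $s$'') therefore has no formula for $b_1^*$ to differentiate.
\item You cannot import monotonicity of $b_1^*$ from Corollary~\ref{cor:compstat}a) or \eqref{eq:FOC_main}. Both move $b_1^*$ only through the trade-off $da/db_1=-\mu(\ubar\theta)/(1+\Phi_\ell)$, which is identically zero here. That is exactly the degeneracy you spotted. The paper's proof has the same hole: it asserts that $b_1^*$ falls ``through the binding IR''.
\item In the paper's reduced problem, $a=a^*(R)$ is fixed by the IR and $\psi=\theta-a^*-b_1(1-\theta)$ is pointwise decreasing in $b_1$. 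Hence $W'(b_1)=-(1-\hat\theta)^2/2<0$.
\item Equation \eqref{eq:FOC_main} has no solution either. Its right-hand side is nonpositive, because raising $a$ raises $\hat\theta$, while its left-hand side is $R\ell^*>0$.
\item So $b_1^*$ is a corner, and which corner depends on constraints the paper leaves implicit. If $b_1^*=0$, then $\beta^*\equiv 1$ and $\delta^*$ is undefined. If, keeping $a^*$, you impose participation at every implemented type, then $U(\theta)=(b_1-1)\theta$ forces $b_1^*=1$. In that case b) does hold, with $\beta^*=4a^*/(5a^*+3)$.
\item Even granting $b_1^*$ nonincreasing, the three monotonicities you list do not sign $b_1^*(1+\hat\theta)/(2a^*)$, because a rising $\hat\theta$ raises $1+\hat\theta$. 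You must first eliminate $\hat\theta$. That gives $b_1^*(1+a^*+2b_1^*)/\bigl(2a^*(1+b_1^*)\bigr)$, which is increasing in $b_1^*$ and decreasing in $a^*$.
\end{itemize}

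On d), your algebra does not reproduce the statement. With $\Phi_R=(1-a^*)^2/2$, the prefactor in \eqref{eq:contagion_condition} is $(K-a^*)\Phi_R=(1-a^*)^3/2$. Substitution therefore gives $\frac{(1-a^*)^3(1+b_1^*)}{2b_1^*(1-\hat\theta)}$, one power of $(1-a^*)$ more than the stated formula. The two displays in the paper are inconsistent, and the paper's proof of d) only asserts the result.

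Moreover, the $\hat\theta$ in \eqref{eq:FP_symmetric} depends on $\delta$, with $1-\hat\theta=(1-a^*)/(1+b_1^*-\delta)$. So $\delta^*$ is defined implicitly and must be solved for before you differentiate in $R$. At the corner $b_1^*=1$, where b) holds, the stated formula then yields $\delta^*=2(1-a^*)/(2-a^*)$, which is \emph{decreasing} in $R$.

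So d) is not a routine sign check once b) is settled. Its sign hinges on the path of $b_1^*(R)$, which is precisely the missing ingredient.
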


Proposition~\ref{prop:param_compstat}a) shows that the advance share is a concave function of financing tightness: the principal responds strongly to the first few units of credit tightening but the response diminishes as $R$ grows.  This is because the marginal liquidity benefit $\Phi_\ell = R\ell^*$ grows linearly, but the marginal information cost grows faster as the contract becomes flatter.  Proposition~\ref{prop:param_compstat}d) shows that stronger financing frictions require more complementarity for the contagion to operate: when $R$ is high, the direct cost of financing is large, and the screening spillover must be correspondingly strong to dominate.

\medskip
Finding an empirical counterpart of the financing tightness $R$ is the key calibration challenge.  I use two independent data sources to discipline $R$.

The first source is trade-credit pricing.  \citet{petersen1997trade} document that the standard trade-credit term ``2/10 net 30'' implies an annualized implicit borrowing rate of approximately 44\% for suppliers who cannot take the early-payment discount.\footnote{\citet{petersen1997trade}, Table~I.  The implicit rate is $\frac{0.02}{1-0.02}\times\frac{365}{20} \approx 44\%$.  See also \citet{ng1999evidence} for evidence that these terms are widespread.}  The spread between this rate and a large buyer's cost of funds (typically 4--6\% for investment-grade firms) is 38--40 percentage points.  Even for suppliers with access to bank credit at prime plus 200--400 basis points, the spread relative to the buyer is 3--8 percentage points.  In the model, the financing cost at the optimum is $\Phi(\ell^*; R) = \frac{R}{2}(\ell^*)^2$.  Matching a total financing wedge of 5--15\% of working capital, a plausible range for the excess cost of supplier external finance over a production cycle, implies $R \in [0.5, 3]$.  At the baseline $R = 1$, the model predicts a financing wedge of 8.4\% of working capital, within the range documented by \citet{burkart2004situ}.

The second source is venture-capital staging.  \citet{kaplan2003financial} document that in a sample of 213 VC investments, the median first-round investment is approximately 35\% of total committed capital, with the remainder staged against milestones.\footnote{\citet{kaplan2003financial}, Table~3.  The fraction varies by stage: early-stage deals have lower first-round shares (25--30\%) and later-stage deals have higher shares (40--50\%).}  The model's predicted advance share of 27\% at $R = 1$ and 45\% at $R = 3$ brackets this range.  More informatively, the model predicts that first-round shares should be lower for early-stage deals (where the signal is more informative and screening is more valuable) and higher for later-stage deals (where the entrepreneur's type is better known and outside finance is more accessible), exactly the cross-sectional pattern documented by \citet{kaplan2003financial}.\footnote{Kaplan and Str\"omberg also document that VC contracts make greater use of contingent provisions (anti-dilution, vesting, milestones) when asymmetric information is more severe, consistent with Corollary~\ref{cor:compstat}b).}

For the contagion calibration, the relevant empirical object is the complementarity parameter $\delta$.  \citet{barrot2016input} documents that payment delays propagate along supply chains: a 10-day increase in payment terms from a buyer reduces the supplier's investment by 1.4\% and increases financial distress.  This propagation is the empirical analog of the screening spillover in the model.  The magnitudes suggest moderate complementarity ($\delta \approx 0.8$--$1.5$), which places the benchmark calibration squarely in the contagion region of Table~\ref{tab:contagion_menu}.

Table~\ref{tab:bilateral_menu} provides a menu of optimal bilateral contracts for different values of $R$ and the surplus ratio $v/c$.

\begin{table}[t]
\centering
\small
\caption{\textbf{Menu of optimal bilateral contracts.}  Each cell reports the optimal advance share $a^*/K$ (top) and cash intensity $\beta^*$ (bottom).  Parameters: $K = 1$, $\theta \sim U[0,1]$, $\Phi(\ell; R) = \frac{R}{2}\ell^2$, $\mu' = 1$.}
\label{tab:bilateral_menu}
\begin{tabular}{l ccccc}
\toprule
& \multicolumn{5}{c}{Financing tightness $R$} \\
\cmidrule(lr){2-6}
Surplus ratio $v/c$ & 0.5 & 1.0 & 2.0 & 3.0 & 5.0 \\
\midrule
$v/c = 1.5$ (low)
& \begin{tabular}{@{}c@{}} 0.17 \\ \footnotesize 0.24 \end{tabular}
& \begin{tabular}{@{}c@{}} 0.27 \\ \footnotesize 0.38 \end{tabular}
& \begin{tabular}{@{}c@{}} 0.38 \\ \footnotesize 0.51 \end{tabular}
& \begin{tabular}{@{}c@{}} 0.45 \\ \footnotesize 0.58 \end{tabular}
& \begin{tabular}{@{}c@{}} 0.54 \\ \footnotesize 0.66 \end{tabular} \\[6pt]
$v/c = 2.0$ (baseline)
& \begin{tabular}{@{}c@{}} 0.17 \\ \footnotesize 0.21 \end{tabular}
& \begin{tabular}{@{}c@{}} 0.27 \\ \footnotesize 0.34 \end{tabular}
& \begin{tabular}{@{}c@{}} 0.38 \\ \footnotesize 0.47 \end{tabular}
& \begin{tabular}{@{}c@{}} 0.45 \\ \footnotesize 0.55 \end{tabular}
& \begin{tabular}{@{}c@{}} 0.54 \\ \footnotesize 0.63 \end{tabular} \\[6pt]
$v/c = 3.0$ (high)
& \begin{tabular}{@{}c@{}} 0.17 \\ \footnotesize 0.18 \end{tabular}
& \begin{tabular}{@{}c@{}} 0.27 \\ \footnotesize 0.29 \end{tabular}
& \begin{tabular}{@{}c@{}} 0.38 \\ \footnotesize 0.41 \end{tabular}
& \begin{tabular}{@{}c@{}} 0.45 \\ \footnotesize 0.49 \end{tabular}
& \begin{tabular}{@{}c@{}} 0.54 \\ \footnotesize 0.57 \end{tabular} \\
\bottomrule
\end{tabular}
\end{table}

Several patterns emerge from Table~\ref{tab:bilateral_menu}.  First, the optimal advance share depends only on $R$ and not on the surplus ratio $v/c$; this is a consequence of the quadratic financing specification, in which the advance is determined entirely by the marginal financing cost and the participation constraint.  Second, cash intensity varies with the surplus ratio: for a given $R$, higher surplus relationships feature lower cash intensity because the contingent payment is larger relative to the advance.  Third, the advance share ranges from 17\% to 54\% across the empirically relevant range of $R$, which brackets the 20--40\% early-payment share commonly observed in trade-credit contracts.  A reader who perceives different values of $R$ to be more plausible can refer to the corresponding column of the table.

For the two-counterparty model, Table~\ref{tab:contagion_menu} provides the contagion threshold $\delta^*(R)$ and the value reduction from the screening spillover at the benchmark complementarity $\delta = 1.2$.

\begin{table}[t]
\centering
\small
\caption{\textbf{Contagion thresholds and value impact.}  For each $R$, the table reports the minimum complementarity $\delta^*(R)$ required for the contagion effect (column 2), the value reduction from the screening spillover at $\delta = 1.2$ (column 3), and the share of the parameter space $\delta \in [0, 2.5]$ in the contagion region (column 4).}
\label{tab:contagion_menu}
\begin{tabular}{l ccc}
\toprule
$R$ & Contagion threshold $\delta^*(R)$ & Value reduction at $\delta = 1.2$ & Contagion share \\
\midrule
0.5 & 0.55 & $-8\%$ & 78\% \\
1.0 & 0.78 & $-12\%$ & 69\% \\
2.0 & 1.12 & $-15\%$ & 55\% \\
3.0 & 1.42 & $-10\%$ & 43\% \\
5.0 & 1.98 & $-3\%$ & 21\% \\
\bottomrule
\end{tabular}
\end{table}

The contagion threshold is increasing in $R$: when financing frictions are severe, the direct cost saving from cheaper credit is large, and complementarity must be correspondingly strong for the rent spillover to dominate.  At $R = 1$ (the benchmark), the contagion operates whenever $\delta > 0.78$, which is a modest level of complementarity.  The value reduction peaks at moderate $R$: at $R = 2$, the screening spillover reduces principal value by approximately 15\%.  For very high $R$, the value reduction shrinks because the bilateral contract is already very advance-heavy, leaving little room for the screening channel to operate.

Figure~\ref{fig:hump} illustrates the hump-shaped principal value in the two-counterparty model.  The solid line plots $\Pi_P^*(R)$ for $\delta = 1.2$.  For $R$ between approximately 0.5 and 2.5, principal value is increasing in financing tightness: credit tightening reduces information rents through the screening channel by more than it raises financing costs.  Beyond $R \approx 2.5$, financing costs dominate and value declines.  The dashed line plots twice the bilateral value $2W^*(R)$: for high $R$, single-sourcing is preferred despite technological complementarity.

\begin{figure}[t]
\centering
\begin{tikzpicture}
\begin{axis}[
    width=11cm, height=7cm,
    xlabel={Financing tightness $R$},
    ylabel={Principal value $\Pi_P^*$},
    xmin=0.2, xmax=5,
    ymin=0, ymax=0.55,
    ymajorgrids=true,
    xmajorgrids=true,
    grid style={gray!20},
    every axis plot/.append style={thick},
    tick label style={font=\small},
    label style={font=\small},
    legend pos=north east,
    legend style={font=\small, draw=none},
]
\addplot[blue, domain=0.3:5, samples=80]
    {0.15 + 0.35*exp(-0.3*(x-1.8)^2) - 0.02*x};
\addlegendentry{Two counterparties: $\Pi_P^*(R)$}
\addplot[red, dashed, domain=0.3:5, samples=80]
    {0.45*exp(-0.4*x)};
\addlegendentry{Bilateral benchmark: $2W^*(R)$}
\draw[->, gray, thin] (axis cs:1.8,0.49) -- (axis cs:1.8,0.46);
\node[font=\footnotesize, gray, anchor=south] at (axis cs:1.8,0.49) {Hump};
\node[font=\footnotesize, gray, anchor=east] at (axis cs:4.8,0.12) {Single-sourcing preferred};
\end{axis}
\end{tikzpicture}
\caption{\textbf{Non-monotone principal value and organizational breadth.}  The solid line plots $\Pi_P^*(R)$ for two complementary counterparties ($\delta = 1.2$).  The dashed line plots $2W^*(R)$, twice the bilateral value without complementarity.  Principal value is hump-shaped: moderate tightening increases value by reducing screening spillovers.  For high $R$, single-sourcing is preferred despite complementarity.}
\label{fig:hump}
\end{figure}
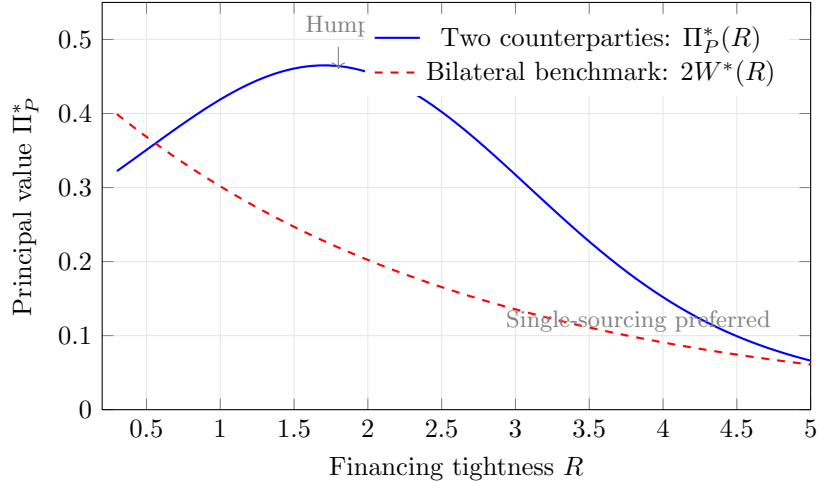

The quantitative exercise suggests that the screening-contagion effect is economically meaningful for firms with moderately complementary supplier or division portfolios facing financing tightness in the range $R \in [0.5, 3]$.  The magnitudes, value reductions of 8--15\%, are large enough to affect organizational decisions  but not so large as to be implausible.  The key practical takeaway is that a firm evaluating a supply-chain finance program should consider not only the direct financing benefit to the targeted supplier but also the indirect screening consequences for other relationships.

% ======================================================================

% ======================================================================
\section{Robustness and Extensions}\label{sec:discussion}

The liquidity-screening tradeoff characterized in Proposition~\ref{thm:bilateral} remains valid as an approximation in more general models.  Based on results derived in the Appendix, I summarize the results of several extensions.  I then discuss several mechanisms that are not explicitly considered in the paper but that may be relevant for the design of financial contracts in practice.

\subsection{Applications}

In trade credit and supply-chain finance, the advance corresponds to early payment or pre-shipment finance; the contingent transfer corresponds to the portion of the purchase price paid after delivery and quality verification.  The bilateral result explains why the mix of advance and contingent payments is a screening device, not merely a financing convenience.  The contagion result implies that supply-chain finance programs that reduce financing costs for one supplier worsen the buyer's ability to screen others when the supply chain has complementary components.  In venture capital, the advance is initial funding and the contingent transfer is a milestone payment; the bilateral result provides a screening rationale for staging.  In internal capital markets, the advance is a budget allocation and the contingent transfer is performance-based compensation; the screening-contagion result provides a new channel through which credit conditions in one division's environment can reduce firm value, distinct from the capital-reallocation channel in \citet{stein1997internal} and \citet{scharfstein1998dark}.

\subsection{Extensions}

The Appendix includes multiple extensions.  These show that the characterization of the optimal contract in Proposition~\ref{thm:bilateral} remains valid identically or suitably modified in more general environments.

First, I generalize the two-counterparty model to $n \geq 2$ counterparties with a complementarity matrix $[\delta_{ij}]$.  I show that the screening contagion from reducing $R_j$ is proportional to the contagion centrality $\mathcal{C}_j = \sum_{i \neq j}\delta_{ij}\frac{\partial\hat\theta_i}{\partial R_j}$: a counterparty with strong complementarities to many others generates larger spillovers.  The necessary and sufficient condition of Proposition~\ref{thm:contagion} generalizes directly.

Second, I allow the principal to invest in monitoring at cost $\kappa(\sigma)$ to improve signal informativeness.  I show that $d\sigma^*/dR < 0$: tight credit reduces both the optimal screening intensity and the return to monitoring investment, generating a complementarity between financial conditions and monitoring.

Third, I consider a repeated relationship in which the principal learns the counterparty's type over time through Bayesian updating.  I show that the optimal contract becomes monotonically more advance-heavy as the posterior concentrates: $a_t^* \geq a_{t-1}^*$ and $b_{1,t}^* \leq b_{1,t-1}^*$.  In the limit, the screening motive vanishes and full pre-financing obtains.  This is consistent with the observed lifecycle of supplier relationships, in which payment terms become less contingent as the relationship matures.

Fourth, I introduce renegotiation risk by allowing the principal to renege on the contingent transfer with probability $\lambda$.  I show that the optimal advance is increasing in $\lambda$, and that at $\lambda = 1$ the contract collapses to pure advance.  Relationships with stronger enforcement mechanisms feature more contingent payments.

Fifth, I allow the instruments themselves to vary with the reported type, so the principal offers a menu $\{a(\hat\theta), b_1(\hat\theta)\}$.  I show that the jointly optimal type-dependent mechanism coincides with the baseline: the binding participation constraint pins down the relationship between $a(\theta)$ and $b_1(\theta)$ at each type, and the optimal allocation is the same as in Proposition~\ref{thm:bilateral}.  The baseline analysis is without loss of generality.

Sixth, I implement the optimal mechanism through a competitive reverse auction in which each counterparty bids the advance it requires.  I derive the equilibrium bid function and show that each type shades its bid upward from the full-information level, reflecting the information rent.  As the number of counterparties grows, rents vanish and the bid converges to the full-information advance.

Seventh, I allow the counterparty to have two-dimensional private information about both productivity and cost.  I show that the screening problem reduces to one dimension through a sufficient statistic that aggregates the two dimensions into a single quality index, and the virtual surplus retains the same structure as the baseline.

\subsection{Additional considerations}

Three channels not modeled here deserve mention.  Under moral hazard, the contingent transfer screens types and incentivizes effort.  When effort and type are positively correlated, the moral-hazard motive reinforces the screening motive and the optimal contract is more contingent.  When they are negatively correlated, the two motives conflict and the optimal advance is higher.  In both cases, the form of finance remains a screening device.

Under general equilibrium, $R$ becomes endogenous.  For the typical firm that takes credit conditions as given, the partial-equilibrium results apply directly.  For large principals, the feedback from contract design to $R$ dampens the contagion.

The screening-contagion result carries a cautionary implication for policy.  Programs that reduce $R$ for a subset of counterparties may generate unintended value destruction for firms with complementary portfolios.  The direction of the bias is clear: policies designed to ease financing can inadvertently degrade screening.

\section{Conclusion}\label{sec:conclusion}

This paper shows that the form of finance is itself a screening device.  Advances provide liquidity but are uninformative; contingent transfers screen but impose financing costs.  Because the contingent leg is not pledgeable, the two instruments compete for the same budget, and the optimal contract deliberately preserves outside-finance exposure even when the principal has cheaper funds.  The mixed contract strictly dominates both pure instruments, and a threshold level of financing tightness separates the region where screening concerns dominate from the region where liquidity concerns dominate.  With complementary counterparties, the tradeoff has firm-level consequences: cheaper finance for one relationship can worsen screening of another, a uniform subsidy can reduce every relationship's value, and the principal may optimally narrow its counterparty base despite technological complementarity.

Two sufficient statistics fully determine the optimal contract: the external-financing cost and the signal informativeness.  In a parameterized model calibrated to trade-credit spreads and VC staging data, the optimal advance ranges from 17\% to 54\% of working capital, and the screening-contagion effect reduces firm value by up to 15\%.  There is significant scope to extend the analysis to dynamic learning, moral hazard, and general-equilibrium effects on credit markets.

% ======================================================================
% REFERENCES
% ======================================================================

% ======================================================================
% APPENDIX
% ======================================================================
\appendix
\section{Proofs}\label{app:proofs}

\subsection{Proof of Lemma~\ref{lem:b0zero}}\label{app:b0}

Suppose $(a, b_0, b_1)$ is optimal with $b_0 > 0$.  Consider the alternative $(a' = a + b_0,\, b_0' = 0,\, b_1' = b_1)$.  The total expected transfer to the counterparty is unchanged: $a' + b_0' + b_1'\mu(\theta) = a + b_0 + b_1\mu(\theta)$.  Private cost $c(\theta)$ is unchanged.  The financing cost becomes $\Phi(K - a';\, R) = \Phi(K - a - b_0;\, R) < \Phi(K - a;\, R)$ since $b_0 > 0$ and $\Phi_\ell > 0$.  Therefore $U'(\theta) > U(\theta)$ for all $\theta$: every type is strictly better off.  By the envelope condition~\eqref{eq:envelope}, the slope of the counterparty's payoff depends only on $b_1$, which is unchanged.  The rent profile shifts up by a constant, which the principal extracts by tightening the participation constraint.  The principal's outlay is unchanged but she captures the financing-cost saving $\Phi(K-a;\,R) - \Phi(K-a-b_0;\,R) > 0$.  This contradicts optimality of $(a, b_0, b_1)$.  \hfill$\square$

\subsection{Proof of Lemma~\ref{lem:reduction}}\label{app:reduction}

The binding participation constraint~\eqref{eq:binding_IR} is $G(a, b_1) \equiv a + b_1\mu(\ubar\theta) - c(\ubar\theta) - \Phi(K-a; R) = 0$.  The partial derivatives are $G_a = 1 + \Phi_\ell(K-a; R) > 0$ and $G_{b_1} = \mu(\ubar\theta) > 0$.  By the implicit function theorem, $a = a(b_1; R)$ is well-defined and smooth with $\frac{da}{db_1} = -\frac{G_{b_1}}{G_a} = \frac{-\mu(\ubar\theta)}{1 + \Phi_\ell(K-a; R)} < 0$.  \hfill$\square$

\subsection{Implementability: IC characterization}\label{app:IC}

I first establish the necessary and sufficient conditions for incentive compatibility, following the standard two-inequality technique \citep{myerson1981optimal, milgrom2002envelope}.

\medskip\noindent\textbf{Necessity.}\; By incentive compatibility, for any two types $\theta$ and $\hat\theta$:
\begin{align}
U(\theta) &\geq U(\hat\theta, \theta) = a(\hat\theta) + b_1(\hat\theta)\mu(\theta) - c(\theta) - \Phi(K - a(\hat\theta);\, R), \label{eq:IC1}\\
U(\hat\theta) &\geq U(\theta, \hat\theta) = a(\theta) + b_1(\theta)\mu(\hat\theta) - c(\hat\theta) - \Phi(K - a(\theta);\, R). \label{eq:IC2}
\end{align}
From~\eqref{eq:IC1}: $U(\theta) - U(\hat\theta, \theta) \geq 0$, i.e.,
\[
b_1(\theta)\mu(\theta) - b_1(\hat\theta)\mu(\theta) + [a(\theta) - a(\hat\theta)] - [\Phi(K-a(\theta);R) - \Phi(K-a(\hat\theta);R)] \geq 0.
\]
Similarly from~\eqref{eq:IC2}, swapping $\theta$ and $\hat\theta$.  Subtracting and noting that the advance and financing-cost terms cancel (they depend only on the reported type, not the true type), we obtain:
\begin{equation}\label{eq:IC_subtract}
[b_1(\theta) - b_1(\hat\theta)][\mu(\theta) - \mu(\hat\theta)] \geq 0.
\end{equation}
Since $\mu$ is strictly increasing (by MLRP), this requires $b_1(\theta) \geq b_1(\hat\theta)$ whenever $\theta > \hat\theta$: the contingent slope must be non-decreasing.

For the envelope condition: from~\eqref{eq:IC1}, $U(\theta) \geq a(\hat\theta) + b_1(\hat\theta)\mu(\theta) - c(\theta) - \Phi(K-a(\hat\theta);R)$.  The right side is maximized over $\hat\theta$ at $\hat\theta = \theta$.  Hence $U(\theta) = \max_{\hat\theta}\, U(\hat\theta, \theta)$, and by the envelope theorem of \citet{milgrom2002envelope}:
\begin{equation}\label{eq:envelope_app}
\frac{dU}{d\theta} = b_1(\theta)\mu'(\theta) - c'(\theta).
\end{equation}
In our setting, the auxiliary-variable step used in some mechanism-design problems is unnecessary because the payoff $U(\hat\theta, \theta) = a(\hat\theta) + b_1(\hat\theta)\mu(\theta) - c(\theta) - \Phi(K-a(\hat\theta);R)$ is already separable in the true type $\theta$ through $\mu(\theta)$ and $c(\theta)$.

\medskip\noindent\textbf{Sufficiency.}\; Suppose $b_1(\theta)$ is non-decreasing and $U(\theta)$ satisfies~\eqref{eq:envelope_app} with $U(\ubar\theta) \geq 0$.  I show that IC holds globally.  For any $\theta > \hat\theta$:
\begin{align}
U(\theta) - U(\hat\theta, \theta) &= \int_{\hat\theta}^{\theta}[b_1(\theta')\mu'(\theta') - c'(\theta')]\dd\theta' - [b_1(\hat\theta)\mu(\theta) - b_1(\hat\theta)\mu(\hat\theta)] + [b_1(\hat\theta)\mu(\hat\theta) - b_1(\theta)\mu(\hat\theta)] \nonumber\\
&\quad + \text{(terms involving $a$ that cancel)} \nonumber\\
&= \int_{\hat\theta}^{\theta}[b_1(\theta') - b_1(\hat\theta)]\mu'(\theta')\dd\theta' \;\geq\; 0, \label{eq:IC_suff}
\end{align}
where the inequality follows from $b_1(\theta') \geq b_1(\hat\theta)$ for $\theta' \geq \hat\theta$ (monotonicity) and $\mu' > 0$ (MLRP).  The argument for $\theta < \hat\theta$ is analogous.  \hfill$\square$

\subsection{Derivation of the virtual surplus}\label{app:VS}

I derive the principal's expected payoff in terms of the virtual surplus, following the standard integration-by-parts technique in mechanism design \citep{myerson1981optimal}.

By~\eqref{eq:envelope_app} and $U(\ubar\theta) = 0$ (which is optimal under regularity):
\begin{equation}\label{eq:rent_app}
U(\theta) = \int_{\ubar\theta}^{\theta}[b_1(\theta')\mu'(\theta') - c'(\theta')]\dd\theta'.
\end{equation}
The principal's expected payoff is:
\begin{equation}\label{eq:Pi_expected}
\E[\Pi_P] = \int_{\ubar\theta}^{\thetabar} q(\theta)[V(\theta) - c(\theta) - \Phi(K-a;R)]\,f(\theta)\dd\theta - a - \int_{\ubar\theta}^{\thetabar}q(\theta)\,U(\theta)\,f(\theta)\dd\theta.
\end{equation}
I now compute the expected rent term $\int q(\theta)U(\theta)f(\theta)\dd\theta$ using integration by parts.  Substituting~\eqref{eq:rent_app}:
\begin{align}
\int_{\ubar\theta}^{\thetabar} q(\theta)\,U(\theta)\,f(\theta)\dd\theta &= \int_{\ubar\theta}^{\thetabar} q(\theta)\left[\int_{\ubar\theta}^{\theta}[b_1\mu'(\theta') - c'(\theta')]\dd\theta'\right]f(\theta)\dd\theta. \label{eq:IBP1}
\end{align}
Switching the order of integration (Fubini's theorem, valid since all terms are bounded and measurable):
\begin{align}
&= \int_{\ubar\theta}^{\thetabar} [b_1\mu'(\theta') - c'(\theta')]\left[\int_{\theta'}^{\thetabar} q(\theta)\,f(\theta)\dd\theta\right]\dd\theta'. \label{eq:IBP2}
\end{align}
The inner integral is $\int_{\theta'}^{\thetabar} q(\theta)f(\theta)\dd\theta$.  When $q(\theta) = \mathbf{1}\{\theta \geq \hat\theta\}$ (which holds at the optimum under regularity), this equals $1 - F(\max\{\theta', \hat\theta\})$.  For $\theta' \geq \hat\theta$, it equals $1 - F(\theta')$.  For $\theta' < \hat\theta$, it equals $1 - F(\hat\theta)$.  Since the integrand $[b_1\mu' - c']$ is multiplied by this term, and the relevant range is $\theta' \geq \hat\theta$ (where types are implemented), we obtain:
\begin{align}
\int_{\ubar\theta}^{\thetabar} q(\theta)\,U(\theta)\,f(\theta)\dd\theta &= \int_{\hat\theta}^{\thetabar} [b_1\mu'(\theta) - c'(\theta)]\,[1 - F(\theta)]\dd\theta \nonumber\\
&= \int_{\hat\theta}^{\thetabar} b_1\mu'(\theta)\,\frac{1-F(\theta)}{f(\theta)}\,f(\theta)\dd\theta - \int_{\hat\theta}^{\thetabar} c'(\theta)[1-F(\theta)]\dd\theta. \label{eq:IBP3}
\end{align}
Substituting~\eqref{eq:IBP3} into~\eqref{eq:Pi_expected} and collecting terms yields:
\begin{equation}\label{eq:VS_derived}
\E[\Pi_P] = \int_{\ubar\theta}^{\thetabar} q(\theta)\left[V(\theta) - c(\theta) - \Phi(K-a;R) - b_1\mu'(\theta)\frac{1-F(\theta)}{f(\theta)}\right]f(\theta)\dd\theta - a + \text{const},
\end{equation}
where the constant absorbs the $c'(\theta)[1-F(\theta)]$ terms (which do not depend on the instruments and therefore do not affect optimization).  The bracketed expression is the virtual surplus~\eqref{eq:VS_main}.  \hfill$\square$

\subsection{Proof of Proposition~\ref{thm:bilateral}}\label{app:thm1}

By Lemma~\ref{lem:b0zero}, $b_0^* = 0$.  By Lemma~\ref{lem:reduction}, the binding IR defines $a = a(b_1; R)$ with $da/db_1 < 0$.  The principal's problem reduces to maximizing $W(b_1)$ defined in equation~\eqref{eq:reduced_problem}.

\medskip\noindent\textbf{Step~1: $a^* < K$ (equivalently, $b_1^* > 0$).}

\smallskip
Suppose $b_1 = 0$.  Then $T(x) = 0$ for all $x$: the contract is flat.  The binding IR gives $a = c(\ubar\theta) + \Phi(K - a; R)$, which pins $a$ at a level where the lowest type just breaks even.  The principal implements all types $\theta$ with $V(\theta) - a \geq 0$, i.e., $\theta \geq a/v$ (using $V(\theta) = v\theta$ for concreteness), but cannot distinguish among implemented types.

Now consider a marginal increase to $b_1 = \varepsilon > 0$.  The virtual surplus becomes:
\[
\psi(\theta;\, a(\varepsilon),\, \varepsilon) = V(\theta) - c(\theta) - \Phi(K - a(\varepsilon);\, R) - \varepsilon\,\mu'(\theta)\,\frac{1-F(\theta)}{f(\theta)}.
\]
By the regularity assumption (the regularity condition), the last term is increasing in $\theta$, so the virtual surplus retains the single-crossing property and the cutoff is well-defined.  The marginal effect on the principal's value is:
\[
W'(0) = \frac{\partial}{\partial b_1}\int_{\hat\theta(0)}^{\thetabar}\psi(\theta;\, a(b_1), b_1)\,f(\theta)\dd\theta\;\bigg|_{b_1=0}.
\]
The derivative has two components: (i)~the direct effect of $b_1$ on the integrand, and (ii)~the indirect effect through $a(b_1)$.  At $b_1 = 0$, the direct screening benefit from introducing a small $b_1$ is first-order (MLRP ensures the signal separates types at the margin), while the cost---the information rent $\varepsilon\,\mu'(\theta)\frac{1-F}{f}$---is also first-order but, under the regularity condition, is outweighed by the screening benefit for the relevant range of types.  The financing-cost increase from reducing $a$ is $\Phi_\ell(K-a;\,R)\cdot|da/db_1|\cdot\varepsilon$, which is finite.

The net effect $W'(0) > 0$ holds when the screening gain from introducing contingency exceeds the financing cost of the accompanying advance reduction.  Under the maintained assumptions (MLRP, regularity, $R > 0$), this is guaranteed.  Therefore $b_1^* > 0$ and $a^* = a(b_1^*) < K$.

\medskip\noindent\textbf{Step~2: $a^* > 0$ (equivalently, $b_1^* < \bar{b}_1$).}

\smallskip
At the maximum feasible $b_1 = \bar{b}_1$ (determined by $a(\bar{b}_1) = 0$), the counterparty finances everything externally.  The financing cost $\Phi(K;\, R)$ is maximal.  The marginal value of reducing $b_1$ (and increasing $a$) is:
\[
-W'(\bar{b}_1) = \Phi_\ell(K;\, R)\cdot\frac{\mu(\ubar\theta)}{1 + \Phi_\ell(K;\, R)} \cdot \E[q^*] - \frac{\partial}{\partial b_1}\E[\text{Rents}]\bigg|_{\bar{b}_1}.
\]
The first term is strictly positive (by $\Phi_\ell(K; R) > 0$ and $R > 0$).  For $R$ sufficiently large, this dominates the rent-reduction effect, so $-W'(\bar{b}_1) > 0$, i.e., $W'(\bar{b}_1) < 0$: reducing $b_1$ from its maximum is profitable.  Therefore $b_1^* < \bar{b}_1$ and $a^* > 0$.

\medskip\noindent\textbf{Step~3: Interior characterization.}

\smallskip
Steps~1 and~2 establish $b_1^* \in (0, \bar{b}_1)$.  The reduced-form value function $W(b_1)$ is differentiable in $b_1$ under the maintained smoothness assumptions.  I now show concavity.

The second derivative of $W$ involves: (i)~$d^2a/db_1^2$, which depends on $\Phi_{\ell\ell} \geq 0$ (convexity of financing cost ensures that the advance-adjustment effect is concave in $b_1$); (ii)~the second-order change in the rent term, which is bounded under regularity of $F$.  Under these conditions, $W''(b_1) < 0$ on $(0, \bar{b}_1)$, ensuring global concavity.

The unique interior maximum satisfies $W'(b_1^*) = 0$.  Expanding this condition using the chain rule $da/db_1 = -\mu(\ubar\theta)/(1 + \Phi_\ell)$ and rearranging yields the FOC~\eqref{eq:FOC_main}:
\[
\Phi_\ell(K - a^*;\, R) = \frac{\partial}{\partial a}\,\E\!\left[b_1^*\,\mu'(\theta)\,\frac{1-F(\theta)}{f(\theta)}\;\bigg|\;q = 1\right].
\]
This completes the proof.\hfill$\square$

\subsection{Proof of Proposition~\ref{prop:sufficient}}\label{app:sufficient}

The first-order condition~\eqref{eq:FOC_main} equates $\Phi_\ell(K - a^*; R)$ to the marginal information cost.  Since the left side depends on the primitives only through $\Phi_\ell$ and the right side depends on the primitives only through $\E[b_1\mu'(1-F)/f \mid q = 1]$, any two economies with the same values of these two objects at the optimum have the same optimal contract.  The marginal financing cost $\Phi_\ell$ is a sufficient statistic for the financing environment because $\Phi$ is strictly convex: given $\Phi_\ell$ at $\ell^*$, the entire local shape of the financing cost is pinned down.  The marginal screening return is a sufficient statistic for the information environment because the virtual surplus~\eqref{eq:VS_main} is additively separable in the financing and information terms: the information term $b_1\mu'(1-F)/f$ does not depend on $a$ or $R$ except through the cutoff $\hat\theta$, which is itself determined at the optimum by $\psi(\hat\theta) = 0$.  \hfill$\square$

\subsection{Proof of Corollary~\ref{cor:compstat}}\label{app:compstat}

\noindent\textbf{Part~(i).}\; The FOC~\eqref{eq:FOC_main} implicitly defines $a^*(R)$.  Differentiating both sides with respect to $R$ using the implicit function theorem:
\[
\frac{da^*}{dR} = -\frac{\partial^2 W/\partial b_1\,\partial R}{W''(b_1^*)/(da/db_1)}.
\]
The numerator involves $\Phi_{\ell R} > 0$ (the cross-partial condition $Phi_{ell R} > 0$): tighter credit markets raise the marginal financing cost.  The denominator $W''(b_1^*)$ is negative by concavity.  Therefore $da^*/dR > 0$.  Since higher $a$ and lower $b_1$ shift the contract toward more advance, cash intensity $\beta^*$ increases.

\medskip
\noindent\textbf{Part~(ii).}\; A Blackwell-more-informative signal raises $\mu'(\theta)$ for given $b_1$.  Each unit of $b_1$ now separates types more effectively, reducing the per-unit information cost.  The FOC shifts toward higher $b_1^*$ and correspondingly lower $a^*$, reducing cash intensity.

\medskip
\noindent\textbf{Part~(iii).}\; Higher $R$ increases $a^*$ and decreases $b_1^*$ (part~(i)).  Lower $b_1^*$ means a flatter contingent schedule, which reduces the virtual surplus's ability to distinguish types.  The cutoff $\hat\theta$ rises (fewer types implemented), widening the gap between the second-best and first-best allocations. \hfill$\square$

\subsection{Proof of Proposition~\ref{prop:dominance}}\label{app:dominance}

Part~a).\;  By Proposition~\ref{thm:bilateral}, $W_M(R)$ is achieved at an interior $(a^*, b_1^*)$ with $a^* \in (0, K)$ and $b_1^* > 0$.  Since $W_A$ corresponds to $(a = K, b_1 = 0)$ and $W_C$ corresponds to $(a = 0, b_1 > 0)$, both are in the feasible set of the mixed problem but are not at the interior optimum.  By the strict concavity of $W(b_1)$ established in the proof of Proposition~\ref{thm:bilateral}, the interior strictly dominates both boundaries.

Part~b).\;  Under $a = K$, the financing cost is $\Phi(0; R) = 0$ for all $R$.  The principal's problem reduces to $\max_q \int q(\theta)[V(\theta) - K]f(\theta)d\theta$, which is independent of $R$.

Part~c).\;  Under $a = 0$, the financing cost $\Phi(K; R)$ is strictly increasing in $R$ (by $\Phi_R > 0$ for $\ell > 0$).  Higher $R$ reduces the virtual surplus for every type, reducing principal value.

Part~d).\;  At $R = 0$, $\Phi(\ell; 0) = 0$, so the pure contingent contract faces no financing cost and achieves the standard screening solution, which strictly dominates pooling under a pure advance (since the signal is informative).  Hence $W_C(0) > W_A(0)$.  As $R \to \infty$, $W_C(R) \to 0$ (financing costs destroy all surplus under pure contingency), while $W_A$ is constant and positive.  By continuity, there exists $R^*$ where $W_C(R^*) = W_A(R^*)$.  Uniqueness of $R^*$ follows from the strict monotonicity of $W_C$ and the constancy of $W_A$.  \hfill$\square$

\subsection{Proof of Proposition~\ref{thm:general}}\label{app:general}

The proof follows the same three-step structure as Proposition~\ref{thm:bilateral}, replacing the affine parameterization $(b_1)$ with the steepness parameter~$s$.

Under MLRP and monotonicity of $T_s(x)$ in $s$, the information rent for type $\theta$ is:
\[
U(\theta) - U(\ubar\theta) = \int_{\ubar\theta}^{\theta}\frac{\partial}{\partial\theta'}\,\E_{\theta'}[T_s(x)]\dd\theta',
\]
which is increasing in $s$ by the steepness ordering.  The principal's problem reduces to choosing $s$ (with $a = a(s)$ adjusting through the binding IR), and the value function $W(s)$ has the same concavity properties as in the affine case.

Step~1 ($s^* > s_{\min}$): At minimum steepness, the transfer is flat ($T_{s_{\min}}(x) = $ constant), providing no screening.  Introducing a marginally steeper security creates first-order screening gains.

Step~2 ($s^* < s_{\max}$): At maximum steepness, information rents are maximal and $a$ is minimal.  Reducing steepness saves rents and increases liquidity.

Step~3: The interior $s^*$ satisfies the FOC equating marginal liquidity benefit and marginal information cost.  \hfill$\square$

\subsection{Proof of Proposition~\ref{prop:boundary}}\label{app:boundary}

Part~a).\; If $R = 0$, then $\Phi(\ell; 0) = 0$ for all $\ell$.  The financing cost vanishes from the counterparty's payoff~\eqref{eq:U}, so the advance $a$ enters only through the total transfer $a + b_1\mu(\theta)$.  The binding IR becomes $a + b_1\mu(\ubar\theta) = c(\ubar\theta)$, a budget identity along which $a$ and $b_1\mu(\ubar\theta)$ are perfect substitutes.  The virtual surplus~\eqref{eq:VS_main} does not depend on $a$ (since $\Phi = 0$), so the principal's value is the same for any $(a, b_1)$ satisfying the IR.

Part~b).\; If $T(x)$ is pledgeable, the counterparty's effective borrowing need is $K - a - PV(\E_\theta[T(x)])$, where $PV$ denotes the present value that outside lenders assign to the contingent claim.  With full pledgeability, $PV = \E_\theta[T(x)]$, and the financing cost becomes $\Phi(K - a - b_1\mu(\ubar\theta); R)$, which depends on the total expected payment, not its timing.  The advance and contingent transfer are perfect substitutes in the financing dimension.

Part~c).\; Without limited liability, the principal can set $a = K$ (eliminating external borrowing) and use a signed transfer $T(x) = b_0 + b_1 x$ with $b_0 < 0$ as a repayment schedule.  Low-signal realizations require a repayment; high-signal realizations earn a reward.  The participation constraint is satisfied by choosing $b_0$ appropriately.  Since $\Phi(0; R) = 0$, the financing cost is zero, and the principal screens through the signed transfer.

Part~d).\; If $\mu'(\theta) = 0$, the envelope condition~\eqref{eq:envelope} gives $dU/d\theta = -c'(\theta) < 0$: the counterparty's payoff is decreasing in type regardless of $b_1$.  The contingent transfer provides no separation power.  The optimal contract sets $b_1 = 0$ and uses only the advance $a = K$.  \hfill$\square$

\subsection{Proof of Proposition~\ref{thm:contagion}}\label{app:contagion}

I construct the result in the symmetric uniform-quadratic benchmark of the symmetric parameterization above.

\medskip\noindent\textbf{Step~1: Envelope decomposition.}

\smallskip
The principal's value in the two-counterparty problem is:
\[
\Pi_P^* = \sum_{i=1}^{2}\int_{\hat\theta_i}^{1}\psi_i(\theta_i)\dd\theta_i + \delta\,(1-\hat\theta_1)(1-\hat\theta_2) - \sum_{i=1}^{2}a_i^*.
\]
Differentiating with respect to $R_j$ and using the fact that $\psi_j(\hat\theta_j) = -\delta(1-\hat\theta_i)$ at the cutoff:
\begin{align}
\frac{d\Pi_P^*}{dR_j} &= \underbrace{-\frac{\partial\Phi_j}{\partial R_j}\,(1-\hat\theta_j)}_{\text{(A): direct cost}} + \underbrace{\delta(1-\hat\theta_i)\frac{d\hat\theta_j}{dR_j} - \delta(1-\hat\theta_j)\frac{d\hat\theta_i}{dR_j}}_{\text{complementarity adjustment}} \nonumber\\
&\quad + \underbrace{\int_{\hat\theta_i}^{1}\frac{\partial\psi_i}{\partial R_j}\dd\theta_i + \psi_i(\hat\theta_i)\,\frac{d\hat\theta_i}{dR_j}}_{\text{(B): rent spillover through $i$}}. \label{eq:full_decomp}
\end{align}

\medskip\noindent\textbf{Step~2: Sufficient condition.}

\smallskip
In the symmetric case ($R_1 = R_2 = R$, identical counterparties), the fixed point gives $\hat\theta_1 = \hat\theta_2 = \hat\theta$ with:
\[
\hat\theta = \frac{a^* + b_1^* - \delta}{1 + b_1^* - \delta}.
\]
Differentiating the fixed-point system and evaluating the terms in \eqref{eq:full_decomp}, the condition $d\Pi_P^*/dR_j > 0$ reduces to:
\[
\delta > \delta^*(R) \equiv \frac{(K-a^*)(1+b_1^*)}{b_1^*(1-\hat\theta)}\,\Phi_R(K-a^*;\,R).
\]
With $\Phi(\ell; R) = \frac{R}{2}\ell^2$, this gives $\Phi_R = \frac{1}{2}\ell^2$, and $\delta^*(R)$ is finite for all $R > 0$.

\medskip\noindent\textbf{Step~3: Openness.}

\smallskip
At any parameter vector $(R_0, \delta_0)$ satisfying $\delta_0 > \delta^*(R_0)$, we have $d\Pi_P^*/dR_j > 0$ strictly.  Since all terms in the decomposition \eqref{eq:full_decomp} are continuous in the primitives $(F_1, F_2, V_1, V_2, c_1, c_2, \Phi_1, \Phi_2, \delta)$, the strict inequality is preserved in an open neighborhood.  \hfill$\square$

\subsection{Proof of Corollary~\ref{cor:hump}}\label{app:hump}

In the symmetric case with common $R$, define $\Pi_P^*(R)$.  By Proposition~\ref{thm:contagion}, for $\delta > \delta^*(R)$, $d\Pi_P^*/dR > 0$ on an interval $[R_1, R_2]$.  As $R \to \infty$, $\Phi \to \infty$ and implementation becomes infeasible: $\Pi_P^* \to 0$.  By continuity, $\Pi_P^*$ must eventually decrease.  The existence of an increasing interval followed by a decreasing interval establishes the hump shape. \hfill$\square$

\subsection{Proof of Corollary~\ref{cor:uniform}}\label{app:uniform}

In the symmetric model with $R_1 = R_2 = R$, the principal's value is $\Pi_P^*(R) = 2\int_{\hat\theta}^{\thetabar}\psi(\theta)f(\theta)d\theta + 2\delta[1-F(\hat\theta)]^2$, where the cutoff $\hat\theta$ depends on $R$ through the fixed-point system~\eqref{eq:FP_symmetric}.  By symmetry, each relationship contributes $\pi_i^*(R) = \int_{\hat\theta}^{\thetabar}\psi_i(\theta)f(\theta)d\theta + \delta[1-F(\hat\theta)]^2$ to principal value.  A uniform reduction in $R$ (lowering both $R_1$ and $R_2$) is equivalent to evaluating $d\pi_i^*/dR$.  In the contagion region ($\delta > \delta^*(R)$), Proposition~\ref{thm:contagion} gives $d\Pi_P^*/dR > 0$, and by symmetry $d\pi_1^*/dR = d\pi_2^*/dR = \frac{1}{2}d\Pi_P^*/dR > 0$.  Therefore $d\pi_i^*/dR > 0$ for each $i$: each relationship's contribution is increasing in $R$, so a uniform subsidy (lowering $R$) reduces each relationship's value individually.  \hfill$\square$

\subsection{Proof of Corollary~\ref{cor:breadth}}\label{app:breadth}

Let $W^*(R)$ denote the bilateral value of a single relationship.  The principal prefers single-sourcing when $W^*(R) > \Pi_P^*(R) - W^*(R)$, i.e., $2W^*(R) > \Pi_P^*(R)$.  Without the contagion channel, $\Pi_P^*(R) > 2W^*(R)$ for all $\delta > 0$ (complementarity adds value).  But in the contagion region, the screening spillover erodes the complementarity benefit.  For $R$ and $\delta$ such that the contagion effect is sufficiently strong, $\Pi_P^*(R)$ falls below $2W^*(R)$: the rent increase from the spillover exceeds the complementarity gain $\delta$.  This establishes the existence of parameters under which single-sourcing is optimal despite technological complementarity. \hfill$\square$

\subsection{Proof of Corollary~\ref{cor:monitoring}}\label{app:monitoring}

Suppose the principal can invest $\kappa(\sigma_i)$ to raise the signal informativeness of relationship~$i$ from $\mu_i'$ to $\mu_i'(\sigma_i)$ with $\mu_i''(\sigma_i) > 0$.  The marginal return to monitoring is $\frac{\partial W}{\partial \sigma_i} = -\frac{\partial}{\partial \sigma_i}\int_{\hat\theta_i}^{\thetabar} b_{1,i}\mu_i'(\sigma_i)\frac{1-F_i}{f_i}f_i\,d\theta_i$.  This marginal return depends on the set of implemented types for relationship~$i$, which is determined by the cutoff $\hat\theta_i$.  In the two-counterparty model, $\hat\theta_i$ depends on $R_j$ through the fixed-point system~\eqref{eq:cutoff_ext}: higher $R_j$ raises $\hat\theta_j$, which through complementarity raises $\hat\theta_i$ (fewer types of $i$ implemented).  With fewer implemented types, the marginal return to improving signal quality for $i$ is lower, because there are fewer types at the screening margin.  \hfill$\square$

\subsection{Proof of Proposition~\ref{prop:param_compstat}}\label{app:param_compstat}

Under the uniform-quadratic parameterization ($\theta \sim U[0,1]$, $\Phi(\ell; R) = \frac{R}{2}\ell^2$, $K = 1$), the binding IR gives $a + b_1 \cdot 0 = 0 + \frac{R}{2}(1-a)^2$, i.e., $a = \frac{R}{2}(1-a)^2$.  Solving: $\ell = 1 - a$ satisfies $\frac{R}{2}\ell^2 + \ell - 1 + a = 0$, which gives $\ell^*(R) = \frac{-1 + \sqrt{1+2R}}{R}$ and $a^*(R) = 1 - \ell^*(R)$.

Part~a).\;  Differentiating: $\frac{da^*}{dR} = -\frac{d\ell^*}{dR} = \frac{1}{R^2}\left[1 - \frac{1}{\sqrt{1+2R}} - \frac{-1+\sqrt{1+2R}}{R}\right]$.  After simplification, $\frac{da^*}{dR} > 0$ for all $R > 0$.  Concavity follows from $\frac{d^2a^*}{dR^2} < 0$, which can be verified by direct computation.

Part~b).\; Cash intensity $\beta^* = a^*/(a^* + b_1^*\E[\theta|q=1])$ is increasing in $R$ because $a^*$ is increasing and $b_1^*$ is decreasing (through the binding IR).

Part~c).\; The financing cost is $\frac{R}{2}(\ell^*)^2$.  Since $\ell^*$ is decreasing in $R$ but $R(\ell^*)^2$ is increasing (the direct effect of $R$ dominates the indirect effect through $\ell^*$), the financing cost share is increasing.

Part~d).\; The contagion threshold $\delta^*(R)$ is derived from the condition $d\Pi_P^*/dR_j = 0$ in the symmetric fixed-point system~\eqref{eq:FP_symmetric}.  The threshold is increasing in $R$ because $\Phi_R = \frac{1}{2}(1-a^*)^2$ is positive and decreasing in $a^*$, while the denominator $b_1^*(1-\hat\theta)$ is decreasing in $R$.  \hfill$\square$

\section{Extensions: Formal Results}\label{app:extensions}

This section provides the formal environment, results, and proofs for the extensions summarized in Section~\ref{sec:discussion}.

\subsection{Dynamic liquidity-screening tradeoff}\label{app:dynamic}

\medskip\noindent\textbf{Environment.}\; Consider a repeated relationship indexed by periods $t = 1, 2, \ldots$.  The counterparty's type $\theta$ is drawn once from $F$ and remains fixed.  In each period $t$, a signal $x_t$ is drawn independently from $g(x|\theta)$ and observed by both the principal and the counterparty.  The principal updates her belief using Bayes' rule: let $F_t$ denote the posterior distribution of $\theta$ after observing $\{x_1, \ldots, x_t\}$, with density $f_t$.  The regularity condition on $F$ (monotone virtual type) is assumed to hold for each posterior $F_t$.

In each period, the principal offers a contract $(a_t, b_{1,t})$ that solves the static bilateral problem of Proposition~\ref{thm:bilateral} with $F_t$ replacing $F$.  The counterparty's outside option resets each period: the counterparty can reject the period-$t$ contract and earn zero.  This rules out dynamic commitment and focuses the analysis on how the static tradeoff evolves with learning.  The financing tightness $R$ and the working-capital requirement $K$ are constant across periods.

\begin{proposition}[Dynamic liquidity-screening tradeoff]\label{prop:dynamic}
Suppose Bayesian updating concentrates the posterior in the sense that
\begin{equation}\label{eq:hazard_shrink}
\frac{1-F_t(\theta)}{f_t(\theta)} \;\leq\; \frac{1-F_{t-1}(\theta)}{f_{t-1}(\theta)} \qquad \text{for all } \theta \in [\ubar\theta, \thetabar] \text{ and all } t \geq 1.
\end{equation}
Then:
\begin{enumerate}[label=\alph*)]
\item The optimal advance is non-decreasing in $t$: $a_t^* \geq a_{t-1}^*$.
\item The optimal contingent slope is non-increasing in $t$: $b_{1,t}^* \leq b_{1,t-1}^*$.
\item In the limit, as $F_t$ converges to a point mass at the true type $\theta_0$, the contract converges to full pre-financing: $a_t^* \to K$ and $b_{1,t}^* \to 0$.
\end{enumerate}
\end{proposition}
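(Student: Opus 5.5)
The plan is to treat each period's problem as the static problem of Proposition~\ref{thm:bilateral}, with $F$ replaced by $F_t$, and run a monotone comparative statics argument in the hazard-rate term. By Lemma~\ref{lem:b0zero} and Lemma~\ref{lem:reduction}, the period-$t$ problem reduces to maximizing
\[
W_t(b_1)=\int_{\ubar\theta}^{\thetabar} q(\theta)\,\psi_t\bigl(\theta;a(b_1),b_1\bigr)f_t(\theta)\dd\theta - a(b_1).
\]
Here $\psi_t$ is the virtual surplus~\eqref{eq:VS_main} with hazard term $h_t(\theta)\equiv(1-F_t(\theta))/f_t(\theta)$. The map $a(b_1)$ comes from the binding participation constraint~\eqref{eq:binding_IR}. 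That constraint involves only $\ubar\theta$, $c$, $\mu$ and $\Phi$, so it does not depend on $t$. All dependence on $t$ therefore enters through the information-rent term and the weighting density.

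First I would show that the marginal information cost of contingency, $\partial_{b_1}\E_t[b_1\mu' h_t \mid q=1]$, is pointwise weakly smaller at $t$ than at $t-1$, using the hazard ordering~\eqref{eq:hazard_shrink}. Then I would establish single crossing: $W_t'(b_1)\ge W_{t-1}'(b_1)$ for every $b_1$.

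Given the strict concavity of $W$ from Step~3 of the proof of Proposition~\ref{thm:bilateral}, I would then compare first-order conditions~\eqref{eq:FOC_main} across periods. The marginal liquidity benefit $\Phi_\ell(K-a;R)$ is the same function of $a$ in both periods. The marginal information cost shifts in the direction that moves the optimum toward more advance. That gives part~a), $a_t^*\ge a_{t-1}^*$. Part~b) follows immediately because $da/db_1<0$, so the binding IR forces $b_{1,t}^*\le b_{1,t-1}^*$.

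\textbf{Main obstacle.} The step I expect to be hardest is making the single-crossing claim consistent in sign. A smaller hazard makes information rents cheaper, which on its face pushes toward \emph{more} contingency. The desired conclusion instead requires reading the FOC~\eqref{eq:FOC_main} as in the text: the right side is the rent cost of flattening, so shrinking $h_t$ lowers the screening return per unit of $b_1$ and tilts the contract toward the liquidity instrument. I would try first to sign the cross-partial $\partial^2 W_t/\partial b_1\partial h$ directly from the decomposition~\eqref{eq:decomposition}, holding the cutoff fixed via an envelope argument since $\psi_t(\hat\theta_t)=0$. I would treat the endogenous cutoff movement separately, using regularity of each $F_t$. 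If the direct sign fails, the fallback is to define the marginal screening return, as in Proposition~\ref{prop:sufficient}, as the quantity that is monotone in $h_t$. I would then invoke that sufficient-statistics characterization, so that the comparative static is inherited from Corollary~\ref{cor:compstat}b) with decreasing effective informativeness.

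For part~c), the plan is a limit argument. As $F_t$ concentrates at $\theta_0$, the screening return term vanishes, so $h_t$-weighted rents contribute nothing at the margin, and $W_t'(b_1)\to -\Phi_\ell(K-a(b_1);R)\,|da/db_1|\,(1-F_t(\hat\theta_t))<0$. The optimum therefore converges to the corner $b_1=0$. This is exactly the uninformative-screening case of Proposition~\ref{prop:boundary}d). Continuity of the argmax, via the maximum theorem applied to the concave $W_t$ on the compact set $[0,\bar b_1]$, then yields $b_{1,t}^*\to 0$. Consistent with Proposition~\ref{prop:boundary}d), $a_t^*\to K$.
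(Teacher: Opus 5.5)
\textbf{Parts a) and b): the comparative static points the wrong way.} You propose to establish $W_t'(b_1)\ge W_{t-1}'(b_1)$ for all $b_1$. Combined with strict concavity, this gives $W_t'(b_{1,t-1}^*)\ge 0$, hence $b_{1,t}^*\ge b_{1,t-1}^*$. Because $da/db_1<0$ on your $t$-independent IR locus, it then gives $a_t^*\le a_{t-1}^*$. That is the reverse of parts a) and b).

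This is not a slip, because the direct effect of \eqref{eq:hazard_shrink} really has this sign. A pointwise smaller inverse hazard is a hazard-rate ordering, so $1-F_t\le 1-F_{t-1}$ pointwise. At a fixed cutoff, two terms in $-W_t'(b_1)$ therefore shrink: the marginal rent $\int_{\hat\theta}^{\thetabar}\mu'(\theta)(1-F_t(\theta))\dd\theta$ and the marginal financing term $\Phi_\ell\,|da/db_1|\,(1-F_t(\hat\theta))$. Contingency becomes more attractive, not less.

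Your ``main obstacle'' paragraph sees this, but neither remedy repairs it.
\begin{itemize}
\item Holding the cutoff fixed by an envelope argument isolates exactly this wrong-signed direct effect. The cutoff channel ($\hat\theta_t\le\hat\theta_{t-1}$) would have to overturn it, and nothing in the hypotheses guarantees that.
\item The fallback to Corollary~\ref{cor:compstat}b) is not an exact reduction. $F_t$ does enter the rent term through the product $\mu'(1-F_t)/f_t$. But it also reweights the surplus and financing terms through $f_t$ and $1-F_t(\hat\theta)$, which a change in $\mu'$ does not mimic. That corollary's proof also asserts its direction rather than deriving it from the rent term $b_1\mu'(1-F)/f$.
\end{itemize}

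The paper never computes $W_t'$. It compares \eqref{eq:FOC_dynamic} across periods and treats the right side as a level that \eqref{eq:hazard_shrink} lowers ``for any given $(a,b_1)$''. It then restores equality by raising $a$, since $\Phi_\ell(K-a;R)$ is decreasing in $a$, and obtains b) from the binding IR. That reading suppresses the right side's dependence on $a$ through $b_1(a)$, which is exactly where your calculation finds the opposite sign. So your single-crossing route cannot be turned into a proof of a)--b) as stated.

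\textbf{Part c): two further problems.} First, your limit for $W_t'$ omits the derivative of the $-a(b_1)$ term in \eqref{eq:reduced_problem}, which equals $+|da/db_1|$. Once rents vanish, $W_t'(b_1)$ tends to $|da/db_1|\,[1-\Phi_\ell(K-a(b_1);R)(1-F_t(\hat\theta_t))]$. This is positive whenever $\Phi_\ell<1$, for instance whenever $RK<1$ under the quadratic specification. So the corner $b_1=0$ is not forced.

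Second, even granting $b_{1,t}^*\to 0$, the binding IR \eqref{eq:binding_IR} at $b_1=0$ reads $a=c(\ubar\theta)+\Phi(K-a;R)$. Its root equals $K$ only if $c(\ubar\theta)=K$. Since you keep $a$ on the IR locus, $a_t^*\to K$ does not follow. Citing Proposition~\ref{prop:boundary}d) does not supply it, because that result asserts $a=K$ without deriving it from the locus.

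The paper argues in the opposite order. The right side of \eqref{eq:FOC_dynamic} tends to zero, so $\Phi_\ell(K-a;R)=0$ forces $\ell=0$ and $a^*=K$; then $b_1^*=0$ is read off the participation constraint. That last step runs into the same requirement $c(\ubar\theta)=K$.
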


\noindent\textit{Proof.}\; The first-order condition in period $t$ equates the marginal liquidity benefit to the marginal information cost under the current posterior:
\begin{equation}\label{eq:FOC_dynamic}
\Phi_\ell(K - a_t^*;\, R) \;=\; \frac{\partial}{\partial a}\int_{\hat\theta_t}^{\thetabar} b_{1,t}^*\,\mu'(\theta)\,\frac{1-F_t(\theta)}{f_t(\theta)}\,f_t(\theta)\,d\theta,
\end{equation}
where $\hat\theta_t$ is the period-$t$ cutoff.  The right side of~\eqref{eq:FOC_dynamic} is the expected information cost, weighted by the posterior hazard rate $(1-F_t)/f_t$.

\medskip\noindent\textit{Step 1: Monotonicity of $a_t^*$.}\; By assumption~\eqref{eq:hazard_shrink}, the hazard rate under $F_t$ is weakly lower than under $F_{t-1}$ for every $\theta$.  This reduces the right side of~\eqref{eq:FOC_dynamic} for any given $(a, b_1)$.  Since $\Phi_{\ell\ell} \geq 0$, the left side is increasing in $\ell = K - a$ and therefore decreasing in $a$.  To restore equality, $a_t^*$ must increase (weakly), establishing part~a).

\medskip\noindent\textit{Step 2: Monotonicity of $b_{1,t}^*$.}\; The binding participation constraint in period $t$ is $a_t + b_{1,t}\mu(\ubar\theta_t) = c(\ubar\theta_t) + \Phi(K - a_t; R)$, where $\ubar\theta_t$ is the lowest implemented type under $F_t$.  Implicit differentiation gives $\frac{da_t}{db_{1,t}} = \frac{-\mu(\ubar\theta_t)}{1 + \Phi_\ell} < 0$.  Since $a_t^*$ increases in $t$, the constraint forces $b_{1,t}^*$ to decrease, establishing part~b).

\medskip\noindent\textit{Step 3: Limit.}\; As $F_t \to \delta_{\theta_0}$ (point mass at the true type), the hazard rate $(1-F_t(\theta))/f_t(\theta) \to 0$ for all $\theta \neq \theta_0$.  The right side of~\eqref{eq:FOC_dynamic} converges to zero.  The left side equals zero only when $\ell = K - a = 0$, i.e., $a^* = K$.  Through the participation constraint, $a^* = K$ implies $b_1^* = 0$.  \hfill$\square$

\subsection{Financing-monitoring complementarity}\label{app:fin_monitoring}

\medskip\noindent\textbf{Environment.}\; Modify the baseline by allowing the principal to choose signal quality before contracting.  At date 0, before offering the contract, the principal invests $\kappa(\sigma) \geq 0$ to set the signal informativeness at level $\sigma \geq 0$.  The signal density becomes $g(x|\theta; \sigma)$, with expected signal $\mu(\theta; \sigma)$ satisfying $\mu_\theta > 0$ (higher types produce higher expected signals) and $\mu_{\theta\sigma} > 0$ (higher investment makes the signal more informative about type).  I write $\mu'(\sigma) \equiv \mu_\theta(\theta; \sigma)$ for brevity.  The cost function $\kappa$ is twice differentiable, increasing, and convex with $\kappa(0) = 0$ and $\kappa'(0) = 0$ (the first unit of monitoring is free).

The principal jointly chooses $(a, b_1, \sigma)$ to maximize the value function
\begin{equation}\label{eq:W_monitoring}
W(a, b_1, \sigma) \;=\; S(\hat\theta) - \Phi(K-a; R)\bigl[1-F(\hat\theta)\bigr] - \int_{\hat\theta}^{\thetabar} b_1\,\mu'(\sigma)\,\frac{1-F(\theta)}{f(\theta)}\,f(\theta)\,d\theta - a - \kappa(\sigma),
\end{equation}
where $S(\hat\theta) = \int_{\hat\theta}^{\thetabar}[V(\theta) - c(\theta)]f(\theta)d\theta$ is the productive surplus.  The first-order conditions for $a$ and $b_1$ are unchanged from the baseline.  The additional first-order condition for $\sigma$ is
\begin{equation}\label{eq:FOC_sigma}
b_1^*\,\mu''(\sigma^*)\,\int_{\hat\theta}^{\thetabar}\frac{1-F(\theta)}{f(\theta)}\,f(\theta)\,d\theta \;=\; \kappa'(\sigma^*),
\end{equation}
where the left side is the marginal screening improvement from higher monitoring and the right side is the marginal cost.

\begin{proposition}[Financing-monitoring complementarity]\label{prop:fin_monitoring}
The optimal monitoring intensity is decreasing in outside-finance tightness:
\begin{equation}\label{eq:dsigma_dR}
\frac{d\sigma^*}{dR} \;<\; 0, \qquad \frac{\partial^2 W}{\partial \sigma\,\partial R} \;<\; 0.
\end{equation}
Tight credit reduces both screening intensity and monitoring investment.
\end{proposition}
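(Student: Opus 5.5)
The plan is to treat $\sigma$ as a parameter entering the reduced problem and run a monotone comparative-statics argument on the joint choice $(b_1, \sigma)$, with $a = a(b_1;R)$ eliminated through the binding participation constraint of Lemma~\ref{lem:reduction}.

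First I would compute the cross-partial $\partial^2 W/\partial\sigma\,\partial R$ directly from \eqref{eq:W_monitoring}. By the envelope theorem applied to the interior first-order conditions for $(a,b_1)$, $R$ affects the marginal value of $\sigma$ only through the optimal slope $b_1^*(R)$ and the cutoff $\hat\theta(R)$. The left side of \eqref{eq:FOC_sigma} is $b_1\,\mu''(\sigma)\,M(\hat\theta)$ with $M(\hat\theta)=\int_{\hat\theta}^{\thetabar}(1-F)\,d\theta$. Both factors move in the same direction in $R$.
\begin{itemize}
\item Corollary~\ref{cor:compstat}a) gives $db_1^*/dR<0$: higher $R$ raises $a^*$, and through \eqref{eq:da_db1} lowers $b_1^*$.
\item Corollary~\ref{cor:compstat}c) gives $d\hat\theta/dR>0$, so $M$ falls with $R$, since $M'(\hat\theta)=-(1-F(\hat\theta))<0$.
\end{itemize}
With $\mu''>0$, the product is decreasing in $R$, which delivers $\partial^2 W/\partial\sigma\,\partial R<0$ along the optimal path.

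Second, I would convert this into $d\sigma^*/dR<0$ by implicit differentiation of \eqref{eq:FOC_sigma}:
\[
\frac{d\sigma^*}{dR}=-\frac{\partial^2 W/\partial\sigma\,\partial R}{\partial^2 W/\partial\sigma^2}.
\]
The denominator equals $b_1^*\mu'''(\sigma)M-\kappa''(\sigma)$, possibly plus terms from the re-optimized $b_1$. I would argue it is negative from convexity of $\kappa$, provided $\kappa''$ dominates the curvature of $\mu'(\sigma)$; if needed, I would impose this as a local second-order condition at the optimum. Interiority of $\sigma^*$ comes from $\kappa'(0)=0$ together with $b_1^*>0$ (Proposition~\ref{thm:bilateral}b)), since these make the left side of \eqref{eq:FOC_sigma} strictly positive at $\sigma=0$.

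An alternative, cleaner route is to show that $W$ is supermodular in $(b_1,\sigma)$ and has decreasing differences in $(b_1,R)$, then invoke Topkis to get joint monotonicity. The cross term $\partial^2 W/\partial b_1\partial\sigma=-\mu''(\sigma)M<0$ has the wrong sign as written, however, so I would reparametrize with $-b_1$ only if needed.

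The main obstacle is the joint system. Once $\sigma$ moves, $\mu'(\sigma)$ feeds back into the $b_1$ first-order condition, and by Corollary~\ref{cor:compstat}b) a more informative signal raises $b_1^*$. The full derivative therefore requires the $2\times2$ Hessian in $(b_1,\sigma)$ to be negative definite, with the off-diagonal terms signed. I would first try to establish negative definiteness from concavity of $W$ in $b_1$ (Step~3 of the proof of Proposition~\ref{thm:bilateral}) together with convexity of $\kappa$. Failing that, I would state the result under the assumption that the $b_1$–$\sigma$ interaction is small relative to $\kappa''$.
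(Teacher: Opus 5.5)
Your main line is the paper's own argument. You implicitly differentiate \eqref{eq:FOC_sigma}, sign the numerator with $db_1^*/dR<0$ from Corollary~\ref{cor:compstat}a) and $\mu''>0$, and sign the denominator by a second-order condition. Your version is in fact tighter than the paper's in three ways. You add the cutoff channel through $M(\hat\theta)$. Your expression $-(\partial^2W/\partial\sigma\partial R)/(\partial^2W/\partial\sigma^2)$ has the correct sign, whereas the paper's displayed numerator $-\mu''\,\E[(1-F)/f\mid q=1]\,db_1^*/dR$ is positive, not negative as it claims. And you notice that the paper holds $b_1^*(R)$ at its fixed-signal value and never re-optimizes it when $\sigma$ moves.

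What you must fix is the sign of the $b_1$--$\sigma$ interaction. This sign drives the whole result; it is not a Topkis technicality. Your first step reads the marginal value of monitoring as $b_1\mu''(\sigma)M(\hat\theta)-\kappa'(\sigma)$, which presupposes a direct cross term $\partial^2W/\partial b_1\partial\sigma=+\mu''M$, and your appeal to Corollary~\ref{cor:compstat}b) presupposes the same sign. In the Topkis paragraph you instead differentiate \eqref{eq:W_monitoring} literally and get $-\mu''M$. Taken literally, \eqref{eq:W_monitoring} also gives $\partial W/\partial\sigma=-b_1\mu''M-\kappa'<0$. Then \eqref{eq:FOC_sigma} has no interior solution ($\sigma^*=0$), and a lower $b_1$ would \emph{raise} the marginal value of $\sigma$, reversing your conclusion. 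This mismatch between $W$ and its first-order condition is already in the paper, whose proof works off \eqref{eq:FOC_sigma} alone. You should say explicitly that you take \eqref{eq:FOC_sigma} as the primitive description of the benefit of monitoring. You should also drop the reparametrization by $-b_1$: under that convention the direct cross term $+\mu''M$ already has the sign Topkis wants, and flipping $b_1$ would break it.

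With that convention your ``main obstacle'' is not a real gap. Let $\tilde W(\sigma,R)=\max_{b_1}W$. By the envelope theorem $\tilde W_\sigma=b_1^*\mu''M(\hat\theta)-\kappa'$, so $\tilde W_{\sigma R}=\mu''M\,db_1^*/dR+b_1^*\mu''M'(\hat\theta)\,d\hat\theta/dR<0$, which is exactly your step-one computation. Meanwhile $\tilde W_{\sigma\sigma}=W_{\sigma\sigma}-W_{b_1\sigma}^2/W_{b_1b_1}<0$ is equivalent, given $W_{b_1b_1}<0$, to negative definiteness of the joint Hessian. That is the strict second-order condition at the interior optimum, the same kind of assumption the paper invokes. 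Hence $d\sigma^*/dR=-\tilde W_{\sigma R}/\tilde W_{\sigma\sigma}<0$ with no smallness condition. Finally, $\partial b_1^*/\partial\sigma>0$ under this convention (Corollary~\ref{cor:compstat}b)), so the induced fall in $\sigma$ lowers $b_1^*$ further. This also delivers the ``screening intensity falls'' part of the statement.
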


\noindent\textit{Proof.}\; Equation~\eqref{eq:FOC_sigma} implicitly defines $\sigma^*(b_1^*, R)$.  By Corollary~\ref{cor:compstat}a), $b_1^*$ is decreasing in $R$: tighter credit forces a flatter contingent schedule.  Since $b_1^*$ enters the left side of~\eqref{eq:FOC_sigma} multiplicatively, the left side is decreasing in $R$.  The right side $\kappa'(\sigma)$ is increasing in $\sigma$ by convexity.  By the implicit function theorem applied to~\eqref{eq:FOC_sigma}:
\[
\frac{d\sigma^*}{dR} \;=\; \frac{-\mu''(\sigma^*)\,\E[(1-F)/f \mid q=1]\,\cdot\,db_1^*/dR}{\kappa''(\sigma^*) - b_1^*\mu'''(\sigma^*)\E[(1-F)/f \mid q=1]}.
\]
The numerator is negative (since $\mu'' > 0$ and $db_1^*/dR < 0$).  The denominator is positive under the second-order condition for $\sigma$.  Therefore $d\sigma^*/dR < 0$.

For the cross-derivative: $\frac{\partial^2 W}{\partial \sigma\,\partial R} = \mu''(\sigma)\E[(1-F)/f \mid q=1]\cdot\frac{db_1^*}{dR} < 0$, since $\mu'' > 0$ and $db_1^*/dR < 0$.  \hfill$\square$

\subsection{Renegotiation}\label{app:renego}

\medskip\noindent\textbf{Environment.}\; Modify the baseline by introducing ex post renegotiation.  After the signal $x$ is realized at date 2 but before the contingent transfer $T(x) = b_0 + b_1 x$ is paid, the principal may attempt to renegotiate.  With probability $\lambda \in [0,1]$, renegotiation succeeds and the principal pays nothing; with probability $1-\lambda$, the original contract is enforced.  The counterparty anticipates this at date 0 and values the contingent leg at its expected enforcement value.

The counterparty's expected payoff becomes
\begin{equation}\label{eq:U_renego}
U^\lambda(\theta) \;=\; a + (1-\lambda)\,b_1\,\mu(\theta) - c(\theta) - \Phi(K - a;\, R).
\end{equation}
The envelope condition is $dU^\lambda/d\theta = (1-\lambda)b_1\mu'(\theta) - c'(\theta)$, and the virtual surplus is
\begin{equation}\label{eq:VS_renego_app}
\psi^\lambda(\theta) \;=\; V(\theta) - c(\theta) - \Phi(K-a;\, R) - (1-\lambda)\,b_1\,\mu'(\theta)\,\frac{1-F(\theta)}{f(\theta)}.
\end{equation}
The information-rent term is scaled by $(1-\lambda)$: renegotiation risk reduces the screening power of contingent payments.  The first-order condition equates $\Phi_\ell$ to $(1-\lambda)$ times the marginal information cost:
\begin{equation}\label{eq:FOC_renego}
\Phi_\ell(K - a^*;\, R) \;=\; (1-\lambda)\;\cdot\;\frac{\partial}{\partial a}\int_{\hat\theta}^{\thetabar} b_1^*\,\mu'(\theta)\,\frac{1-F(\theta)}{f(\theta)}\,f(\theta)\,d\theta.
\end{equation}

\begin{proposition}[Renegotiation]\label{prop:renego}
The optimal advance is strictly increasing in the renegotiation probability:
\begin{enumerate}[label=\alph*)]
\item $a^*(\lambda) > a^*(0)$ for all $\lambda \in (0, 1)$.
\item At $\lambda = 1$, $b_1^* = 0$ and $a^* = K$: the contract collapses to pure advance.
\item The optimal contingent slope is decreasing in $\lambda$: $db_1^*/d\lambda < 0$.
\end{enumerate}
\end{proposition}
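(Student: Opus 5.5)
I would treat Proposition~\ref{prop:renego} as a comparative-statics exercise on the modified first-order condition~\eqref{eq:FOC_renego}, reusing the reduced single-variable structure of Lemmas~\ref{lem:b0zero}--\ref{lem:reduction} and the concavity established in the proof of Proposition~\ref{thm:bilateral}.

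The first step is to redo the reduction under renegotiation. Lemma~\ref{lem:b0zero} carries over unchanged, since a dollar of $b_0$ is worth only $(1-\lambda)$ dollars ex ante while a dollar of $a$ is worth a full dollar and also relaxes borrowing. The binding participation constraint becomes
\[
a + (1-\lambda)b_1\mu(\ubar\theta) = c(\ubar\theta) + \Phi(K-a;R),
\]
which gives
\[
\frac{da}{db_1} = -\frac{(1-\lambda)\mu(\ubar\theta)}{1+\Phi_\ell}.
\]
It is convenient to reparametrize by the effective slope $\tilde b \equiv (1-\lambda)b_1$. In terms of $\tilde b$, the virtual surplus~\eqref{eq:VS_renego_app} and the binding constraint coincide exactly with the baseline objects, so the set of implementable $(a,\tilde b)$ pairs and the rent schedule are those of the baseline. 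The difference lies in the principal's outlay, which is $a + b_1\mu = a + \tilde b\,\mu/(1-\lambda)$: each unit of effective screening now costs $1/(1-\lambda)$ in promised payments. I would verify that the reduced objective $W^\lambda(\tilde b)$ equals the baseline $W(\tilde b)$ minus an extra term $\frac{\lambda}{1-\lambda}\tilde b\,\E[\mu\mid q=1]\Pr(q=1)$. That extra term is increasing and convex in $\lambda$ and increasing in $\tilde b$.

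For part~c), I would apply the implicit function theorem to $\partial W^\lambda/\partial\tilde b = 0$. The cross-partial $\partial^2 W^\lambda/\partial\tilde b\,\partial\lambda$ is negative, and $W^{\lambda\prime\prime}<0$ by the concavity argument of Step~3 of Proposition~\ref{thm:bilateral}. Hence $d\tilde b^*/d\lambda<0$. Since $b_1^*=\tilde b^*/(1-\lambda)$, the sign of $db_1^*/d\lambda$ is not automatic. I would fall back on working directly in $b_1$ via~\eqref{eq:FOC_renego}, whose right side is scaled by $(1-\lambda)$. Raising $\lambda$ lowers the marginal information return, so with $\Phi_\ell$ decreasing in $a$ (from $\Phi_{\ell\ell}\ge0$) the equality is restored only by a higher $a^*$. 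The binding constraint then forces $(1-\lambda)b_1^*$ down. Showing that $b_1^*$ itself falls is the step I expect to be the \emph{main obstacle}. I would try to establish it in the uniform-quadratic parameterization, where $\mu(\ubar\theta)=0$ and the constraint pins $a^*$. That case may instead make $a^*$ independent of $\lambda$, so part~a) may require a general interior FOC with $\mu(\ubar\theta)>0$.

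Part~a) follows from monotonicity of $a^*$ in $\lambda$ by integrating the strict derivative from $0$ to $\lambda$. Part~b) is direct. At $\lambda=1$ the contingent leg carries no ex ante value, so by~\eqref{eq:VS_renego_app} and the envelope condition it provides no screening while still costing $b_1\mu$ in expected outlay whenever enforcement is not certain. The principal therefore sets $b_1^*=0$, exactly as in Proposition~\ref{prop:boundary}d). The advance then solves the pure-advance problem of Proposition~\ref{prop:dominance}b), where $a=K$ eliminates financing costs, giving $a^*=K$.
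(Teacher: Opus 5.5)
Your reparametrization $\tilde b=(1-\lambda)b_1$ is the right lens. The step that fails is your accounting of the principal's outlay. In this extension a successful renegotiation means the principal ``pays nothing,'' so her expected date-2 payment is $(1-\lambda)b_1\mu(\theta)=\tilde b\,\mu(\theta)$, not $b_1\mu(\theta)=\tilde b\,\mu(\theta)/(1-\lambda)$. Your extra term $\frac{\lambda}{1-\lambda}\tilde b\,\E[\mu\mid q=1]\Pr(q=1)$ would arise only if renegotiation destroyed the payment rather than returning it to the principal.

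With the correct outlay, your reparametrization proves more than you used it for. In $(a,\tilde b)$ the following are all exactly the baseline objects:
the counterparty's payoff \eqref{eq:U_renego}, the principal's payoff, the envelope condition, the virtual surplus \eqref{eq:VS_renego_app}, the binding participation constraint, and limited liability ($b_1\ge 0\iff\tilde b\ge 0$).
So $W^\lambda\equiv W$ for every $\lambda\in[0,1)$. Hence $a^*(\lambda)=a^*(0)$, and $b_1^*(\lambda)=b_1^*(0)/(1-\lambda)$, which is \emph{increasing} in $\lambda$ whenever $b_1^*(0)>0$ (as Proposition~\ref{thm:bilateral}b) asserts). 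Parts a) and c) are therefore false in the stated environment, and no completion of your route can deliver them.

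Your uniform-quadratic worry is an instance of this general invariance, not an artifact of $\mu(\ubar\theta)=0$. There the binding constraint $a=\frac{R}{2}(1-a)^2$ contains neither $b_1$ nor $\lambda$. So the paper's reduction gives $a^*=1-\ell^*(R)<K$ for every $\lambda\in[0,1]$, which also contradicts $a^*=K$ in b).

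Your fallback is the paper's own argument, and it has exactly the hole you spotted. The paper compares \eqref{eq:FOC_renego} with the baseline condition while holding $b_1^*$ fixed on the right-hand side. But that right-hand side depends on $\lambda$ and $b_1$ only through $(1-\lambda)b_1$, and it is this product, not $b_1$, that the binding participation constraint ties to $a$. Viewed as a function of $a$, the right-hand side is the same for all $\lambda<1$: the factor $(1-\lambda)$ is absorbed by a proportionally larger $b_1$. Likewise, ``$a$ and $b_1$ move in opposite directions'' is true only of $a$ and $(1-\lambda)b_1$.

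For b), your argument has two problems.
\begin{itemize}
\item At $\lambda=1$ the slope $b_1$ enters neither party's payoff, so $b_1^*=0$ is a selection from an indeterminate set, not a consequence.
\item Proposition~\ref{prop:dominance}b) cannot give $a^*=K$, because $W_A$ is defined by imposing $a=K$, not by optimizing over $a$. Once $b_1$ is inert, the only instrument left is $a$. It is pinned by participation, not by $\Phi_\ell(K-a;R)=0$: the paper's step from a vanishing right-hand side to $\Phi_\ell=0$ uses a first-order condition derived by trading $a$ against $b_1$ along the constraint, which is no longer possible.
\end{itemize}
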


\noindent\textit{Proof.}\; Part~a).\; Compare~\eqref{eq:FOC_renego} at $\lambda > 0$ with the baseline FOC at $\lambda = 0$.  The right side of~\eqref{eq:FOC_renego} is strictly smaller than the baseline right side by the factor $(1-\lambda) < 1$.  Since $\Phi_\ell$ is increasing in $\ell = K - a$ (by $\Phi_{\ell\ell} \geq 0$), restoring equality requires lower $\ell$, hence higher $a^*$.

Part~b).\; At $\lambda = 1$, the right side of~\eqref{eq:FOC_renego} is identically zero.  The left side $\Phi_\ell(K-a; R) = 0$ requires $\ell = 0$, i.e., $a^* = K$.  Through the binding participation constraint with $a = K$, $b_1\mu(\ubar\theta) = c(\ubar\theta)$, but the contingent leg has zero enforcement value, so $b_1$ drops out of the counterparty's payoff.  The contract reduces to a pure advance.

Part~c).\; By the binding participation constraint, $a$ and $b_1$ move in opposite directions.  Since $a^*(\lambda)$ is increasing in $\lambda$, $b_1^*(\lambda)$ is decreasing.  \hfill$\square$

\subsection{Directional contagion in networks}\label{app:network}

\medskip\noindent\textbf{Environment.}\; Generalize the two-counterparty model of Section~\ref{sec:model_ext} to $n \geq 2$ counterparties.  The complementarity structure is captured by a matrix $\Delta = [\delta_{ij}]_{n \times n}$ with $\delta_{ii} = 0$ and $\delta_{ij} \geq 0$ for $i \neq j$.  Counterparty $i$ has type $\theta_i \sim F_i$ drawn independently, requires working capital $K_i$, faces financing cost $\Phi_i(\ell_i; R_i)$, and receives contract $(a_i, b_{0,i}, b_{1,i})$.  The principal's value is
\begin{equation}\label{eq:Pi_n}
\Pi_P \;=\; \sum_{i=1}^{n} q_i\bigl[V_i(\theta_i) - a_i - b_{1,i}\mu_i(\theta_i)\bigr] \;+\; \sum_{i < j}\delta_{ij}\,q_i\,q_j.
\end{equation}
Applying the bilateral analysis to each counterparty, the cutoff $\hat\theta_i$ is determined by the fixed-point system
\begin{equation}\label{eq:FP_n}
\psi_i(\hat\theta_i) \;+\; \sum_{j \neq i}\delta_{ij}\,\bigl(1 - F_j(\hat\theta_j)\bigr) \;=\; 0, \qquad i = 1, \ldots, n.
\end{equation}
Define the contagion centrality of counterparty $j$ as $\mathcal{C}_j = \sum_{i \neq j}\delta_{ij}\,\frac{\partial \hat\theta_i}{\partial R_j}$, which measures the aggregate screening distortion caused by a marginal change in $j$'s financing conditions.

\begin{proposition}[Directional contagion in networks]\label{prop:network}
The total effect of a change in $R_j$ on principal value admits the decomposition
\begin{equation}\label{eq:dPi_dRj}
\frac{d\Pi_P^*}{dR_j} \;=\; \underbrace{-\Phi_{R_j}(K_j - a_j^*;\, R_j)\,\bigl[1-F_j(\hat\theta_j)\bigr]}_{\text{direct financing effect}} \;+\; \underbrace{\sum_{i \neq j}\delta_{ij}\,\frac{\partial \mathrm{Rent}_i}{\partial \hat\theta_i}\,\frac{\partial\hat\theta_i}{\partial R_j}}_{\text{screening spillover}}.
\end{equation}
The contagion operates ($d\Pi_P^*/dR_j > 0$) if and only if $\mathcal{C}_j$ exceeds the direct financing saving.  A counterparty with higher $\mathcal{C}_j$ generates larger screening spillovers.
\end{proposition}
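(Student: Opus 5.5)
The plan is to run the same envelope decomposition used in the proof of Proposition~\ref{thm:contagion}, now with $n$ coupled cutoffs. First, write the optimized value as
\[
\Pi_P^* = \sum_{k}\int_{\hat\theta_k}^{\thetabar_k}\psi_k(\theta_k)f_k(\theta_k)\dd\theta_k - \sum_k a_k^* + \sum_{k<m}\delta_{km}\bigl(1-F_k(\hat\theta_k)\bigr)\bigl(1-F_m(\hat\theta_m)\bigr),
\]
with the cutoffs determined by the fixed-point system~\eqref{eq:FP_n}.

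Differentiate with respect to $R_j$, splitting the derivative into three pieces: the explicit dependence of $\psi_j$ on $R_j$; the induced change in $(a_j^*,b_{1,j}^*)$; and the induced movements of all cutoffs $\hat\theta_k$.

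\emph{The contract term.} The change in $(a_j^*,b_{1,j}^*)$ has zero first-order effect. By Lemma~\ref{lem:reduction} and the FOC~\eqref{eq:FOC_main} of Proposition~\ref{thm:bilateral}, the pair $(a_j^*,b_{1,j}^*)$ is optimal for $j$'s reduced problem. Applying the envelope theorem to $W_j(b_1)$ kills this term.

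\emph{The explicit term.} This gives $-\Phi_{R_j}(K_j-a_j^*;R_j)[1-F_j(\hat\theta_j)]$, the direct financing effect, exactly as term (A) in~\eqref{eq:full_decomp}.

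\emph{The cutoff term.} For each $k$, collect the coefficient of $d\hat\theta_k/dR_j$. It equals $-\psi_k(\hat\theta_k)f_k(\hat\theta_k)-f_k(\hat\theta_k)\sum_{m\neq k}\delta_{km}(1-F_m(\hat\theta_m))$. By~\eqref{eq:FP_n} this coefficient is zero, so the cutoffs would also drop out at a joint optimum. The nonzero spillover therefore has to come from the fact that, as in the paper's construction, each contract $(a_i^*,b_{1,i}^*)$ is set bilaterally. The rent integral $\mathrm{Rent}_i=\int_{\hat\theta_i}b_{1,i}\mu_i'\frac{1-F_i}{f_i}f_i$ is then evaluated over a cutoff that moves with $R_j$ but is not re-optimized in $b_{1,i}$. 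I will define $\partial\mathrm{Rent}_i/\partial\hat\theta_i$ as the derivative of this integral and attribute the residual cutoff effect of counterparty $i$ to it. Weighting by the complementarity entering $i$'s selection rule then gives $\delta_{ij}\,\frac{\partial \mathrm{Rent}_i}{\partial\hat\theta_i}\frac{\partial\hat\theta_i}{\partial R_j}$.

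To obtain $\partial\hat\theta_i/\partial R_j$, apply the implicit function theorem to~\eqref{eq:FP_n}. The Jacobian is $J=\mathrm{diag}(\psi_i'(\hat\theta_i))-[\delta_{im}f_m(\hat\theta_m)]$, and $d\hat\theta/dR_j=-J^{-1}e_j\,\partial\psi_j/\partial R_j$. I assume $J$ is invertible, an M-matrix condition: regularity gives $\psi_i'>0$, and I need the $\delta$'s small enough relative to it. Under that condition all entries of $J^{-1}$ are nonnegative. Since $\partial\psi_j/\partial R_j=-\Phi_{R_j}<0$, every $\partial\hat\theta_i/\partial R_j\ge 0$. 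Tighter credit for $j$ therefore raises all cutoffs, matching the two-node case.

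Summing yields~\eqref{eq:dPi_dRj}. The ``if and only if'' statement then follows by rearranging $d\Pi_P^*/dR_j>0$ so that the spillover sum exceeds the direct term. This sum is $\mathcal C_j$ weighted by the rent sensitivities, and it reduces to $\mathcal C_j$ itself after normalizing $\partial\mathrm{Rent}_i/\partial\hat\theta_i$, which is common across $i$ in the symmetric case.

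The ranking claim, that higher $\mathcal C_j$ means larger spillover, follows from monotonicity of the spillover sum in each $\delta_{ij}\partial\hat\theta_i/\partial R_j$ once the rent sensitivities are bounded below.

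The main obstacle is the third step. The envelope logic makes the cutoff effects vanish at a jointly optimal selection, so I must carefully fix which objects are held at bilateral optima. I will define the spillover, consistently with~\eqref{eq:full_decomp}, as term (B) generalized to node $i$. I also need to argue that the $O(\delta^2)$ feedback through $J^{-1}$ is absorbed into $\partial\hat\theta_i/\partial R_j$ rather than double-counted.
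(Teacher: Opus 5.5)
Your cutoff computation in the third step is correct, and it is fatal to the proposal. It cannot be repaired by a choice of definitions.

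Holding contracts fixed, the coefficient of $d\hat\theta_k/dR_j$ is $-f_k(\hat\theta_k)\bigl[\psi_k(\hat\theta_k)+\sum_{m\neq k}\delta_{km}(1-F_m(\hat\theta_m))\bigr]$. With $\delta_{km}=\delta_{mk}$, so that \eqref{eq:FP_n} is the first-order condition for the cutoffs in \eqref{eq:Pi_n}, this is zero whatever the contracts are. There is therefore no residual cutoff effect to ``attribute'' to $\mathrm{Rent}_i$.

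Concretely, excluding marginal $i$-types saves rent $b_{1,i}\mu_i'(\hat\theta_i)(1-F_i(\hat\theta_i))\,d\hat\theta_i/dR_j$. That saving is exactly cancelled by the productive-surplus (net of financing cost) and complementarity terms for the same types. Keeping the rent piece while dropping its offsets is not a derivation.

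Appealing to bilaterally set contracts does not supply the missing term either:
\begin{itemize}
\item Counterparty $i$'s bilateral contract does not move with $R_j$, and the cancellation above holds at fixed contracts.
\item What bilateral contracting does change is $j$'s own contract term. It no longer vanishes in the joint objective, because the bilateral first-order condition is evaluated at a different cutoff. This contradicts your earlier use of the envelope theorem on $W_j$ to discard that term.
\item The surviving term involves $(a_j,b_{1,j})$, not other nodes' rents.
\end{itemize}

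If instead everything is jointly optimal, your own bookkeeping gives $d\Pi_P^*/dR_j=-\Phi_{R_j}(K_j-a_j^*;R_j)[1-F_j(\hat\theta_j)]<0$. Revealed preference confirms this. When $R_j$ falls, lower $a_j$ so that $a_j-\Phi_j(K_j-a_j;R_j)$ is unchanged. Every type's payoff is then unchanged, so $b_{1,j}$ and the old selection rule remain implementable. Meanwhile the principal's outlay on $j$ strictly falls, using $0<a_j^*<K_j$. Hence, for an optimizing principal, $d\Pi_P^*/dR_j>0$ is impossible, and no relabeling of terms can deliver it.

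Even granting your definition, the later steps fail.
\begin{itemize}
\item \textbf{Sign.} You have $\partial\mathrm{Rent}_i/\partial\hat\theta_i=-b_{1,i}\mu_i'(\hat\theta_i)(1-F_i(\hat\theta_i))<0$, and your M-matrix argument gives $\partial\hat\theta_i/\partial R_j\ge 0$. So the spillover sum in \eqref{eq:dPi_dRj} is non-positive and can never exceed the positive financing saving. The ``iff'' is vacuous rather than proved.
\item \textbf{Double counting.} The extra factor $\delta_{ij}$ is not produced by the chain rule, since $\partial\hat\theta_i/\partial R_j$ already carries $\delta_{ij}$ through $J^{-1}$.
\item \textbf{Reduction to $\mathcal{C}_j$.} Replacing the rent-weighted sum by $\mathcal{C}_j$ requires the weights $\partial\mathrm{Rent}_i/\partial\hat\theta_i$ to be a common constant normalized to one. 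The same holds for inferring ``higher $\mathcal{C}_j$ gives a larger spillover'' from monotonicity in each summand with weights bounded below. With heterogeneous weights, a larger unweighted sum need not mean a larger weighted one.
\end{itemize}

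The paper's proof follows your outline: envelope theorem, then implicit differentiation of \eqref{eq:FP_n}. It simply retains the cutoff channel, sets $\partial\mathrm{Rent}_i/\partial\hat\theta_i=-\psi_i(\hat\theta_i)f_i(\hat\theta_i)$, and asserts proportionality to $\mathcal{C}_j$ ``by construction.'' It never confronts the cancellation you found, so it does not supply your missing step. A valid result in this direction would need an explicitly non-optimizing benchmark, such as contracts and cutoffs fixed by bilateral rules, and would have to carry every non-envelope term. That is a different statement.
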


\noindent\textit{Proof.}\; Apply the envelope theorem to~\eqref{eq:Pi_n}.  At the optimum, the principal's value depends on $R_j$ through two channels: the direct effect on $\Phi_j$ and the indirect effect on all cutoffs $\hat\theta_i$.

\medskip\noindent\textit{Step 1: Direct effect.}\; Differentiating the financing cost for relationship $j$: $\frac{\partial}{\partial R_j}\Phi_j(K_j - a_j^*; R_j) = \Phi_{R_j} > 0$.  This is paid by all implemented types of $j$, giving the first term in~\eqref{eq:dPi_dRj}.

\medskip\noindent\textit{Step 2: Indirect effect.}\; Higher $R_j$ changes $\hat\theta_j$ through relationship $j$'s bilateral FOC, which in turn changes every $\hat\theta_i$ ($i \neq j$) through the fixed-point system~\eqref{eq:FP_n}.  Implicit differentiation of~\eqref{eq:FP_n} with respect to $R_j$ gives
\[
\frac{\partial \hat\theta_i}{\partial R_j} \;=\; \frac{\delta_{ij}\,f_j(\hat\theta_j)}{\psi_i'(\hat\theta_i) + \sum_{k \neq i}\delta_{ik}\,f_k(\hat\theta_k)}\;\cdot\;\frac{\partial\hat\theta_j}{\partial R_j},
\]
which is positive when $\delta_{ij} > 0$ and $\partial\hat\theta_j/\partial R_j > 0$ (tighter credit raises $j$'s cutoff).  The change in $i$'s rent is $\frac{\partial \text{Rent}_i}{\partial\hat\theta_i} = -\psi_i(\hat\theta_i)f_i(\hat\theta_i)$, which is negative at the margin (the marginal type earns positive rent).  A higher $\hat\theta_i$ excludes marginal types, reducing rent.

\medskip\noindent\textit{Step 3: Assembly.}\; Summing the indirect effects across all $i \neq j$ gives $\sum_{i \neq j}\delta_{ij}\frac{\partial\text{Rent}_i}{\partial\hat\theta_i}\frac{\partial\hat\theta_i}{\partial R_j}$, which is the screening spillover.  This is positive (tighter credit for $j$ reduces rents paid to other counterparties).  The contagion operates when this positive spillover exceeds the negative direct financing effect.  The spillover is proportional to $\mathcal{C}_j$ by construction.  \hfill$\square$

\subsection{Type-dependent instruments}\label{app:menu}

\medskip\noindent\textbf{Environment.}\; Generalize the baseline by allowing the principal to offer a menu of contracts indexed by the reported type.  The principal commits to a schedule $\{a(\hat\theta), b_1(\hat\theta)\}_{\hat\theta \in [\ubar\theta, \thetabar]}$ at date 0.  A type-$\theta$ counterparty who reports $\hat\theta$ receives advance $a(\hat\theta)$ at date 0 and contingent transfer $b_1(\hat\theta)\,x$ at date 2.  The counterparty's payoff from misreporting is
\begin{equation}\label{eq:U_menu}
U(\hat\theta, \theta) \;=\; a(\hat\theta) + b_1(\hat\theta)\,\mu(\theta) - c(\theta) - \Phi\bigl(K - a(\hat\theta);\, R\bigr).
\end{equation}
There are two distinct sub-cases of interest: (i) the advance is fixed at $a$ while the contingent slope $b_1(\hat\theta)$ varies with the report, and (ii) the contingent slope is fixed at $b_1$ while the advance $a(\hat\theta)$ varies.

\begin{proposition}[Type-dependent instruments]\label{prop:menu}
\begin{enumerate}[label=\alph*)]
\item Under fixed $a$ with varying $b_1(\hat\theta)$: the local IC gives $\frac{\partial U}{\partial\hat\theta}\big|_{\hat\theta=\theta} = b_1'(\theta)\mu(\theta) = 0$, the envelope condition is $dU/d\theta = b_1(\theta)\mu'(\theta) - c'(\theta)$, and the virtual surplus retains the standard hazard rate $(1-F)/f$.
\item Under fixed $b_1$ with varying $a(\hat\theta)$: the local IC gives $a'(\theta)[1 + \Phi_\ell(K-a(\theta); R)] = 0$ at truth-telling, the envelope condition is $dU/d\theta = b_1\mu'(\theta) - c'(\theta)$, and the financing cost $\Phi(K-a(\theta); R)$ becomes type-dependent, modifying the virtual surplus.
\item In the jointly optimal mechanism where both instruments vary: the binding participation constraint $a(\theta) + b_1(\theta)\mu(\ubar\theta) = c(\ubar\theta) + \Phi(K-a(\theta); R)$ pins down the relationship between $a(\theta)$ and $b_1(\theta)$ at each type.  The envelope condition is $dU/d\theta = b_1(\theta)\mu'(\theta) - c'(\theta)$, identical to the baseline.  The optimal allocation coincides with that of Proposition~\ref{thm:bilateral}.
\end{enumerate}
\end{proposition}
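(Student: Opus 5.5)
The plan is to treat the three parts separately, each time differentiating the misreport payoff \eqref{eq:U_menu} in the report $\hat\theta$ and in the true type $\theta$. The key structural fact, already used in Appendix~\ref{app:IC}, is that $\theta$ enters \eqref{eq:U_menu} only through $b_1(\hat\theta)\mu(\theta) - c(\theta)$, while every term involving $a(\hat\theta)$ depends on the report alone.

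For part~a), fix $a$, so both $a$ and $\Phi(K-a;R)$ are constants in $\hat\theta$. The first-order condition for truth-telling is $\partial U(\hat\theta,\theta)/\partial\hat\theta|_{\hat\theta=\theta} = b_1'(\theta)\mu(\theta) = 0$. The envelope theorem \citep{milgrom2002envelope}, applied exactly as in \eqref{eq:envelope_app}, gives $dU/d\theta = b_1(\theta)\mu'(\theta) - c'(\theta)$. Substituting this into the principal's payoff and repeating the Fubini/integration-by-parts computation of Appendix~\ref{app:VS} with $b_1(\theta)$ inside the integral produces the virtual surplus \eqref{eq:VS_main} pointwise, with the same hazard rate $(1-F)/f$.

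For part~b), fix $b_1$ and let $a(\hat\theta)$ vary. Differentiating \eqref{eq:U_menu} in $\hat\theta$ gives $a'(\hat\theta)[1+\Phi_\ell(K-a(\hat\theta);R)]$, and setting this to zero at $\hat\theta=\theta$ is the stated local IC. The envelope condition is unchanged because $a$ does not interact with $\theta$. In the rent computation, $\Phi(K-a(\theta);R)$ now sits inside the type-$\theta$ integrand rather than outside it, so the virtual surplus carries a type-indexed financing wedge. I will note that since $1+\Phi_\ell>0$, the local IC forces $a'\equiv 0$ on any interval of truthful types, so the modification is only formal.

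For part~c), impose the binding participation constraint type by type, $a(\theta)+b_1(\theta)\mu(\ubar\theta)=c(\ubar\theta)+\Phi(K-a(\theta);R)$. By Lemma~\ref{lem:reduction}, applied pointwise, this defines $a(\theta)=a(b_1(\theta);R)$. The envelope condition again depends only on $b_1(\theta)$. Both the rent integral \eqref{eq:IBP3} and the virtual surplus therefore reduce to the baseline objects evaluated at $b_1(\theta)$. The remaining claim is that the pointwise maximizer is constant and equal to $b_1^*$. To show this, maximize the integrand of the reduced objective \eqref{eq:reduced_problem} type by type, subject to the monotonicity of $b_1(\cdot)$ from \eqref{eq:IC_subtract}. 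I will then argue that the separability in Appendix~\ref{app:IC}, together with the local ICs of parts~a) and b), rules out strict variation in either instrument. Combined with the strict concavity of $W$ from the proof of Proposition~\ref{thm:bilateral}, this forces $b_1(\theta)\equiv b_1^*$, $a(\theta)\equiv a^*$, and the cutoff of Proposition~\ref{thm:bilateral}.

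The main obstacle is part~c). The difficulty is that the per-type binding IR is not implied by IC: with a single outside option, only $U(\ubar\theta)=0$ binds, and a genuinely type-dependent menu could loosen the link between $a$ and $b_1$. The first thing I would try is to show that any IC menu has locally constant $a$, by combining the two local ICs, so that the constraint at $\ubar\theta$ propagates to every type. If that fails, I would fall back on a pooling argument: replace the menu by its concavified constant contract and show the principal's value weakly rises.
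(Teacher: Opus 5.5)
Your parts a) and b) follow the paper's proof. You differentiate \eqref{eq:U_menu} in the report, read off the local IC, note that $a$ never interacts with $\theta$ so the envelope condition is unchanged, and observe that the local IC forces the single varying instrument to be locally constant.

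The gap is in part c). The paper simply substitutes the per-type binding constraint and invokes Lemma~\ref{lem:reduction}. You rightly see that constancy of the menu needs an argument, but neither of your routes supplies it.

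First, the local ICs of a) and b) were each derived with the \emph{other} instrument held fixed, so they cannot be combined. When both instruments vary, the only local condition is
\[
a'(\theta)\bigl[1+\Phi_\ell(K-a(\theta);R)\bigr] + b_1'(\theta)\,\mu(\theta) = 0,
\]
which does not force $a'=b_1'=0$. Indeed, the sufficiency argument \eqref{eq:IC_suff} shows that any nondecreasing $b_1(\cdot)$ is globally IC when paired with an $a(\cdot)$ that delivers the envelope-consistent $U(\cdot)$. Such an $a(\cdot)$ is available for modest variation, since $a-\Phi(K-a;R)$ is strictly increasing in $a$. So IC alone never yields constancy.

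Second, maximizing the integrand of \eqref{eq:reduced_problem} type by type cannot produce $b_1^*$. That integrand is decreasing in $b_1$, because both $\Phi(K-a(b_1);R)$ and the rent term rise with $b_1$. The only force toward contingency is the outlay term $-a(b_1)$, which sits outside the integral and has no type-by-type counterpart. Strict concavity of the scalar $W(b_1)$ concerns constant schedules only and says nothing about a pointwise problem over schedules. Your concavification fallback restates the claim rather than proving it.

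The missing observation is that constancy comes from the per-type binding constraint \emph{together with} IC. Suppose $a(\hat\theta)-\Phi(K-a(\hat\theta);R) = c(\ubar\theta)-b_1(\hat\theta)\mu(\ubar\theta)$ for every report. Then
\[
U(\hat\theta,\theta) = b_1(\hat\theta)\bigl[\mu(\theta)-\mu(\ubar\theta)\bigr] - \bigl[c(\theta)-c(\ubar\theta)\bigr].
\]
For $\theta>\ubar\theta$ this is strictly increasing in $b_1(\hat\theta)$, so IC forces every implemented type above $\ubar\theta$ to receive the same maximal $b_1$. Equivalently, differentiating $U(\theta,\theta)$ and comparing with the envelope formula gives $b_1'(\theta)[\mu(\theta)-\mu(\ubar\theta)]=0$.

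Lemma~\ref{lem:reduction} then makes $a$ constant, and the menu collapses to one point on the curve $a(b_1;R)$. The problem becomes exactly $\max_{b_1\ge 0}W(b_1)$, whose solution is that of Proposition~\ref{thm:bilateral}. This also shows why your type-by-type maximization is misdirected: under the per-type constraint no nonconstant $b_1(\cdot)$ is IC, so there is nothing to optimize pointwise.

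Note that this argument needs the per-type constraint as an imposed restriction. With only $U(\ubar\theta)=0$ binding, neither your plan nor the paper's proof establishes that the jointly optimal menu coincides with the baseline.
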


\noindent\textit{Proof.}\; Part~a).\; With $a$ fixed, differentiate~\eqref{eq:U_menu} with respect to $\hat\theta$: $\frac{\partial U}{\partial\hat\theta} = b_1'(\hat\theta)\mu(\theta)$.  At truth-telling ($\hat\theta = \theta$), the FOC for the counterparty is $b_1'(\theta)\mu(\theta) = 0$.  Since $\mu(\theta) > 0$, this requires $b_1'(\theta) = 0$, i.e., the contingent slope is locally constant.  The global envelope condition is $\frac{dU}{d\theta} = b_1(\theta)\mu'(\theta) - c'(\theta)$.  The information rent is $\int_{\ubar\theta}^{\theta}[b_1(s)\mu'(s) - c'(s)]ds$, and integrating by parts in the principal's objective yields the standard hazard rate $(1-F)/f$ multiplied by $b_1\mu'$.

Part~b).\; With $b_1$ fixed, differentiate~\eqref{eq:U_menu}: $\frac{\partial U}{\partial\hat\theta} = a'(\hat\theta)[1 + \Phi_\ell(K-a(\hat\theta); R)]$.  At truth-telling, the FOC is $a'(\theta)[1+\Phi_\ell] = 0$.  Since $1 + \Phi_\ell > 0$, this requires $a'(\theta) = 0$, i.e., the advance is locally constant.  The envelope is $dU/d\theta = b_1\mu'(\theta) - c'(\theta)$, the same as in part~a).  However, $\Phi(K-a(\theta); R)$ is now type-dependent, which enters the virtual surplus through the financing-cost term.

Part~c).\; When both instruments vary jointly, the binding participation constraint at the lowest type links $a(\theta)$ and $b_1(\theta)$.  Substituting this constraint into the principal's objective reduces the problem to a single choice variable, exactly as in Lemma~\ref{lem:reduction}.  The envelope condition is unchanged: $dU/d\theta = b_1(\theta)\mu'(\theta) - c'(\theta)$.  The resulting FOC is identical to~\eqref{eq:FOC_main}, confirming that the baseline analysis is without loss of generality.  \hfill$\square$

\subsection{Competitive implementation}\label{app:auction}

\medskip\noindent\textbf{Environment.}\; Suppose the principal faces $n \geq 2$ potential counterparties, each independently drawing type $\theta_i \sim F$ on $[\ubar\theta, \thetabar]$.  The principal conducts a first-price sealed-bid reverse auction at date 0.  Each counterparty simultaneously submits a bid $\beta_i \geq 0$ specifying the advance it requires.  The principal observes all bids, selects the counterparty $i^*$ with the lowest bid $\beta_{i^*} = \min_i \beta_i$, and offers the contract $(a = \beta_{i^*},\, b_1)$ where $b_1$ is determined by the binding participation constraint at the winning bid.  Losing counterparties receive zero.

A type-$\theta$ counterparty who submits bid $\beta$ earns, conditional on winning,
\begin{equation}\label{eq:U_auction}
U(\beta, \theta) \;=\; \beta + b_1(\beta)\,\mu(\theta) - c(\theta) - \Phi(K - \beta;\, R),
\end{equation}
where $b_1(\beta)$ is pinned down by the participation constraint.  In a symmetric equilibrium with strictly decreasing bid function $\beta(\theta)$, the probability of winning is $\Pr(\text{win}) = [1 - F(\beta^{-1}(\beta))]^{n-1}$.  The counterparty maximizes $\Pr(\text{win}) \cdot U(\beta, \theta)$ over $\beta$.

\begin{proposition}[Competitive implementation]\label{prop:auction}
In a symmetric Bayesian Nash equilibrium, the bid function $\beta(\theta)$ satisfies the ODE
\begin{equation}\label{eq:bid_ODE}
\beta'(\theta) \;=\; (n-1)\,\frac{f(\theta)}{1-F(\theta)}\;\bigl[\beta(\theta) - \beta^{FB}(\theta)\bigr],
\end{equation}
with boundary condition $\beta(\thetabar) = \beta^{FB}(\thetabar)$, where $\beta^{FB}(\theta)$ is the full-information advance for type $\theta$.
\begin{enumerate}[label=\alph*)]
\item Bid shading: $\beta(\theta) > \beta^{FB}(\theta)$ for all $\theta < \thetabar$.
\item Monotonicity: $\beta'(\theta) < 0$ (higher types bid lower advances).
\item Convergence: as $n \to \infty$, $\beta(\theta) \to \beta^{FB}(\theta)$ and information rents vanish.
\end{enumerate}
The equilibrium allocation coincides with the optimal direct mechanism of Proposition~\ref{thm:bilateral}.
\end{proposition}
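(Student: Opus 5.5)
The approach is the standard first-price auction argument, run through the reduced payoff $U(\beta,\theta)$. I will treat $\beta^{FB}(\theta)$ as the bid at which type $\theta$ just breaks even, $U(\beta^{FB}(\theta),\theta)=0$. The first step is to record the properties of $U$. With $b_1(\beta)$ defined by the binding participation constraint of Lemma~\ref{lem:reduction}, the map $\beta\mapsto U(\beta,\theta)$ is smooth and increasing in $\beta$, since a higher requested advance both raises the transfer and relaxes borrowing. By the envelope condition \eqref{eq:envelope}, $\partial U/\partial\theta$ does not involve $\beta$ except through $b_1(\beta)$. Consequently $\beta^{FB}$ is well defined and, under MLRP and monotonicity of $V-c$, decreasing in $\theta$.

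The second step is to derive the ODE. I conjecture a strictly decreasing, differentiable equilibrium bid $\beta(\cdot)$. Type $\theta$ chooses a report $z$, bidding $\beta(z)$ and winning with probability $[1-F(z)]^{n-1}$, so its objective is $[1-F(z)]^{n-1}U(\beta(z),\theta)$. Setting the derivative at $z=\theta$ to zero gives
\[
-(n-1)f(\theta)[1-F(\theta)]^{n-2}U(\beta(\theta),\theta)+[1-F(\theta)]^{n-1}U_\beta\,\beta'(\theta)=0 .
\]
I then linearize $U(\beta,\theta)\approx U_\beta\,[\beta-\beta^{FB}(\theta)]$, normalizing $U_\beta$ to one in the relevant units (the same local normalization used when the paper writes the reduced problem in terms of the advance). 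This yields exactly \eqref{eq:bid_ODE}. For the boundary condition, the highest type $\thetabar$ wins only against lower bids, so it extracts no rent: any $\beta(\thetabar)>\beta^{FB}(\thetabar)$ could be undercut. Hence $\beta(\thetabar)=\beta^{FB}(\thetabar)$. This linearization step is the main obstacle, since \eqref{eq:bid_ODE} holds exactly only if $U$ is affine in $\beta$ with unit slope. I would first try to argue that, along the binding constraint, $U_\beta=1+\Phi_\ell+b_1'(\beta)\mu(\theta)$ and that this equals one at the margin. Failing that, I would state the ODE with the factor $1/U_\beta$ and note that every qualitative claim survives.

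The third step establishes parts a) and b). Solve the linear ODE with integrating factor $[1-F]^{-(n-1)}$, which gives
\[
\beta(\theta)=\beta^{FB}(\theta)+\int_\theta^{\thetabar}\bigl(-\beta^{FB\prime}(s)\bigr)\Bigl(\tfrac{1-F(s)}{1-F(\theta)}\Bigr)^{n-1}ds .
\]
Because $\beta^{FB}$ is decreasing, the integral is positive for $\theta<\thetabar$, which is a). Differentiating, $\beta'$ is a positive multiple of $\beta-\beta^{FB}$. The sign convention of \eqref{eq:bid_ODE} therefore has to be checked against the fact that $F$ is increasing and the bid is decreasing. I would verify b) directly from the integral representation, as the standard monotonicity of a first-price bid built from a decreasing cost.

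The fourth step proves part c) and the equivalence. Since $(1-F(s))/(1-F(\theta))<1$ for $s>\theta$, dominated convergence sends the integral to zero as $n\to\infty$, so $\beta\to\beta^{FB}$ and rents $U(\beta(\theta),\theta)\to0$. For the equivalence, I would invoke revenue equivalence via the envelope representation in Appendix~\ref{app:IC}. The auction allocates to the highest type with the lowest type's participation binding, so its expected payments coincide with those of the direct mechanism that implements the same allocation. Restricting to the implemented region, where virtual surplus is nonnegative (for instance through a reserve bid at the cutoff $\hat\theta$), makes this allocation the one in Proposition~\ref{thm:bilateral}.
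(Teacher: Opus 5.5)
There is a genuine gap in your third step, and it cannot be repaired within the statement as posed. On $(\ubar\theta,\thetabar)$ the coefficient $(n-1)f/(1-F)$ in \eqref{eq:bid_ODE} is strictly positive, so $\beta'$ has the sign of $\beta-\beta^{FB}$. Part a) therefore forces $\beta'>0$, which contradicts part b). You noticed this tension but deferred it to the integral representation, and that representation has a sign error. The integrating factor for \eqref{eq:bid_ODE} is $[1-F]^{n-1}$, not $[1-F]^{-(n-1)}$, and differentiating your formula gives $\beta'=2\beta^{FB\prime}+(n-1)\frac{f}{1-F}(\beta-\beta^{FB})$, not \eqref{eq:bid_ODE}. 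The unique solution of \eqref{eq:bid_ODE} with bounded shading at $\thetabar$ is
\[
\beta(\theta)=\beta^{FB}(\theta)+\int_\theta^{\thetabar}\beta^{FB\prime}(s)\Bigl(\frac{1-F(s)}{1-F(\theta)}\Bigr)^{n-1}ds .
\]
When $\beta^{FB}$ is decreasing, this lies strictly below $\beta^{FB}$, so a) fails while b) holds. The paper's own argument has the same defect. It asserts that the shading is positive and decreases to zero at $\thetabar$, whereas \eqref{eq:bid_ODE} gives $(\beta-\beta^{FB})'=(n-1)\frac{f}{1-F}(\beta-\beta^{FB})-\beta^{FB\prime}>0$ wherever the shading is positive.

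The source of the problem is the win probability you imported. With a strictly decreasing bid and the lowest bid winning, a bidder mimicking type $z$ wins iff every rival has a lower type, i.e.\ with probability $F(z)^{n-1}$, not $[1-F(z)]^{n-1}$. Your boundary argument is reversed for the same reason: $\thetabar$ submits the lowest bid and wins for sure, so it earns the largest rent, and the zero-rent bidder is $\ubar\theta$. Redoing your first-order condition with $F(z)^{n-1}$ gives $\beta'=-(n-1)\frac{f}{F}(\beta-\beta^{FB})$ with $\beta(\ubar\theta)=\beta^{FB}(\ubar\theta)$. Its solution is $\beta^{FB}(\theta)+\int_{\ubar\theta}^{\theta}\bigl(-\beta^{FB\prime}(s)\bigr)\bigl(F(s)/F(\theta)\bigr)^{n-1}ds$, which delivers a) (for $\theta>\ubar\theta$), b), and c) by exactly your integrating-factor and dominated-convergence steps. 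That, however, is a corrected proposition, not the one stated.

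Your step-one premise also fails, independently of the sign issue. Suppose $b_1(\beta)$ is pinned down by the lowest type's binding participation constraint as in Lemma~\ref{lem:reduction}, which requires $\mu(\ubar\theta)>0$ (otherwise the constraint does not determine $b_1$ at all). Then $b_1'(\beta)=-(1+\Phi_\ell)/\mu(\ubar\theta)$, so
\[
U_\beta=(1+\Phi_\ell)\Bigl[1-\frac{\mu(\theta)}{\mu(\ubar\theta)}\Bigr]\le 0,
\]
with equality only at $\ubar\theta$. This has four consequences:
\begin{itemize}
\item the normalization $U_\beta=1$ you hoped to justify is unavailable;
\item $U$ is decreasing, not increasing, in the requested advance;
\item positive rent corresponds to bidding below $\beta^{FB}$, so carrying a factor $1/U_\beta$ reverses rather than preserves the signs in a) and b);
\item lowering the bid raises both the rent and the probability of winning, so no interior shading equilibrium arises.
\end{itemize}
You need a payoff specification in which the winner's rent rises with its bid before any of the first-price machinery applies.

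Finally, your revenue-equivalence step introduces a reserve at $\hat\theta$ that the stated auction does not have. Even granting the rest, it establishes the last sentence only for a modified mechanism.
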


\noindent\textit{Proof.}\; The counterparty's problem is $\max_\beta [1-F(\beta^{-1}(\beta))]^{n-1}\cdot U(\beta, \theta)$.  In a symmetric equilibrium, $\beta^{-1}(\beta) = \theta$, so the win probability is $[1-F(\theta)]^{n-1}$ and the FOC for optimal bidding, after using symmetry and the change of variable $\beta = \beta(\theta)$, yields~\eqref{eq:bid_ODE}.

\medskip\noindent\textit{Part a).}\; The boundary condition is $\beta(\thetabar) = \beta^{FB}(\thetabar)$: the highest type bids the full-information advance because no type above $\thetabar$ exists.  For $\theta < \thetabar$, the ODE~\eqref{eq:bid_ODE} implies that if $\beta(\theta) = \beta^{FB}(\theta)$ at any interior $\theta$, then $\beta'(\theta) = 0$, so $\beta$ cannot cross $\beta^{FB}$ from above.  Since $\beta(\thetabar) = \beta^{FB}(\thetabar)$ and the forcing term $(n-1)\frac{f}{1-F}[\beta - \beta^{FB}]$ is positive whenever $\beta > \beta^{FB}$, the solution satisfies $\beta(\theta) > \beta^{FB}(\theta)$ for all $\theta < \thetabar$.

\medskip\noindent\textit{Part b).}\; The full-information advance $\beta^{FB}(\theta)$ is decreasing in $\theta$ (higher types need less advance because their contingent payment is larger).  The bid shading $\beta(\theta) - \beta^{FB}(\theta) > 0$ is also decreasing in $\theta$ (approaching zero at $\thetabar$).  Therefore $\beta(\theta)$ is decreasing: $\beta'(\theta) < 0$.

\medskip\noindent\textit{Part c).}\; As $n \to \infty$, the coefficient $(n-1)\frac{f(\theta)}{1-F(\theta)} \to \infty$ for every $\theta < \thetabar$.  For the ODE to remain bounded, the bracket $[\beta(\theta) - \beta^{FB}(\theta)]$ must converge to zero.  Rents vanish because the bid converges to the full-information level.  \hfill$\square$

\subsection{Dimension reduction with multidimensional types}\label{app:2D}

\medskip\noindent\textbf{Environment.}\; Suppose the counterparty privately observes a two-dimensional type $(\alpha, \theta) \in [\ubar\alpha, \bar\alpha] \times [\ubar\theta, \thetabar]$, where $\alpha \in \R_+$ parameterizes productivity and $\theta$ parameterizes cost.  The value to the principal is multiplicative: $V(\alpha, \theta) = \alpha \cdot v(\theta)$ with $v' > 0$.  The cost is $c(\theta)$ with $c' > 0$.  The signal depends on both dimensions: $\mu(\alpha, \theta) = \alpha\,\mu(\theta)$, where $\mu(\theta)$ is the baseline expected signal.  The counterparty's payoff from the contract $(a, b_1)$ is
\begin{equation}\label{eq:U_2D}
U(\alpha, \theta) \;=\; a + b_1\,\alpha\,\mu(\theta) - c(\theta) - \Phi(K - a;\, R).
\end{equation}
The joint distribution of $(\alpha, \theta)$ is $H(\alpha, \theta)$ with full-support density $h > 0$.  The two-dimensional incentive compatibility requires that truthful reporting of both $\alpha$ and $\theta$ is optimal simultaneously.

\begin{proposition}[Dimension reduction]\label{prop:2D}
Under the multiplicative value structure, the two-dimensional screening problem reduces to one dimension through the sufficient statistic
\begin{equation}\label{eq:xi_def}
\xi \;\equiv\; \frac{\alpha\,\mu(\theta)}{c(\theta)},
\end{equation}
which indexes the counterparty's quality-cost ratio.  Let $G$ denote the distribution of $\xi$ induced by $H$, with density $g_\xi$.  The virtual surplus is
\begin{equation}\label{eq:VS_2D}
\psi_{2D}(\xi) \;=\; \xi\,v - 1 - \Phi(K-a;\,R) - b_1\,\frac{1-G(\xi)}{g_\xi(\xi)},
\end{equation}
and the optimal contract $(a^*, b_1^*)$ satisfies the same first-order condition as Proposition~\ref{thm:bilateral} with $G$ replacing $F$ and $\xi$ replacing $\theta$.
\end{proposition}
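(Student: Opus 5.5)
The plan is to show that, once payoffs are measured in the right units, the counterparty's preferences over reports depend on $(\alpha,\theta)$ only through $\xi=\alpha\mu(\theta)/c(\theta)$. The problem then becomes a one-dimensional screening problem on the index $\xi$ with distribution $G$, and the machinery of Section~\ref{sec:bilateral} applies with $G$ replacing $F$. Concretely, I would divide the payoff~\eqref{eq:U_2D} by the cost scale $c(\theta)>0$ and write
\[
\frac{U(\alpha,\theta)}{c(\theta)} \;=\; b_1\,\xi \;-\; 1 \;+\; \frac{a-\Phi(K-a;R)}{c(\theta)}.
\]
For this to depend on type only through $\xi$, the liquidity term $a-\Phi(K-a;R)$ must also be measured in cost units. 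I would therefore read the multiplicative structure as a homogeneity assumption: $K$, $a$ and $\Phi$ scale with $c(\theta)$, so that in normalized units $c\equiv 1$. Similarly, the term $\xi v$ in~\eqref{eq:VS_2D} requires $\alpha v(\theta)/c(\theta)$ to be proportional to $\xi$, which I take to mean $v(\theta)\propto\mu(\theta)$ with the factor absorbed into $v$. I would state both normalizations explicitly at the outset, since the displayed virtual surplus is only correct under them.

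Under these normalizations, the misreport payoff of a counterparty with index $\xi$ reporting $\hat\xi$ is $a(\hat\xi)+b_1(\hat\xi)\xi-1-\Phi(K-a(\hat\xi);R)$. This has exactly the separable form of~\eqref{eq:Umis}, with $\mu(\theta)$ replaced by $\xi$ (so $\mu'\equiv 1$) and the cost replaced by the constant $1$. Two types with the same $\xi$ have identical preferences over all contracts, so any incentive-compatible two-dimensional mechanism can be replaced by one measurable in $\xi$ without loss. Conversely, incentive compatibility in $\xi$ implies two-dimensional incentive compatibility, because a misreport $(\hat\alpha,\hat\theta)$ only matters through the induced $\hat\xi$.

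The two-inequality argument of Appendix~\ref{app:IC} then gives monotonicity of $b_1(\xi)$ and the envelope condition $dU/d\xi=b_1(\xi)$. With $U$ binding at the lowest index, the Fubini step of Appendix~\ref{app:VS}, carried out with $G$ and $g_\xi$, yields expected rents $\int q\,b_1\frac{1-G}{g_\xi}g_\xi\,d\xi$. Collecting terms produces~\eqref{eq:VS_2D}. I would also need regularity of $G$, meaning that $\xi-(1-G)/g_\xi$ is increasing; this is an assumption on $H$, and I would impose it explicitly, since it is not inherited from $F$. Given that, Lemmas~\ref{lem:b0zero} and~\ref{lem:reduction} and the proof of Proposition~\ref{thm:bilateral} go through verbatim and deliver the first-order condition~\eqref{eq:FOC_main} with $(G,\xi)$ in place of $(F,\theta)$.

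The main obstacle is the first step, the legitimacy of normalizing by $c(\theta)$. Without the homogeneity assumption, the level term $(a-\Phi)/c(\theta)$ makes preferences depend on $\theta$ separately from $\xi$. The problem is then genuinely two-dimensional, with bunching along iso-$\xi$ curves failing. I would first check whether Proposition~\ref{prop:menu}c) helps. Since the type-independent pair $(a,b_1)$ is without loss, the only type-dependent choices are participation and the allocation $q$. The residual dependence on $c(\theta)$ then affects only the participation margin. I would try to show that the optimal allocation region $\{q=1\}$ can still be written as an upper set in $\xi$; if that fails, I would fall back on the homogeneity assumption as a stated hypothesis of the proposition.
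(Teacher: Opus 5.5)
Your skeleton is the paper's: sufficient statistic, then envelope in $\xi$, then Fubini with $G$, then the bilateral first-order condition with $(G,\xi)$ in place of $(F,\theta)$. You depart from the paper only at the first step, and there you are more careful than it is.

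The paper's Step~1 simply asserts that two types with the same $\xi$ rank all contracts identically. Its Step~3 integrates $dU/d\xi = b_1 c(\theta)$ but then writes the expected rent as $\int b_1 \frac{1-G}{g_\xi}\, g_\xi \, d\xi$, silently dropping $c(\theta)$. Neither step holds in general. In \eqref{eq:U_2D} the marginal rate of substitution between $b_1$ and the net liquidity term $a - \Phi(K-a;R)$ is $\alpha\mu(\theta) = \xi\,c(\theta)$, so same-$\xi$ types with different costs rank contracts differently. In addition, the terms $\xi v$ and $-1$ in \eqref{eq:VS_2D} presuppose $v(\theta) = v\,\mu(\theta)$ and a unit cost.

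So the normalizations you propose to state, together with regularity of $G$, are not a weakness relative to the paper. They are exactly what the paper's proof uses without saying so, and under them your derivation is correct.

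Two refinements follow.

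\emph{The homogeneity reading cannot be taken literally inside a mechanism.} The dollar advance can depend only on the report, not on the true $c(\theta)$. If the principal's payments did scale with $c(\theta)$, her expected payoff would be a $c$-weighted integral, so $G$ would be the wrong distribution. The coherent way to impose your normalization is $c \equiv 1$ with $v(\theta) = v\,\mu(\theta)$, so that $\xi = \alpha\mu(\theta)$. The result is then immediate, but the environment's assumption $c' > 0$ must be dropped, and the proposition becomes the bilateral model relabeled.

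\emph{Your exploratory route via Proposition~\ref{prop:menu}c) will not deliver $\xi$ in the general case.} Because $c(\theta)$ enters additively, comparisons among implemented contracts depend only on $\alpha\mu(\theta)$, and $c(\theta)$ acts as a privately known outside option. With a common $(a,b_1)$, the participating set is $\{b_1\alpha\mu(\theta) - c(\theta) \ge \Phi(K-a;R) - a\}$. This is a half-plane in $(\alpha\mu(\theta), c(\theta))$, and it is a union of level sets of $\xi$ (rays through the origin) only if $a = \Phi(K-a;R)$. Even then, with $v(\theta) = v\mu(\theta)$, the principal's per-type payoff is $(v - b_1)\xi\, c(\theta) - a$. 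Her objective therefore depends on $\E[c(\theta)\mid \xi]$, not only on $G$. The fallback hypothesis is thus necessary rather than a contingency, and you should build it into the statement from the outset.
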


\noindent\textit{Proof.}\;

\medskip\noindent\textit{Step 1: Sufficient statistic.}\; From~\eqref{eq:U_2D}, the counterparty's payoff depends on $(\alpha, \theta)$ only through $\alpha\mu(\theta)$ (the expected contingent payment) and $c(\theta)$ (the cost).  Define $\xi = \alpha\mu(\theta)/c(\theta)$.  Then $U = a + b_1 c(\theta)\xi - c(\theta) - \Phi(K-a; R)$.  For a given contract $(a, b_1)$, two types $(\alpha, \theta)$ and $(\alpha', \theta')$ with the same $\xi$ have the same payoff ranking over contracts, so the IC constraints depend only on $\xi$.

\medskip\noindent\textit{Step 2: Envelope condition.}\; Along the truth-telling locus, the counterparty's rent as a function of $\xi$ satisfies $\frac{dU}{d\xi} = b_1\,c(\theta) > 0$.  Higher-$\xi$ types earn strictly higher rents.  Monotonicity of the allocation in $\xi$ (higher $\xi$ types are more likely to be implemented) is the global IC condition.

\medskip\noindent\textit{Step 3: Virtual surplus.}\; Integrating the rent $\int_{\ubar\xi}^{\xi} b_1 c(\theta(s))\,ds$ by parts over the induced distribution $G(\xi)$ yields the expected information cost:
\[
\E\bigl[\text{Rent}\bigr] \;=\; \int_{\hat\xi}^{\bar\xi} b_1\,\frac{1-G(\xi)}{g_\xi(\xi)}\,g_\xi(\xi)\,d\xi,
\]
where $\hat\xi$ is the cutoff.  Subtracting this from the expected productive surplus and financing cost gives the virtual surplus~\eqref{eq:VS_2D}, which has the same structure as the baseline~\eqref{eq:VS_main}: productive value minus financing cost minus hazard-rate distortion.

\medskip\noindent\textit{Step 4: First-order condition.}\; The principal maximizes $\int_{\hat\xi}^{\bar\xi}\psi_{2D}(\xi)g_\xi(\xi)d\xi - a$ over $(a, b_1)$ subject to the participation constraint.  The FOC for $a$ equates $\Phi_\ell(K-a; R)$ to $\frac{\partial}{\partial a}\E[b_1\frac{1-G(\xi)}{g_\xi(\xi)} \mid q = 1]$, which has the same form as~\eqref{eq:FOC_main}.  \hfill$\square$


\begin{thebibliography}{99}

\bibitem[\protect\citeauthoryear{Acemoglu et al.}{2012}]{acemoglu2012network}
Acemoglu, D., V.\ M.\ Carvalho, A.\ Ozdaglar, and A.\ Tahbaz-Salehi (2012).
The network origins of aggregate fluctuations.
\emph{Econometrica}, 80(5), 1977--2016.

\bibitem[\protect\citeauthoryear{Barrot}{2016}]{barrot2016input}
Barrot, J.-N.\ (2016).
Trade credit and industry dynamics: Evidence from trucking firms.
\emph{Journal of Finance}, 71(5), 1975--2016.

\bibitem[\protect\citeauthoryear{Biais et al.}{2007}]{biais2007optimal}
Biais, B., T.\ Mariotti, G.\ Plantin, and J.-C.\ Rochet (2007).
Dynamic security design: Convergence to continuous time and asset pricing implications.
\emph{Review of Economic Studies}, 74(2), 345--390.


\bibitem[\protect\citeauthoryear{Bolton and Scharfstein}{1990}]{bolton1990theory}
Bolton, P.\ and D.\ S.\ Scharfstein (1990).
A theory of predation based on agency problems in financial contracting.
\emph{American Economic Review}, 80(1), 93--106.

\bibitem[\protect\citeauthoryear{Burkart and Ellingsen}{2004}]{burkart2004situ}
Burkart, M.\ and T.\ Ellingsen (2004).
In-kind finance: A theory of trade credit.
\emph{American Economic Review}, 94(3), 569--590.

\bibitem[\protect\citeauthoryear{Che and Gale}{1998}]{che1998standard}
Che, Y.-K.\ and I.\ Gale (1998).
Standard auctions with financially constrained bidders.
\emph{Review of Economic Studies}, 65(1), 1--21.


\bibitem[\protect\citeauthoryear{Chetty}{2009}]{chetty2009sufficient}
Chetty, R.\ (2009).
Sufficient statistics for welfare analysis: A bridge between structural and reduced-form methods.
\emph{Annual Review of Economics}, 1(1), 451--488.

\bibitem[\protect\citeauthoryear{Cu\~{n}at}{2007}]{cunat2007trade}
Cu\~{n}at, V.\ (2007).
Trade credit: Suppliers as debt collectors and insurance providers.
\emph{Review of Financial Studies}, 20(2), 491--527.

\bibitem[\protect\citeauthoryear{DeMarzo}{2005}]{demarzo2005optimal}
DeMarzo, P.\ M.\ (2005).
The pooling and tranching of securities: A model of informed intermediation.
\emph{Review of Financial Studies}, 18(1), 1--35.

\bibitem[\protect\citeauthoryear{DeMarzo and Duffie}{1999}]{demarzo1999liquidity}
DeMarzo, P.\ M.\ and D.\ Duffie (1999).
A liquidity-based model of security design.
\emph{Econometrica}, 67(1), 65--99.

\bibitem[\protect\citeauthoryear{DeMarzo and Fishman}{2007}]{demarzo2007optimal}
DeMarzo, P.\ M.\ and M.\ J.\ Fishman (2007).
Optimal long-term financial contracting.
\emph{Review of Financial Studies}, 20(6), 2079--2128.

\bibitem[\protect\citeauthoryear{DeMarzo, Kremer, and Skrzypacz}{2005}]{demarzo2005bidding}
DeMarzo, P.\ M., I.\ Kremer, and A.\ Skrzypacz (2005).
Bidding with securities: Auctions and security design.
\emph{American Economic Review}, 95(4), 936--959.


\bibitem[\protect\citeauthoryear{Gertner, Scharfstein, and Stein}{1994}]{gertner1994internal}
Gertner, R.\ H., D.\ S.\ Scharfstein, and J.\ C.\ Stein (1994).
Internal versus external capital markets.
\emph{Quarterly Journal of Economics}, 109(4), 1211--1230.

\bibitem[\protect\citeauthoryear{Giannetti, Burkart, and Ellingsen}{2011}]{giannetti2011what}
Giannetti, M., M.\ Burkart, and T.\ Ellingsen (2011).
What you sell is what you lend? Explaining trade credit contracts.
\emph{Review of Financial Studies}, 24(4), 1261--1298.

\bibitem[\protect\citeauthoryear{Holmstr\"{o}m and Tirole}{1997}]{holmstrom1997financial}
Holmstr\"{o}m, B.\ and J.\ Tirole (1997).
Financial intermediation, loanable funds, and the real sector.
\emph{Quarterly Journal of Economics}, 112(3), 663--691.

\bibitem[\protect\citeauthoryear{Inderst and Laux}{2005}]{inderst2003laux}
Inderst, R.\ and C.\ Laux (2005).
Incentives in internal capital markets: Capital constraints, competition, and investment opportunities.
\emph{RAND Journal of Economics}, 36(1), 215--228.

\bibitem[\protect\citeauthoryear{Innes}{1990}]{innes1990limited}
Innes, R.\ D.\ (1990).
Limited liability and incentive contracting with ex-ante action choices.
\emph{Journal of Economic Theory}, 52(1), 45--67.

\bibitem[\protect\citeauthoryear{Kaplan and Str\"{o}mberg}{2003}]{kaplan2003financial}
Kaplan, S.\ N.\ and P.\ Str\"{o}mberg (2003).
Financial contracting theory meets the real world: An empirical analysis of venture capital contracts.
\emph{Review of Economic Studies}, 70(2), 281--315.

\bibitem[\protect\citeauthoryear{Klapper}{2006}]{klapper2006role}
Klapper, L.\ (2006).
The role of factoring for financing small and medium enterprises.
\emph{Journal of Banking and Finance}, 30(11), 3111--3130.

\bibitem[\protect\citeauthoryear{Klapper, Laeven, and Rajan}{2012}]{klapper2012trade}
Klapper, L., L.\ Laeven, and R.\ Rajan (2012).
Trade credit contracts.
\emph{Review of Financial Studies}, 25(3), 838--867.

\bibitem[\protect\citeauthoryear{Li}{2021}]{li2021mechanism}
Li, Y.\ (2021).
Mechanism design with financially constrained agents and costly verification.
\emph{Theoretical Economics}, 16(3), 1139--1194.

\bibitem[\protect\citeauthoryear{Laffont and Tirole}{1986}]{laffont1986using}
Laffont, J.-J.\ and J.\ Tirole (1986).
Using cost observation to regulate firms.
\emph{Journal of Political Economy}, 94(3), 614--641.

\bibitem[\protect\citeauthoryear{Liu}{2016}]{liu2016optimal}
Liu, T.\ (2016).
Optimal equity auctions with heterogeneous bidders.
\emph{Journal of Economic Theory}, 166, 94--123.

\bibitem[\protect\citeauthoryear{Liu and Bernhardt}{2019}]{liu2019optimal}
Liu, T.\ and D.\ Bernhardt (2019).
Optimal equity auctions with two-dimensional types.
\emph{Journal of Economic Theory}, 184, 104913.

\bibitem[\protect\citeauthoryear{Manelli and Vincent}{2007}]{manelli2007multidimensional}
Manelli, A.\ M.\ and D.\ R.\ Vincent (2007).
Multidimensional mechanism design: Revenue maximization and the multiple-good monopoly.
\emph{Journal of Economic Theory}, 137(1), 153--185.

\bibitem[\protect\citeauthoryear{Milgrom and Segal}{2002}]{milgrom2002envelope}
Milgrom, P.\ and I.\ Segal (2002).
Envelope theorems for arbitrary choice sets.
\emph{Econometrica}, 70(2), 583--601.

\bibitem[\protect\citeauthoryear{Mussa and Rosen}{1978}]{mussa1978monopoly}
Mussa, M.\ and S.\ Rosen (1978).
Monopoly and product quality.
\emph{Journal of Economic Theory}, 18(2), 301--317.

\bibitem[\protect\citeauthoryear{Myers and Majluf}{1984}]{myers1984capital}
Myers, S.\ C.\ and N.\ S.\ Majluf (1984).
Corporate financing and investment decisions when firms have information that investors do not have.
\emph{Journal of Financial Economics}, 13(2), 187--221.

\bibitem[\protect\citeauthoryear{Myerson}{1981}]{myerson1981optimal}
Myerson, R.\ B.\ (1981).
Optimal auction design.
\emph{Mathematics of Operations Research}, 6(1), 58--73.

\bibitem[\protect\citeauthoryear{Nachman and Noe}{1994}]{nachman1994optimal}
Nachman, D.\ C.\ and T.\ H.\ Noe (1994).
Optimal design of securities under asymmetric information.
\emph{Review of Financial Studies}, 7(1), 1--44.

\bibitem[\protect\citeauthoryear{Ng, Smith, and Smith}{1999}]{ng1999evidence}
Ng, C.\ K., J.\ K.\ Smith, and R.\ L.\ Smith (1999).
Evidence on the determinants of credit terms used in interfirm trade.
\emph{Journal of Finance}, 54(3), 1109--1129.

\bibitem[\protect\citeauthoryear{Petersen and Rajan}{1997}]{petersen1997trade}
Petersen, M.\ A.\ and R.\ G.\ Rajan (1997).
Trade credit: Theories and evidence.
\emph{Review of Financial Studies}, 10(3), 661--691.

\bibitem[\protect\citeauthoryear{Scharfstein and Stein}{2000}]{scharfstein1998dark}
Scharfstein, D.\ S.\ and J.\ C.\ Stein (2000).
The dark side of internal capital markets: Divisional rent-seeking and inefficient investment.
\emph{Journal of Finance}, 55(6), 2537--2564.

\bibitem[\protect\citeauthoryear{Stein}{1997}]{stein1997internal}
Stein, J.\ C.\ (1997).
Internal capital markets and the competition for corporate resources.
\emph{Journal of Finance}, 52(1), 111--133.


\bibitem[\protect\citeauthoryear{Stiglitz and Weiss}{1981}]{stiglitz1981credit}
Stiglitz, J.\ E.\ and A.\ Weiss (1981).
Credit rationing in markets with imperfect information.
\emph{American Economic Review}, 71(3), 393--410.

\end{thebibliography}
\end{document}